\documentclass[12pt]{article}

\usepackage{verbatim,color,amsmath,amsthm,amssymb,latexsym}
\usepackage[usenames,dvipsnames]{xcolor}
\usepackage{fancyhdr}
\usepackage[authoryear, sort]{natbib}
\usepackage{graphicx, epstopdf, subcaption}
\usepackage{booktabs}

\def\Section{\hbox{Section}}
%\usepackage[normalem]{ulem}
%Personal specification
\newcommand{\mbf}{\mathbf}

\newcommand{\bbR}{\mathbb{R}}

\newtheorem{thm}{\underline{\bf Theorem}}

\newtheorem{remark}{\underline{\bf Remark}}

\newtheorem{condition}{Condition}
\newenvironment{conbis}[1]
{\renewcommand{\thecondition}{\ref{#1}$'$}%
 \addtocounter{condition}{-1}%
 \begin{condition}}
 {\end{condition}}
\newtheorem{lemma}{\underline{\bf Lemma}}

\setlength{\textheight}{9in}
\setlength{\textwidth}{6.5in}
\setlength{\topmargin}{-36pt}
\setlength{\oddsidemargin}{0pt}
\setlength{\evensidemargin}{0pt}
\tolerance=500

\def\var{\hbox{var}}
\def\Normal{\hbox{Normal}}
\def\Supp{{\bf Supplementary Material}}
\def\bse{\begin{eqnarray*}}
\def\ese{\end{eqnarray*}}
\def\be{\begin{eqnarray}}
\def\ee{\end{eqnarray}}
\def\bq{\begin{equation}}
\def\eq{\end{equation}}
\def\bse{\begin{eqnarray*}}
\def\ese{\end{eqnarray*}}

\def\th{^{th}}
\def\wh{\widehat}
\def\wt{\widetilde}

\def\boxit#1{\vbox{\hrule\hbox{\vrule\kern6pt  \vbox{\kern6pt#1\kern6pt}\kern6pt\vrule}\hrule}}

\begin{document}

\thispagestyle{empty}
\baselineskip=28pt
{\LARGE{\bf  Nonparametric Bayesian Deconvolution of a Symmetric Unimodal Density}}

\baselineskip=12pt

\vskip 2mm
\begin{center}
Ya Su \\
Department of Statistics, University of Kentucky, Lexington, KY
40536-0082, U.S.A., ya.su@uky.edu\\
\hskip 5mm\\
Anirban Bhattacharya\\
Department of Statistics, Texas A\&M University, College Station, TX
77843-3143, U.S.A., anirbanb@stat.tamu.edu\\
\hskip 5mm \\
Yan Zhang and Nilanjan Chatterjee\\
Departments of Biostatistics and Oncology, Johns Hopkins University, Baltimore, Maryland 21205, U.S.A., yzhan284@jhu.edu and nchatte2@jhu.edu\\
\hskip 5mm\\
Raymond J. Carroll\\
Department of Statistics, Texas A\&M University, College Station, TX 77843-3143, U.S.A. and School of Mathematical and Physical Sciences, University of
Technology Sydney, Broadway NSW 2007, Australia, carroll@stat.tamu.edu\\
\end{center}

\begin{center}
{\Large{\bf Abstract}}
\end{center}
\baselineskip=12pt

We consider nonparametric measurement error density deconvolution
subject to heteroscedastic measurement errors as well as symmetry
about zero and shape constraints, in particular unimodality. The
problem is motivated by applications where the observed data are
estimated effect sizes from  regressions on multiple  factors, where
the target is the distribution of the true effect sizes. We exploit
the fact that any symmetric and unimodal density can be expressed as a
mixture of symmetric uniform densities, and model the mixing density
in a new way using a Dirichlet process location-mixture of Gamma
distributions. We do the computations within a Bayesian context,
describe a simple scalable implementation that is linear in the sample
size, and show that the estimate of the unknown target density is
consistent. Within our application context of regression effect sizes,
the target density is likely to have a large probability near zero
(the near null effects) coupled with a heavy-tailed distribution (the
actual effects). Simulations show that unlike standard deconvolution
methods, our Constrained Bayesian Deconvolution method does a much better job of
reconstruction of the target density. Applications to a genome-wise association study (GWAS) and microarray data
reveal similar results.

\baselineskip=12pt
\par\vfill\noindent
\underline{\bf Some Key Words}: Bayesian methods; Deconvolution; Effect sizes; Shape constraints

\par\medskip\noindent
\underline{\bf Short title}: Deconvolution

\clearpage\pagebreak\newpage
\pagenumbering{arabic}
\newlength{\gnat}
\setlength{\gnat}{22pt}
\baselineskip=\gnat

\section{Introduction}\label{sec1}

In important applied problems, one of which we discuss in Section \ref{sec:data} and the other in the \Supp, data come from a one-dimensional classical measurement
error model $W = X + U$, where the true density of $X$, $f_0(\cdot)$, is assumed to be unimodal and symmetric.
We assume the error has density
$\psi_\sigma(\cdot)$ with mean zero and a scale parameter $\sigma$, details
can be found in the main text, whence the density of $W$,
denoted $p_0$, is the convolution of $f_0(\cdot)$ and $\psi_{\sigma}(\cdot)$.
Given observations $W_1, \ldots, W_n$, our interest lies in estimating
the distribution of $X$ under the given constraints. As part of our
applications, we additionally consider the case where the scales of $U_1, \ldots, U_n$ are heteroscedastic, and denoted as $\sigma_1, \ldots, \sigma_n$.

One of our motivations arises from genome-wide association studies
(GWAS) containing a vast number of single nucleotide polymorphisms
(SNPs) along with a response for a relatively small number of
individuals, where the marginal effect sizes for the SNPs association
with the response are of interest. Let $W_i$ denote the estimated
marginal effect size of the $i$th SNP obtained from a regression of
the response on the $i$th centered and standardized SNP. It can be
shown (see Section \ref{sec:GIANT} for more details) that the true
effect size for the $i$th SNP, $X_i$, can be related to $W_i$ through
$W_i = X_i + U_i$, with the $U_i$ being approximately normally
distributed, but heteroscedastic.

If we treat the true effect sizes $X_i$ as random effects, the sampling distribution of $X_i$ has two key features. First, it makes sense that the effect sizes will be
symmetric about zero and unimodal, and not biased towards being marginally skew. This is the case in our two data applications, where the observed data have almost
zero skewness and are unimodal. Second, in practice, we expect that most of the predictors have very small association with the response, with a handful possibly being
practically significant. This suggests the density should have a sharp peak near zero while possibly being heavy-tailed; for an example of a density satisfying the two
features above, see the blue solid curve in Figure \ref{sim_fig6}. The primary challenge then lies in characterizing the density of $X$ while properly capturing its
expected shape.

There is a rich literature on density estimation in the measurement error context when the measurement error is homoscedastic
\citep{StefanskiDecon1990,CarrollHall1988,Fan1991}, among many others. \cite{delaigle2008density} introduced a deconvoluting kernel technique for the
heteroscedastic measurement error case; see also \cite{sarkar2014density} for a Bayesian approach. However, none of the existing approaches are designed to fulfill
the specific constraints in our case. As a result, we are only able to compare our proposed approach with the general nonparametric kernel deconvolution estimator
\citep{delaigle2008density} in our simulations and real data examples.

In situations without any measurement error, there is some literature on modeling symmetric and unimodal densities. \cite{west1987scale} studied scale-mixtures of
Normals which notably includes the student-$t$ and Laplace families. However, this approach is not fully flexible as there exist symmetric and unimodal densities for
which the underlying mixing functions are not distributions \citep{chu1973estimation}. There are also methods based on Bernstein polynomial basis function where
the shape constraints are preserved under constraints on the coefficients of the basis functions, e.g. \cite{turnbull2014unimodal}. The disadvantages of using Bernstein
polynomial bases are two fold. First, the distribution functions it can characterize exclude those whose support is $(-\infty, \infty)$. Second, the asymptotics of such
shape constrained estimators are not well-studied in the literature even without the measurement error.

In this article, we propose a Bayesian approach for unimodal and symmetric density estimation in the measurement error context. The proposed method is easily
adapted to a heteroscedastic error model, as we will exhibit. A key ingredient of the methodology is a representation theorem for symmetric and unimodal densities
dating back at least to \cite{feller1971introduction}, where it was proved that any unimodal and symmetric density function can be represented by a mixture of uniform
distributions. \cite{brunner1989bayes} adopted this approach and modeled the mixing distribution via a Dirichlet process, which does not yield smooth densities
owing to the almost sure  discreteness of the Dirichlet process. To yield a smooth density, we model the mixing distribution using a Dirichlet mixture of Gamma
distributions, which has large support on the space of smooth densities, and is amenable to scalable posterior computation via an efficient Gibbs sampler we develop
here.

We provide large-sample theoretical support to the proposed
methodology by showing posterior consistency for the observed density
and the latent density. For the observed density of $W$, we borrow
results from recent work \citep{bochkina2017adaptive} where posterior
convergence rates for estimating a density on the positive half-line
were established using Dirichlet location-mixtures of Gamma
distributions. Their setup nicely
serves as a component in our hierarchical model for the density of $W$. While appreciating the value of their theory, the difficulty due to the hierarchical model we
develop and the intrinsic deconvolution problem has not been discussed before and is highlighted in our current work.

We derive a posterior consistency result for the unobserved density of $X$ under a Wasserstein metric. The Wasserstein metric has its origins in the theory of optimal
transportation \citep{villani2008optimal} and has recently been found suitable for studying convergence of mixing measures in deconvolution problems
\citep{nguyen2013convergence, gao2016posterior,scricciolo2018bayes} . These papers consider a Dirichlet process mixture type of model where the mixing
distribution is discrete and needs to satisfy some conditions, see Section \ref{sec4} for a discussion on their conditions . A key ingredient of our theory is the
development of a new inversion inequality which relates the convergence of the observed/mixture density to that of the unobserved/mixing density. The idea of using
inversion inequalities in the Bayes literature is fairly new, with only a few instances of such results, e.g.,  \cite{nguyen2013convergence}, \cite{scricciolo2018bayes}.
However, existing inequalities can not be applied directly to our case, necessitating a new inversion inequality to fit our needs.

Section \ref{sec:model} gives the Bayesian model leading to our methodology, while Section \ref{sec:theory} states asymptotic results. Section \ref{sec:algorithm}
describes our algorithm and Section \ref{sec:simulation} presents some of the many simulations we have conducted. Section \ref{sec:data} presents an analysis of a
genome-wide association study, and shows that our methodology is able to capture the mixture distribution we expect to see in the data as described above. Section
\ref{discuss} gives concluding remarks. \Supp\ includes additional data analysis of a microarray experiment.

\section{Model Specification}\label{sec:model}

Throughout our paper, $\psi(\cdot)$ denotes a symmetric
unimodal density on the real line which specifies our family of error
distributions. We further denote by $\psi_\sigma(\cdot)$ the corresponding
scale family: $\psi_\sigma(t) = (1/\sigma) \, \psi(t/\sigma)$
for $\sigma > 0$. Finally, $\Psi_{\mu, \sigma}(\cdot)$ denotes the distribution function with density (in $t$) given by
$(1/\sigma) \psi \{(t-\mu)/\sigma$\}.

Since $W = X + U$, the true density $p_0(\cdot)$ of $W$ has the form
\begin{equation} \label{eq:2}
p_0(w) = \int \psi_\sigma(w - x) f_0(x) dx,
\end{equation}
where the true density of $X$, $f_0(\cdot)$, has a unimodal and
symmetric shape.  If $f_0$ is continuous with finite derivative $f_0'(x)$ for all $x$, then it is well-known \citep{feller1971introduction}  that there exists a density
$g_0(\cdot): \mathbb{R}^+ \rightarrow \mathbb{R}^+$, where $\mathbb{R}^+ = [0, \infty)$, such that
\begin{eqnarray} \label{eq:11}
f_0(x) = \int (2 \theta)^{-1} I_{(-\theta \leq x \leq \theta)} g_0(\theta) d\theta.
\end{eqnarray}
In other words, any symmetric and unimodal density is a mixture of
symmetric uniforms. Given our motivating application, it is natural to
assume in addition that $f_0(\cdot)$ is finite at zero, which ensures the
finiteness of $p_0(\cdot)$. The finiteness of $f_0(0)$ can in turn be
ensured by assuming that $g_0(0) = 0$. Our parameter space for $g_0(\cdot)$ thus consists of all densities on the positive half-line $\mathbb{R}^+$ satisfying $g_0(0)
= 0$.

In the deconvolution literature, two types of error distributions,
ordinary-smooth and super-smooth, are commonly studied. By definition,
a density is ordinary-smooth or super-smooth if the tail of its
Fourier transform decays to zero at polynomial rate or exponential
rate, respectively.
% $\wh{f}(\cdot)$ satisfies $d_0
% |t|^{-\beta} \leq \wh{f}(t) \leq d_1 |t|^{-\beta}$ and
% super-smooth of order $\beta$ if $d_0 |t|^{\beta_0}
% \exp(-|t|^\beta/\gamma) \leq \wh{f}(t) \leq d_1 |t|^{\beta_1}
% \exp(-|t|^\beta/\gamma)$.
For our theoretical analysis and simulation studies, we pick one
distribution from each class, namely the Normal and Laplace
distributions. When presenting the theory we illustrate the Normal
error case first, while the results for the Laplace error distribution
are studied in a separate section. A similar strategy has been taken
with the proofs. Furthermore in a more complicated situation when only
the type (ordinary-smooth or super-smooth) is known, we point out the
possibility of modeling the error distribution using mixtures of
Normal/Laplace distributions prior; see \cite{SarkarMultivariate} for
an instance of the former.

We build our Bayesian model in a hierarchical structure as the true densities, that is, the candidate densities $p(\cdot)$, $f(\cdot)$ and $g(\cdot)$ are defined in a similar way as in (\ref{eq:2}) and (\ref{eq:11}). In particular, given the representation \eqref{eq:11}, the problem of modeling $f(\cdot)$ equivalently reduces to creating a flexible model for $g(\cdot)$. Recall that $g(\cdot)$ is supported on $\mathbb{R}^+$. We model $g(\cdot)$ using a Dirichlet process location-mixture of Gamma distributions, which has large support \citep{bochkina2017adaptive} on densities supported on $\mathbb{R}^+$, and is easy to implement in a Bayesian framework. Specifically, we reparameterize a Gamma density by its shape $z$ and mean $\mu$ as parameter pairs. Denote $g_{z, z/\mu}$ to be a Gamma density with shape $z$ and rate $z/\mu$; we use $\mbox{Ga}(z, z/\mu)$ to denote the corresponding probability distribution. We assume a Dirichlet process prior \citep{ferguson1973bayesian} on the distribution of $\mu$ and another prior $\Pi_z$ on $z$. With these ingredients, our hierarchical Bayesian model is
\begin{eqnarray*}
  &W_i|X_i \sim \Psi(X_i, \sigma);  \quad X_i|\theta_i \sim \text{Unif}(-\theta_i, \theta_i);  \quad \theta_i|z, \mu \sim \text{Ga}(z, z/\mu); \\
  &\mu|P_\mu \sim P_\mu; \quad P_\mu|m, D \sim \text{DP}(m,D); \quad z \sim \Pi_z,
\end{eqnarray*}
where $\text{Unif}(\theta_1, \theta_2)$ is a Uniform distribution on the interval $[\theta_1,\theta_2]$ and $\text{DP}(m, D)$ denotes a Dirichlet process with
concentration parameter $m$ and base probability measure $D$. The hyperparameters are $m$ and other possible parameters for specification of $D$ and $\Pi_z$.

Using the stick-breaking representation \citep{sethuraman1994constructive} for the Dirichlet process, the model-prior for $g(\cdot)$ can be represented as
\begin{align*}
& g(x) = \int \{ \hbox{$\sum_{h=1}^{\infty}$} \nu_h\,\mbox{Ga}(x \mid z, z/\mu_h) \} \, \Pi_z(dz), \\
& \nu_h = \nu_h^\ast\, \hbox{$\prod_{\ell < h}$} (1 - \nu_\ell^\ast), \quad \nu_\ell^\ast \sim \mbox{Beta}(1, m), \quad \mu_h \sim D,
\end{align*}
where $\mbox{Ga}(x \mid z, z/\mu_h)$ denotes the $\text{Ga}(z, z/\mu)$ density evaluated at $x$. For numerical computation, we use a finite Dirichlet approximation
\citep{ishwaran2002exact} to the Dirichlet process in our simulations and data examples.

\section{Theoretical Analysis}\label{sec:theory}

\subsection{Goal and Background}\label{sec2}

In this section, we provide theoretical support to our method in terms of posterior consistency for the observed and latent densities. Specifically, we show that the posterior distribution for $p(\cdot)$ and $f(\cdot)$ increasingly concentrates on arbitrarily small neighborhoods of the true densities $p_0(\cdot)$ and $f_0(\cdot)$, respectively, as the sample size increases.

We follow the general procedures in
  \cite{ghosal2000convergence} of establishing posterior contraction
  theory and make substantial modifications to adapt to the
  hierarchical model considered in this paper. We begin with a basic
  model with no measurement error and then build the theory towards
  its measurement error counterpart, allowing multiple layers of
  mixture in the latter case. Another novelty of the current approach
  is its ability to work with $X$ having a continuous density with
  infinite support, as opposed to a discrete density with finite
  support considered in \cite{nguyen2013convergence}. 
  This is achieved by a mixture model with a mixing
      distribution modelled by a Dirichlet Process mixture of Gamma
      distributions. We obtain some preliminary
results on this layer from \cite{bochkina2017adaptive}. An inversion
inequality is derived that bridges our theory from $p(\cdot)$ to $f(\cdot)$.

We list some key definitions and notation in this section. Let $\mu$
and $\nu$ be two probability measures defined on a metric space with
metric $d$. If $\mu$ and $\nu$ both have finite $p$th moments, the
$p$th Wasserstein distance \citep{villani2008optimal}, denoted
$W_p(\mu, \nu)$, is defined as $W_p^p(\mu, \nu) = \inf_{\phi \in
  \Gamma(\mu, \nu)} \int d^p(x, y) d\phi(x, y)$, where $\Gamma(\mu,
\nu)$ represents the collection of all joint measures with marginal
measures $\mu$ and $\nu$. We consider the metric space $\mathbb{R}$
with the Euclidean distance $d(x, y) = |x - y|$. For any two densities
$p_1(\cdot)$ and $p_2(\cdot)$ on $\mathbb{R}$, $W_p(p_1, p_2)$ is the
same as $W_p(P_1, P_2)$ where $P_1$ and $P_2$ are the cumulative
distribution functions corresponding to $p_1(\cdot)$ and $p_2(\cdot)$,
respectively. Another distance metric between two probability
densities $p_1(\cdot)$ and $p_2(\cdot)$ is the Hellinger distance,
$h(p_1, p_2) = (1/2) \int \{p_1(x)^{1/2} - p_2(x)^{1/2}\}^2 dx$. The
Hellinger distance is widely used in the Bayesian asymptotics
literature for quantifying posterior consistency or convergence of
densities. The notation $\Pi_n (A_n |W_1, \ldots, W_n)$ stands for a posterior probability of an event $A_n$ given the observations $W_1, \ldots, W_n$.

To make notation simpler, from now on, we assign an overall symbol $P_0$ for probability or expectation under the true distribution of the corresponding variable, e.g., $P_0(W > s)$ or $P_0(X > s)$ mean the probability that $W > s$ or $X > s$ under the true $p_0$ or $f_0$ respectively. Also, $a_n \lesssim b_n$ $(a_n \gtrsim b_n)$ means that there exists a positive constant $C$ such that $a_n/b_n \leq C$ $(a_n/b_n \geq C)$ for all $n$. In addition, $a_n \asymp b_n$ if and only if $a_n \lesssim b_n$ and $a_n \gtrsim b_n$, $a \vee b=\max(a, b)$, $a \wedge b=\min(a, b)$. Finally, $\lceil a \rceil$ denotes the smallest integer that is greater than or equal to $a$.

\subsection{Posterior Consistency for the Observed Density}\label{sec3}

This section gives a theorem on the posterior convergence rate for $p(\cdot)$. Our conditions are mainly at the layer of $g(\cdot)$, which is modelled as a Dirichlet
location-mixture of Gamma distributions. We will give the conditions followed by some interpretations on these conditions and then state the theorem.

\begin{condition}\label{con1}{\rm
We adopt a function space for $g_0(\cdot)$, $\mathcal{M}\{L(\cdot), \varpi, C_0, C_1, e, \Delta\}$, which contains a set of  density functions $q: \bbR^+ \rightarrow
[0,\infty)$ which satisfy that there exists $L(\cdot) > 0, \varpi \geq
0, C_0 > 0 , C_1 > 0, e > 0$ and $\Delta$ that for all $\theta \in
\bbR^+$, $\phi > - \theta$ and $|\phi| \leq \Delta$,
 \begin{eqnarray*}
&&   |q(\theta + \phi) - q(\theta)| \leq L(\theta) |\phi| (1 + |\phi|^\varpi); \enspace q(\theta) \leq C_0; \\
&&    \int_0^\infty \{(1 + \theta^{\varpi})\theta L(\theta)/q(\theta)\}^2 q(\theta) d\theta \leq C_1.
  \end{eqnarray*}
  }
\end{condition}

 \begin{condition}\label{con2} {\rm For some $\rho_1 > 2$, $\int_x^\infty \theta^4 g_0(\theta) d\theta \leq C(1 + x)^{-\rho_1+2}$.}
  \end{condition}

\begin{condition}\label{con4}{\rm
   (i) The prior on $P_\mu$ is $\text{DP}(m, D)$, where $D$ has a positive and continuous density $d(\cdot)$ on $\mathbb{R}^+$ satisfying that for some $0 <
   a_0^\prime \leq a_0$ and $0 < a_1^\prime \leq a_1$,
  \begin{align*}
    \exp(- x^{- a_0}) \lesssim d(x) \lesssim \exp(- x^{-a_0^\prime}) \text{ as } x \rightarrow 0; \\
    \exp(- x^{a_1}) \lesssim d(x) \lesssim \exp(- x^{a_1^\prime}), \text{ as } x \rightarrow \infty.
  \end{align*}
  (ii) The prior on $z$, $\Pi_z$, has support (1,
  $\infty$). For constants $c \geq c^\prime > 0$, $c_0 > 0$ and
  $\rho_z \geq 0$,
  \begin{align*}
   &\Pi_z([x, 2x]) \gtrsim \exp\{-c\sqrt{x}(\log x)^{\rho_z}\}, \enspace \Pi_z([x, \infty)) \lesssim \exp\{-c^\prime \sqrt{x}(\log x)^{\rho_z}\} \text{ as } x \rightarrow
   \infty, \\
   &\Pi_z((1, x]) \lesssim (x - 1)^{c_0} \text{ as } x \rightarrow 1.
  \end{align*}
  }
\end{condition}

For notational simplicity, we drop the arguments and only use
$\mathcal{M}$ to denote the space of densities in Condition
\ref{con1}. Similar function spaces with additional smoothness
assumptions have been used by \cite{bochkina2017adaptive}; we do not
make such smoothness assumptions here. The conditions are typical in
the literature on Bayesian density estimation. A
density satisfying Condition \ref{con1} and Condition \ref{con2} can
be well approximated by a mixture of Gamma distributions which
facilitates finding a KL divergence neighbourhood around the true
observed density $p_0(\cdot)$. When the error distribution is Laplace,
Condition \ref{con2} is slightly relaxed, see Condition \ref{con3}
below. Condition \ref{con4} (i) is on the base measure of Dirichlet
process and agrees with that in \cite{shen2013adaptive} except that
the support is on $(0, \infty)$ instead of $(-\infty,
\infty)$. Condition \ref{con4} mainly controls the prior thickness of
the sieve space upon which the inversion inequality in Section \ref{sec4}
can be derived. \cite{bochkina2017adaptive} showed Condition
\ref{con1} is satisfied by Weibull, folded Student-t and Frechet-type
densities.  Condition \ref{con4} (ii) holds, for example, if $\sqrt{z}$ has a Gamma
prior.

Clearly, the prior is hierarchical, Condition \ref{con1} and Condition
\ref{con2} are imposed on $g_0(\cdot)$ which is free of shape constraints
except that it is a density on the positive half line. It is generally
difficult to do the other way around, that is, impose conditions on
$f_0(\cdot)$ and identify its corresponding properties on
$g_0(\cdot)$. However, we can verify these conditions under some special
cases. When $f_0(\cdot)$ is a Normal density with mean zero and
standard deviation $\sigma$, $g_0(\theta) = C (\theta/\sigma)^2
\exp\{-(\theta/\sigma)^2\}$ which belongs to a Weibull family of
distributions. Therefore Condition \ref{con1} is met. Condition
\ref{con2} holds for arbitrarily large $\rho_1$. When $f_0(\cdot)$ is a
t-distribution with degrees of freedom $\nu$, $g_0(\theta) = C
\theta^2 (1 + \theta^2)^{-(\nu + 3)/2}$ which is an Inverse Beta
distribution. Condition \ref{con1} can be verified by similar
arguments in \cite{bochkina2017adaptive} for a folded Student-t
density since only the tail
behavior of its derivatives matters. Condition \ref{con2} holds when
$\nu > 4$ with $\rho_1 = \nu - 2$.

\begin{thm}\label{thm_p}
  {\rm Fix $\epsilon > 0$. Under Conditions \ref{con1}--\ref{con4}, for any $M > 0$ large enough,
 \begin{equation*}
 \lim_{n \rightarrow \infty} \Pi_n (\{p: h(p, p_0) > M \epsilon\}|W_1, \ldots, W_n) = 0 \text{ almost surely.}
\end{equation*}
}
\end{thm}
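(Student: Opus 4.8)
The plan is to apply the general posterior contraction framework of \cite{ghosal2000convergence}, which reduces the claim to verifying three ingredients at a rate $\epsilon_n$ that we can take to be an arbitrarily small constant $\epsilon$ (so that no vanishing rate is actually needed — consistency rather than a rate): (i) a \emph{prior mass} (Kullback--Leibler) condition, namely that for every $\delta>0$ the prior assigns positive mass to the set $\{p: \mathrm{KL}(p_0,p) < \delta,\ V(p_0,p) < \delta\}$ where $V$ is the KL-variation; (ii) construction of sieves $\mathcal{P}_n$ with $\Pi(\mathcal{P}_n^c)$ exponentially small; and (iii) a bound on the Hellinger metric entropy $\log N(\epsilon, \mathcal{P}_n, h)$ that is $o(n)$, which here can be taken $\lesssim n$ with a small enough multiplicative constant. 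Since our hierarchical prior on $p$ is exactly a Dirichlet location-mixture of Gamma densities for $g$, composed with the symmetric-uniform scale mixture (for $f$) and then convolved with the fixed error kernel $\psi_\sigma$, the strategy is to \emph{transfer} the analogous ingredients already established for Dirichlet--Gamma location mixtures in \cite{bochkina2017adaptive} up through the extra convolution layers.

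The key steps, in order, are as follows. First, I would show that KL-type neighborhoods of $p_0$ pull back to neighborhoods of $f_0$ and then of $g_0$: because convolution with $\psi_\sigma$ is a contraction for many natural discrepancies (and because $f \mapsto \int \psi_\sigma(w-x) f(x)\,dx$ is linear and mass-preserving), it suffices to approximate $f_0$ well in, say, an $L^1$ or Hellinger sense, which via the representation \eqref{eq:11} reduces to approximating $g_0$ well by a finite Gamma mixture $\sum_h \nu_h\,\mathrm{Ga}(\cdot\mid z, z/\mu_h)$. Here Conditions \ref{con1}--\ref{con2} are exactly what is needed: the Lipschitz-type modulus bound and the integrability bound on $(1+\theta^\varpi)\theta L(\theta)/q(\theta)$ give an explicit Gamma-mixture approximation of $g_0$ with controlled number of components and controlled error (this is the mechanism isolated in \cite{bochkina2017adaptive}), and Condition \ref{con2} controls the tails so that the approximation is good in a global rather than merely local sense. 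Combining the approximation of $g_0$ with the positivity of the Dirichlet base measure $d$ near $0$ and $\infty$ (Condition \ref{con4}(i)) and the prior on $z$ (Condition \ref{con4}(ii)) yields the required prior lower bound $\Pi(\mathrm{KL\ neighborhood\ of\ } p_0) > 0$. Second, I would define the sieve $\mathcal{P}_n$ in terms of the sieve on $(g, z)$ used in \cite{bochkina2017adaptive} — truncating the number of effective mixture components, the range of the $\mu_h$'s, and the magnitude of $z$ — and push it forward through the uniform-scale mixture and the convolution; Condition \ref{con4} is precisely calibrated so that the complement of this sieve has exponentially small prior probability. Third, for the entropy bound I would use that $h(p_1,p_2) \le h(f_1,f_2)$ under convolution (data-processing / convexity of Hellinger affinity), and that the symmetric-uniform mixture map $g \mapsto f$ is Lipschitz in suitable metrics, so that a covering of the $g$-sieve induces a covering of $\mathcal{P}_n$ of comparable log-cardinality; this again is available from the entropy estimates in \cite{bochkina2017adaptive} for Gamma location mixtures. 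Feeding these three facts into Theorem 2.1 of \cite{ghosal2000convergence} delivers $\Pi_n(h(p,p_0) > M\epsilon \mid W_1,\dots,W_n) \to 0$ almost surely.

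The main obstacle is the \emph{prior mass / KL approximation} step, specifically controlling how errors propagate \emph{downward} through the hierarchy. The delicate point is that a good $L^1$ approximation of $g_0$ does not automatically give control of $\mathrm{KL}(p_0, p)$ and its variation $V(p_0,p)$: one needs the approximating $p$ to dominate $p_0$ in the tails and to stay bounded below near regions where $p_0$ is appreciable, and this must survive two successive mixing operations. The symmetric-uniform layer is helpful here (it smooths and spreads mass, and $f(0)$ finiteness is guaranteed by $g(0)=0$, mirroring the assumption $g_0(0)=0$), but one must check that the Gamma mixture chosen to approximate $g_0$ induces an $f$ whose convolution with $\psi_\sigma$ has the right envelope relative to $p_0$; this is where Condition \ref{con2}'s polynomial tail control on $\int_x^\infty \theta^4 g_0$ is used to bound quantities like $\int p_0 (\log p_0/p)^2$. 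A secondary technical nuisance is the interplay between the error type — the statement is for a generic smooth $\psi$, but the companion results are phrased for the Normal (super-smooth) case first and Laplace (ordinary-smooth) separately — so the convolution bounds (e.g.\ on how $\psi_\sigma$ affects tails and boundedness) must be stated in a form valid for both, which is why Condition \ref{con2} is later relaxed to Condition \ref{con3} in the Laplace case. Apart from that, the uniform-scale-mixture and convolution maps are benign (linear, contractive in Hellinger, mass-preserving), so once the KL prior-mass bound is secured the sieve and entropy steps are essentially inherited from \cite{bochkina2017adaptive}.
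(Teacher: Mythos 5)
Your proposal follows essentially the same route as the paper: the Ghosal--Ghosh--van der Vaart framework with prior concentration, sieve, and entropy conditions, all transferred from the $g$-layer (using the Dirichlet--Gamma approximation, sieve, and entropy machinery of \cite{bochkina2017adaptive}) up through the uniform-mixture and convolution layers via the monotonicity of Hellinger affinity, with Condition \ref{con2}'s tail bound doing exactly the work you identify of converting Hellinger closeness into control of $P_0\log(p_0/p)$ and $P_0\{\log(p_0/p)\}^2$ through lower bounds on $p/p_0$ in the tails. The paper additionally exhibits an explicit rate $\epsilon_n\to 0$ rather than a fixed $\epsilon$, but that only strengthens the stated consistency claim and does not change the argument's structure.
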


\begin{proof}{\rm
To prove Theorem \ref{thm_p}, we shall exhibit a sequence $\epsilon_n \to 0$ such that
 \begin{equation*}
\lim_{n \rightarrow \infty} \Pi_n (\{p: h(p, p_0) > M \epsilon_n\}|W_1, \ldots, W_n) = 0 \text{ almost surely.}
\end{equation*}
To prove the assertion in the above display, it follows from \cite{ghosal2000convergence} that
the desired result holds as long as there exists a sequence of compact subsets $\{\mathcal{F}_n\}$ in the space where $p(\cdot)$ resides and a sequence $\{
\wt{\epsilon}_n \}$ with $\wt{\epsilon}_n \leq \epsilon_n$ and $\lim_{n \rightarrow \infty} n \wt{\epsilon}_n^2 = \infty$ such that
  \begin{eqnarray}
    \label{eq:6}  \log N(\epsilon_n, \mathcal{F}_n, h) &\leq& c_1 n \epsilon_n^2 ; \\
    \label{eq:7}  \Pi(\mathcal{F}_n^c) &\leq& c_3 \exp\{-(c_2 + 4)n \wt{\epsilon}_n^2\} ; \\
    \label{eq:8}  \Pi\{p: P_0 \log (p_0/p) &\leq& A \wt{\epsilon}_n^2, P_0\{\log (p_0/p)\}^2 \leq A \wt{\epsilon}_n^2 \log n \} \geq \exp(-c_2 n \wt{\epsilon}_n^2),
  \end{eqnarray} for some positive constants $c_1, c_2, c_3, A$, and $N(\epsilon_n, \mathcal{F}_n, h)$ is the $\epsilon_n$-covering number of $\mathcal{F}_n$ relative
  to the Hellinger distance. Equations (\ref{eq:6}) and (\ref{eq:7}) are entropy and prior mass conditions on the sieve space and (\ref{eq:8}) is referred to as the prior
  concentration condition. Equation (\ref{eq:8}) is a slight variation compared to the original prior concentration condition in \cite{ghosal2000convergence}; see
  \cite{bochkina2017adaptive}.

  In Appendix \ref{sec:app:6}, the details for deriving equations (\ref{eq:6}), (\ref{eq:7}) and (\ref{eq:8}) are provided for $\wt{\epsilon}_n^2 = \epsilon_n^2 \asymp n^{-2/(2B + 3)}(\log n)^{(2B + 2)/(2B + 3)}$ and an appropriate sieve space $\mathcal{F}_n$. The constant $B$ in $\epsilon_n$ is determined by the constants $\rho_1$, $a_0$ and $a_1$ in Condition \ref{con2} and \ref{con4} (i).  }
\end{proof}

\subsection{Posterior Consistency for the Latent Density}\label{sec4}

We now establish that the posterior distribution for the latent density $f(\cdot)$ increasingly concentrates around the true density $f_0(\cdot)$. To show such a result, we build an inversion inequality which harnesses the consistency of the observed density $p(\cdot)$ derived above to prove consistency for the latent density $f(\cdot)$.  A few previous instances of inversion inequalities can be found in the recent literature.  Theorem 2 of \cite{nguyen2013convergence} relates the Wasserstein distance between the mixing distributions with the total variation of the mixture density, but it requires the mixing distribution to reside on a finite support or have bounded $s > 2$ moment. \cite{scricciolo2018bayes} makes use of an inversion inequality to establish the convergence rate of the Bayes estimator for the mixing density; one of the key requirements on the mixing distribution is that it has a bounded moment generating function on some interval containing $[-1,1]$. However, there does not exist an inversion inequality that can be directly applied to our problem, where the mixing density $f(\cdot)$ has unbounded support and there is no way to bound the moment generating function on any interval containing $[-1,1]$ for all $f(\cdot)$ in a sieve space. In Appendix \ref{sec:app:2}, we prove the next Lemma that relates the convergence of $f(\cdot)$ to $f_0(\cdot)$ under the Wasserstein metric, $W_2(f, f_0)$, and the $L_1$ distance between $p(\cdot)$ and $p_0(\cdot)$.

\begin{lemma}\label{inv_ine}
  {\rm
  On the sieve $\mathcal{F}_n$ in Theorem \ref{thm_p}, when $\rho_1$
  and $a_1^\prime$ (see Condition \ref{con2} and Condition \ref{con4}) are large enough,
  \begin{equation*}
    W_2^2(f, f_0) \lesssim \{-\log(\|p - p_0\|_1)\}^{-1}.
  \end{equation*}
  }
\end{lemma}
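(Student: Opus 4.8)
The plan is to prove the inequality for the Normal (super-smooth) error; the Laplace case is analogous and in fact easier, since there the inversion is only polynomially ill-posed. Write $\delta = \|p - p_0\|_1$; we may assume $\delta$ is small, as on $\mathcal{F}_n$ the second moments of $f$ and $f_0$ are uniformly bounded, so $W_2^2(f, f_0)$ is bounded by a fixed constant. The engine of the argument is mollification at a bandwidth $h = h(\delta) \to 0$. I would fix once and for all a symmetric probability density $\phi$ whose characteristic function $\widehat\phi$ is $C^\infty$ and supported in $[-1, 1]$ --- such a $\phi$ exists and can be taken to have finite moments of every order, for instance $\phi \propto |\psi|^2$ with $\psi$ Schwartz and $\widehat\psi$ smooth, compactly supported in $[-1/2,1/2]$. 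Put $\phi_h(x) = h^{-1}\phi(x/h)$, so $\widehat{\phi_h}(t) = \widehat\phi(ht)$ is supported in $[-1/h, 1/h]$ and $\int y^2\phi_h(y)\,dy \asymp h^2$.

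The first reduction is a three-term split. By the triangle inequality for $W_2$ and the elementary estimate $W_2(q, q*\phi_h) \le (\int y^2\phi_h(y)\,dy)^{1/2} \lesssim h$, valid for any probability density $q$ (couple $Y \sim q$ with $Y + Z$, $Z \sim \phi_h$ independent), we get $W_2(f, f_0) \lesssim h + W_2(f*\phi_h, f_0*\phi_h)$. For the smoothed pair I would use the coupling bound: for any $R > 0$ and probability measures $\mu, \nu$, the maximal coupling together with splitting the integrand at $|x| = R$ gives
\[
W_2^2(\mu, \nu) \;\lesssim\; R^2\,\|\mu - \nu\|_{\mathrm{TV}} \;+\; \int_{|x| > R} x^2\,(d\mu + d\nu).
\]
With $\mu = f*\phi_h$, $\nu = f_0*\phi_h$, the tail integral is controlled by the tails of $f$ (uniformly over $\mathcal{F}_n$) and of $f_0$ together with the rapid decay of $\phi$; using Condition \ref{con2} for $f_0$ and the sieve constraints coming from Condition \ref{con4}(i) for $f$, it is bounded by $R^{-\kappa}$ with $\kappa$ as large as we wish provided $\rho_1$ and $a_1'$ are large enough. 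The total-variation term equals $\tfrac12\|(f - f_0)*\phi_h\|_1 \le R\,\|(f-f_0)*\phi_h\|_\infty + \tfrac12\int_{|x|>R}(f + f_0)*\phi_h$, the last piece again a tail term.

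What remains --- and this is the crux --- is to bound $\|(f - f_0)*\phi_h\|_\infty$. Since $\widehat{(f-f_0)*\phi_h}$ is supported in $[-1/h,1/h]$ and $\widehat f - \widehat{f_0} = (\widehat p - \widehat{p_0})/\widehat\psi_\sigma$, Fourier inversion gives
\[
\|(f - f_0)*\phi_h\|_\infty \;\le\; \frac{1}{2\pi}\int_{-1/h}^{1/h}\frac{|\widehat p(t) - \widehat{p_0}(t)|}{|\widehat\psi_\sigma(t)|}\,|\widehat\phi(ht)|\,dt \;\le\; \frac{\|p - p_0\|_1}{2\pi}\int_{-1/h}^{1/h}\frac{|\widehat\phi(ht)|}{|\widehat\psi_\sigma(t)|}\,dt.
\]
For the Normal error $|\widehat\psi_\sigma(t)|^{-1} = e^{\sigma^2 t^2/2} \le e^{\sigma^2/(2h^2)}$ on $[-1/h, 1/h]$, and $\int_{-1/h}^{1/h}|\widehat\phi(ht)|\,dt = h^{-1}\int_{-1}^{1}|\widehat\phi(s)|\,ds \lesssim h^{-1}$, so $\|(f-f_0)*\phi_h\|_\infty \lesssim \delta\,h^{-1}e^{\sigma^2/(2h^2)}$. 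Because $\widehat\phi$ is compactly supported there is no low-frequency $1/|t|$ singularity to fight, which is what makes this route cleaner than a Berry--Esseen-type smoothing inequality on the distribution functions. Assembling the pieces (and absorbing polynomial factors of $R$ into $\kappa$),
\[
W_2^2(f, f_0) \;\lesssim\; h^2 \;+\; R^{3}\,\delta\, h^{-1} e^{\sigma^2/(2h^2)} \;+\; R^{-\kappa}.
\]
Choosing $h^2 = \sigma^2/\{-\log\delta\}$ and $R = \{-\log\delta\}^{1/\kappa}$ makes $e^{\sigma^2/(2h^2)} = \delta^{-1/2}$, so the middle term is $\lesssim \{-\log\delta\}^{3/\kappa + 1/2}\,\delta^{1/2}$, which is negligible, while the first and third terms are both of order $\{-\log\delta\}^{-1}$. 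This yields $W_2^2(f, f_0) \lesssim \{-\log(\|p - p_0\|_1)\}^{-1}$.

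The main obstacle is the interplay of truncation and amplification in the last two displays. The factor $e^{\sigma^2/(2h^2)}$ forces $h$ to be only of order $\{-\log\delta\}^{-1/2}$, hence the truncation level $R$ can be taken only polylogarithmic in $1/\delta$; one must then check that, \emph{uniformly over the sieve} $\mathcal{F}_n$ constructed in the proof of Theorem \ref{thm_p}, the mixing densities $f$ have tails light enough --- this is precisely where $a_1'$ (through the base-measure tail in Condition \ref{con4}(i) and the explicit form of $\mathcal{F}_n$) and $\rho_1$ (through Condition \ref{con2}, for $f_0$) must be taken large --- so that $\int_{|x|>R}x^2(f + f_0) = o(\{-\log\delta\}^{-1})$. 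Making the tail exponent $\kappa$ explicit in terms of $\rho_1$ and $a_1'$ and verifying that it clears the threshold implicit above is the technical heart of the proof; the Fourier estimate itself, once the mollifier is chosen, is routine.
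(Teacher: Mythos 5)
Your skeleton (mollify at bandwidth $h$, triangle inequality, a truncated coupling bound for the smoothed pair, and the Fourier-inversion estimate $\|(f-f_0)*\phi_h\|_\infty \lesssim \|p-p_0\|_1\, h^{-1}e^{\sigma^2/(2h^2)}$) matches the paper's proof in all essential respects, and the steps you call routine are indeed fine. The gap sits exactly where you place the ``technical heart,'' and your proposed resolution there does not work. You truncate at $R=\{-\log \delta\}^{1/\kappa}$, a level polylogarithmic in $1/\delta$, and claim $\int_{|x|>R}x^2\{f(x)+f_0(x)\}\,dx\lesssim R^{-\kappa}$ uniformly over the sieve, with $\kappa$ arbitrarily large once $\rho_1$ and $a_1'$ are large. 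That is false: $\mathcal{F}_n$ is induced by $Q_n$, whose elements are Gamma location mixtures with mean parameters $\mu_j$ allowed to range up to $b_n=C\{n^{(2B+1)/\eta}(\log n)^{(2B+2)/\eta}\}^{1/a_1'}$, a positive power of $n$. Taking $a_1'$ large only shrinks that exponent; it never makes $b_n$ bounded. A sieve element $g$ concentrated near $\mu_j\asymp b_n$ yields an $f$ essentially uniform on $(-b_n,b_n)$, for which $\int_{|x|>R}x^2f(x)\,dx\asymp b_n^2$ whenever $R\le b_n/2$; since $b_n$ is polynomial in $n$ while your $R$ is polylogarithmic, the third term of your assembled bound is polynomially large rather than $O(\{-\log\delta\}^{-1})$, and the conclusion fails. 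Note also that Condition \ref{con4}(i) constrains the base measure (hence the prior mass of the sieve's complement); it imposes no uniform tail bound at polylogarithmic scales on densities inside the sieve.

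The repair is to let the truncation level grow polynomially in $1/\|p-p_0\|_1$ and to pay for the heavier truncation with higher moments, which is precisely what the paper does: it bounds the tail beyond $M$ by $M^{-(s-2)}$ times $s$-th moments, which on the sieve are $\lesssim b_n^s$, and then (Lemma \ref{M_exist}) takes $M=\|p-p_0\|_1^{-2/5+\nu_1}$ and uses $\|p-p_0\|_1=O_p(\epsilon_n)$ together with ``$\rho_1$, $a_1'$ large'' to get $M^{-(s-2)}b_n^s=o_p(\|p-p_0\|_1^{\nu_2})$. With $R$ polynomial in $1/\delta$ your middle term still vanishes after re-tuning, since $R^{O(1)}\delta$ remains a positive power of $\delta$ and $h^2\asymp\{-\log\delta\}^{-1}$ can absorb the Gaussian amplification. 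Your maximal-coupling inequality and the sup-norm Fourier bound are acceptable substitutes for the paper's use of Villani's Theorem 6.15 and the $L_2$ bound via $\|g_\delta\|_2$ (a cosmetic difference); but without the moment/Markov trade-off and a truncation level tied polynomially to $\|p-p_0\|_1$, the claimed uniformity over $\mathcal{F}_n$ cannot be achieved.
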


\begin{remark}\label{rem4}
{\rm For any two densities $p_1$, $p_2$, $\|p_1 - p_2\|_1/2 \leq h(p_1, p_2) \leq \|p_1 - p_2\|_1^{1/2}$. The conclusion of Lemma \ref{inv_ine} can be equivalently
stated as $W_2^2(f, f_0) \lesssim [-\log\{h(p, p_0)\}]^{-1}$.}
\end{remark}

\begin{thm}\label{thm_f}
  {\rm
  Fix $\varepsilon > 0$. Under the Conditions in Theorem \ref{thm_p} and Lemma \ref{inv_ine}, for any $M > 0$ large enough,
 $\lim_{n \rightarrow \infty} \Pi_n [{f: W_2(f, f_0) > M \varepsilon}|W_1, \ldots, W_n] = 0 \text{ almost surely}.$
}
\end{thm}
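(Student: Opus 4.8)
The plan is to combine the posterior consistency for the observed density $p(\cdot)$ from Theorem \ref{thm_p} with the inversion inequality of Lemma \ref{inv_ine} in a routine but careful way. First I would recall that Theorem \ref{thm_p} gives, with posterior probability tending to one almost surely, both that $h(p, p_0) \le M\epsilon_n$ and that $p$ lies in the sieve $\mathcal{F}_n$ (the latter follows because $\Pi(\mathcal{F}_n^c) \to 0$ and in fact the proof of Theorem \ref{thm_p} shows the posterior mass of $\mathcal{F}_n^c$ decays exponentially). Since $W_2(f, f_0) > M\varepsilon$ is a \emph{fixed} (non-shrinking) neighborhood, it suffices to show that on the event $\{h(p, p_0) \le M\epsilon_n\} \cap \{p \in \mathcal{F}_n\}$ one has $W_2(f, f_0) \to 0$, and then the posterior probability of the complement of a fixed ball is eventually bounded by the posterior probability of an event going to zero.

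The key chain of inequalities I would write is as follows. On the sieve $\mathcal{F}_n$, Lemma \ref{inv_ine} (together with Remark \ref{rem4}, which converts the $L_1$ bound into a Hellinger bound via $\|p - p_0\|_1/2 \le h(p,p_0) \le \|p-p_0\|_1^{1/2}$) gives
\begin{equation*}
W_2^2(f, f_0) \lesssim \bigl[-\log\{h(p, p_0)\}\bigr]^{-1}.
\end{equation*}
On the posterior-typical event $h(p, p_0) \le M\epsilon_n$ with $\epsilon_n \to 0$, the right-hand side is bounded by $\{-\log(M\epsilon_n)\}^{-1}$, which tends to $0$ as $n \to \infty$ since $\epsilon_n = o(1)$ (here $\epsilon_n^2 \asymp n^{-2/(2B+3)}(\log n)^{(2B+2)/(2B+3)}$ from the proof of Theorem \ref{thm_p}, so $-\log(M\epsilon_n) \asymp \log n \to \infty$). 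Thus there is a deterministic sequence $\delta_n \to 0$, namely $\delta_n \asymp (\log n)^{-1/2}$, such that the event $\{h(p,p_0) \le M\epsilon_n,\, p \in \mathcal{F}_n\}$ is contained in $\{W_2(f,f_0) \le \delta_n\}$.

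Putting the pieces together: for any fixed $\varepsilon > 0$, choose $n$ large enough that $\delta_n < M\varepsilon$. Then
\begin{equation*}
\Pi_n\{f : W_2(f, f_0) > M\varepsilon \mid W_1, \ldots, W_n\} \le \Pi_n\{W_2(f,f_0) > \delta_n \mid W_1, \ldots, W_n\} \le \Pi_n\bigl(\{h(p,p_0) > M\epsilon_n\} \cup \mathcal{F}_n^c \mid W_1, \ldots, W_n\bigr),
\end{equation*}
and the last quantity tends to $0$ almost surely by Theorem \ref{thm_p} and the exponential decay of $\Pi(\mathcal{F}_n^c)$ established in its proof (via the general posterior consistency machinery of \cite{ghosal2000convergence}, the posterior mass outside the sieve is negligible). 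This completes the argument.

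The only genuine subtlety — and the step I would be most careful about — is the requirement in Lemma \ref{inv_ine} that $\rho_1$ and $a_1^\prime$ be ``large enough'' and that the bound holds only \emph{on the sieve} $\mathcal{F}_n$; so I must make sure that the sieve used in Theorem \ref{thm_p} is exactly (or can be taken to be) the one for which Lemma \ref{inv_ine} is proved, and that the hypotheses of the two results are compatible — which is why the statement of Theorem \ref{thm_f} explicitly assumes ``the Conditions in Theorem \ref{thm_p} and Lemma \ref{inv_ine}.'' Everything else is bookkeeping: the conversion between $L_1$, Hellinger, and $W_2$, and the observation that a shrinking $W_2$-neighborhood is eventually contained in any fixed one.
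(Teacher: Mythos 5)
Your proposal is correct and follows exactly the route the paper intends: its proof of Theorem \ref{thm_f} is the one-line statement that it ``follows from Theorem \ref{thm_p} and Lemma \ref{inv_ine},'' and you have simply supplied the routine details of that same argument (posterior mass concentrates on the sieve with $h(p,p_0)\le M\epsilon_n$, the inversion inequality then forces $W_2(f,f_0)\lesssim\{-\log(M\epsilon_n)\}^{-1/2}\to 0$, so a fixed $W_2$-ball eventually contains the shrinking one). Your closing caveat about the sieve in Lemma \ref{inv_ine} matching the one in Theorem \ref{thm_p} is exactly the right point of care, and it is satisfied here by construction.
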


\begin{proof}
Theorem \ref{thm_f} follows from Theorem \ref{thm_p} and Lemma \ref{inv_ine}.
\end{proof}

\begin{remark}
{\rm Theorem \ref{thm_f} states that the posterior consistency of
  $f(\cdot)$ in the $W_2$ metric as a result of the presence of the
  $W_2$ metric in the inversion inequality in Lemma \ref{inv_ine}. In
  fact, the proof of Lemma \ref{inv_ine} can be extended to $W_k$ for
  any $k \geq 1$, which in turn would imply posterior consistency in
  any $W_k$ metric. To the best of our knowledge, technical difficulties
  exist in order to derive Lemma \ref{inv_ine} for the $L_1$ metric between
  $f(\cdot)$ and $f_0(\cdot)$. The difficulties lie in finding a {\it uniform} upper bound for the $L_1$ distance between functions in the sieve space and its convolution with the molifier. Whereas if Wasserstein distance (of order $2$) is in use, such an upper bound is simply the second moment of the molifier. This is probably the
  hurdle if one wants to
  establish posterior contraction theory in $L_1$ distance for the mixing density
  without restricting oneself on special cases of the mixing
  density.
}
\end{remark}

\subsection{Theory when the error has a Laplace distribution}\label{sec5}

All theorems and Lemmas in Section~\ref{sec3} and Section~\ref{sec4}
can be derived when the measurement error has a Laplace distribution
under a relaxation of Condition \ref{con2}. We state the condition and
theorems whenever changes are met.

\begin{conbis}{con2}\label{con3}{\rm
 For some $\rho_1 > 0$, $\int_x^\infty \theta^2 g_0(\theta) d\theta \leq C(1 + x)^{-\rho_1}$.
}
\end{conbis}

It can be inferred that Condition \ref{con3} holds for $\rho_1 > 2$
assuming Condition \ref{con2}. The
statement in Theorem \ref{thm_p} holds under Condition \ref{con1}, \ref{con3} and \ref{con4}.

\begin{lemma}\label{inv_ine_lap}
  {\rm
  On the sieve $\mathcal{F}_n$ in Theorem \ref{thm_p}, when
  $\rho_1$ and $a_1^{\prime}$, see Condition \ref{con3}, and \ref{con4}
  (i) are large enough, there exists a $\nu > 0$ depending on
  $\rho_1$ and $a_1^\prime$ such that
  \begin{equation*}
    W_2^2(f, f_0) \lesssim \|p - p_0\|_1^\nu.
  \end{equation*}
  }
\end{lemma}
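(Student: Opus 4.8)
The statement to prove is Lemma~\ref{inv_ine_lap}: an inversion inequality for the Laplace (super-smooth) error case, relating $W_2^2(f,f_0)$ to a \emph{polynomial} power $\|p-p_0\|_1^\nu$, in contrast to the logarithmic rate of Lemma~\ref{inv_ine} for the Normal case.

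\bigskip

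The plan is to mirror the structure of the proof of Lemma~\ref{inv_ine}, exploiting the key distinction that the Laplace density has Fourier transform $\widehat\psi_\sigma(t) = (1+\sigma^2 t^2)^{-1}$, which decays only polynomially rather than exponentially. First I would pass to characteristic functions: since $p = f * \psi_\sigma$, we have $\widehat p(t) = \widehat f(t)\,\widehat\psi_\sigma(t)$, so $\widehat f(t) - \widehat f_0(t) = \{\widehat p(t) - \widehat p_0(t)\}/\widehat\psi_\sigma(t) = (1+\sigma^2 t^2)\{\widehat p(t)-\widehat p_0(t)\}$. The crucial gain over the Normal case is that the deconvolution multiplier $1/\widehat\psi_\sigma(t)$ grows only like $t^2$, so no exponential blow-up occurs and we need not restrict the frequency range to a logarithmically small band.

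\bigskip

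Next I would connect $W_2(f,f_0)$ to a smoothed $L_2$ or $L_1$ discrepancy between the characteristic functions. The standard route is to bound the Wasserstein distance by splitting: on a ball $|t|\le T$, use $\sup_{|t|\le T}|\widehat f(t)-\widehat f_0(t)| \le (1+\sigma^2T^2)\,\|\widehat p - \widehat p_0\|_\infty \le (1+\sigma^2T^2)\,\|p-p_0\|_1$ (since $\|\widehat g\|_\infty \le \|g\|_1$); for $|t|>T$, control the tail of $\widehat f - \widehat f_0$ using the moment/tail bounds on $f$ and $f_0$ available on the sieve $\mathcal{F}_n$ (this is where Condition~\ref{con3} with large $\rho_1$ and Condition~\ref{con4}(i) with large $a_1^\prime$ enter, guaranteeing uniform control of, say, a fractional moment $\int |x|^{2+\delta} f(x)\,dx$ on the sieve, hence a uniform decay rate for $|\widehat f(t)-\widehat f_0(t)|$ and its relevant derivatives as $|t|\to\infty$). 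One then needs a lemma translating "characteristic functions close in a weighted sense" into "$W_2$ close" — e.g. via the representation of $W_2^2$ through an integral of $|\widehat F(t)-\widehat F_0(t)|^2$-type quantities, or via a direct bound on $\int (F-F_0)^2$ combined with tail control and then the identity $W_2^2(f,f_0) \le C\,(\text{something involving } \|F-F_0\|_2 \text{ and tail moments})$. Optimizing the cutoff $T$ as a power of $\|p-p_0\|_1^{-1}$ then balances the polynomial-in-$T$ error term against the polynomially-decaying tail term, yielding $W_2^2(f,f_0)\lesssim \|p-p_0\|_1^\nu$ for an explicit $\nu>0$ that degrades as the available tail exponent shrinks — hence the requirement that $\rho_1$ and $a_1^\prime$ be "large enough."

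\bigskip

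The main obstacle I anticipate is the same as in Lemma~\ref{inv_ine}: obtaining a \emph{uniform} (over $f\in\mathcal{F}_n$) bound on the high-frequency tail of $\widehat f - \widehat f_0$, or equivalently on the tail moments of the mixing densities in the sieve. The sieve $\mathcal{F}_n$ is defined at the level of $g$ (and of $p$), and one must propagate tail control from $g$ through the uniform-scale-mixture representation \eqref{eq:11} to $f$, and then verify that the resulting moment bounds are uniform across the sieve with constants independent of $n$. A secondary technical point is choosing the right intermediate metric: one wants the cleanest possible inequality of the form $W_2^2(f,f_0) \lesssim \int_{|t|\le T} |\widehat f - \widehat f_0|^2\,w(t)\,dt + (\text{tail})$, and getting the weight $w$ and the passage from the $L_2$-on-a-ball bound to $W_2$ exactly right (rather than losing extra powers) is what determines the precise exponent $\nu$. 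Because the Laplace multiplier is only $O(t^2)$, I expect these steps to go through with polynomial bookkeeping and no logarithmic penalty, which is precisely why the conclusion improves from the $\{-\log\|p-p_0\|_1\}^{-1}$ rate to a genuine power of $\|p-p_0\|_1$.
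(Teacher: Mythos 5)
Your high-level diagnosis is exactly the one the paper exploits: because $\widehat{\psi}_\sigma(t)=(1+\sigma^2t^2)^{-1}$ is only polynomially small, the deconvolution penalty is polynomial in the bandwidth rather than exponential, and optimizing a cutoff turns the logarithmic rate of Lemma \ref{inv_ine} into a power $\|p-p_0\|_1^\nu$, with $\rho_1$ and $a_1^\prime$ large enough to give uniform moment control on the sieve. However, your concrete plan has a genuine gap at its central step: the passage from closeness of characteristic functions to closeness in $W_2$. There is no representation of $W_2^2$ as an integral of $|\widehat F-\widehat F_0|^2$-type quantities, and a sup-norm bound $\sup_{|t|\le T}|\widehat f-\widehat f_0|\le(1+\sigma^2T^2)\|p-p_0\|_1$ on a band does not by itself yield a $W_2$ bound; making that bridge rigorous is precisely the content of the proof. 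The paper does it by mollification: take $K$ with compactly supported Fourier transform and finite moments, write $W_2^2(f,f_0)\lesssim W_2^2(f,f*K_\delta)+W_2^2(f_0,f_0*K_\delta)+W_2^2(f*K_\delta,f_0*K_\delta)$, bound the first two terms by $\delta^2$, and for the third use Villani's moment bound with a spatial truncation at $M$, together with the identity $f*K_\delta-f_0*K_\delta=(p-p_0)*g_\delta$ where $\widehat g_\delta=\widehat K_\delta/\widehat\psi_\sigma$, so that $\|f*K_\delta-f_0*K_\delta\|_2\le\|p-p_0\|_1\|g_\delta\|_2\lesssim\delta^{-5/2}\|p-p_0\|_1$. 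This yields $W_2^2(f,f_0)\lesssim\delta^2+M^{5/2}\delta^{-5/2}\|p-p_0\|_1+M^{-(s-2)}b_n^s$, and choosing $M=\|p-p_0\|_1^{-2/5+\nu_1}$ (Lemma \ref{M_exist}, which is where large $\rho_1,a_1^\prime$ enter) and $\delta$ a suitable power of $\|p-p_0\|_1$ gives the claimed power $\nu$.

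A second, related error in your sketch is the proposed control of the high-frequency part: moment or tail bounds on $f$ and $f_0$ control the smoothness of $\widehat f-\widehat f_0$ near the origin, not its decay as $|t|\to\infty$; decay of a characteristic function reflects smoothness of the density, which is not uniformly controlled over the sieve (indeed $f$ is a mixture of uniforms whose Fourier tails depend on $\int\theta^{-1}g(\theta)\,d\theta$, which can blow up as the sieve parameters vary with $n$). The paper never needs such decay: the compact Fourier support of $K_\delta$ removes high frequencies at the price of the $\delta^2$ bias terms, and the sieve moment bound $\int|x|^sf(x)\,dx\lesssim b_n^s$ is used instead to control the \emph{spatial} tail $|x|>M$ in the Villani bound. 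So while your intuition about why the rate improves is correct, the proposal as written hides the key construction (the mollifier $K_\delta$ and the kernel $g_\delta$ with $\|g_\delta\|_2\lesssim\delta^{-5/2}$) in an unspecified lemma and misallocates where the sieve moment conditions act.
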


Theorem \ref{thm_p} and Lemma \ref{inv_ine_lap} together imply that
Theorem \ref{thm_f} holds.

 The proofs are along the lines of their correspondence to the Normal error
case. They are in Appendix \ref{sec:app:6} with only the differences presented.

\section{Algorithm}\label{sec:algorithm}

To ease computational complexity, we follow standard practice by
approximating the Dirichlet process mixture prior with a finite
mixture of Gamma distributions with $K$ components where K is large,
with a specific Dirichlet prior on the mixture probabilities
\citep{ishwaran2002exact}. It is trivial to implement our procedure
for the infinite mixture using the slice sampler of
\cite{kalli2011slice}; however we prefer the finite Dirichlet due to
its substantially better mixing behavior for our multi-layered
hierarchical model. Our theoretical results in Section
\ref{sec:theory} were developed for the Dirichlet location-mixture of
Gamma priors on $g(\cdot)$, where only the mean parameter is mixed
over.  For flexibility, we adopt a mixture on both the shape and rate
parameters for our numerical implementation.  The conditions on the
priors for these parameters become less stringent because the number
of such parameters is finite. We select these priors among some
popular choices. Specifically, our hierarchical Bayes model for
subsequent implementations is as follows. Let $i$ denote the index for
subject, and $k$ be the index for the $k$th component, for all
$i = 1, \ldots, n$, $k = 1, \ldots, K$. Let $t > 1$ denote a fixed
constant. Then,
\begin{eqnarray*}
  &&(W_i|X_i) \sim \Psi(X_i, \sigma_i); \enspace
  (X_i|\theta_i) \sim \text{Unif}(-\theta_i, \theta_i); \enspace
%  \su2|\alpha^U, \beta^U \sim IG(\alpha^U, \beta^U), \\
  (\theta_i|Z_i = k, \alpha_k, \beta_k) \sim \text{Ga}(\alpha_k,
  \beta_k); \\
  &&P(Z_i = k|p_1, \ldots, p_K) = p_k; \enspace
  (\alpha_k|\lambda, t) \sim \text{Expon}(\lambda; t, \infty); \enspace
  (\beta_k|\Xi_1, \Xi_2) \sim \text{Ga}(\Xi_1, \Xi_2); \\
  &&(p_1 \ldots, p_K) \sim \text{Dirichlet}(m/K, \ldots, m/K),
\end{eqnarray*}
where $\text{Dirichlet}(\gamma_1, \ldots, \gamma_K)$ denotes a Dirichlet distribution with parameters $\gamma_1, \ldots, \gamma_K$, $\text{Expon}(\lambda; \ell,
u)$ denotes an exponential distribution with parameter $\lambda$ truncated at $(\ell, u)$. The paragraph above Theorem \ref{thm_p} points out the reason for
truncating $\alpha_k$. The set of hyperparameters is $(\lambda, t, \Xi_1, \Xi_2, K, m)$.

Denote the set of all variables and hyperparameters given above as
\begin{equation*}
  \mathbf{\Omega} = (\{W_i\}_{i=1}^n; \{X_i\}_{i=1}^n; \{\theta_i\}_{i=1}^n; \{Z_k\}_{k=1}^K; \{\alpha_k\}_{k=1}^K;
\{\beta_k\}_{k=1}^K; \{p_k\}_{k=1}^K; \lambda, t, \Xi_1, \Xi_2, K, m).
\end{equation*}
For ease of notation, let $\mathbf{\Omega}_{-\zeta}$ be all variables in $\mathbf{\Omega}$ but excluding $\zeta$. For $k = 1, \ldots, K$, let $r_k = \sum_i I_{(Z_i = k)}$ be the total number of individuals that fall into group $k$ and $s_k = \sum_i \theta_i I_{(Z_i = k)}$ be the summation of the $\theta_i$ from the $k$th group. To sample from the posterior distribution of $\Omega$, we use a Gibbs sampler for all parameters other than the $\alpha_k$, combined with a Metropolis-Hastings within Gibbs for the $\alpha_k$. The posterior full-conditional distributions are
\begin{eqnarray*}\label{eq:posterior}
  (X_i | \mathbf{\Omega}_{-X_i}) &\sim& \Psi(W_i, \sigma_i; -\theta_i, \theta_i);\\
  (\theta_i | \mathbf{\Omega}_{-\theta_i}) &\sim& \text{Ga}(\alpha_{Z_i} - 1,
                                                  \beta_{Z_i}; |X_i|, \infty);\\
  P(Z_i = k|\mathbf{\Omega}_{-Z_i}) &\propto& \Gamma(\alpha_k)^{-1} p_k (\beta_k \theta_i)^{\alpha_k}
                                              \exp(-\beta_k \theta_i);
  \\
  (p_1, \ldots, p_K | \mathbf{\Omega}_{- \{p_1, \ldots, p_K\}}) &\sim& \text{Dirichlet}(m/K + r_1, \ldots, m/K + r_K);\\
  (\beta_k | \mathbf{\Omega}_{-\beta_k}) &\sim& \text{Ga}(\Xi_1 + \alpha_k r_k, \Xi_2 + s_k);
  \\
  (\alpha_k | \mathbf{\Omega}_{-{\alpha_k}}) &\propto&
                                                       \Gamma(\alpha_k)^{-r_k} \exp\{-
                                                       \alpha_k(\lambda -
                                                       r_k \log \beta_k -
                                                       \scalebox{0.9}{$\sum$}_i
                                                       \log(\theta_i) I_{(Z_i = k)})\}.
\end{eqnarray*}
The symbol $\Psi(\mu, \sigma; \ell, u)$ denotes the distribution $\Psi(\mu, \sigma)$ truncated at $(\ell, u)$. Meanwhile $\text{Ga}(\alpha, \beta; \ell, u)$ corresponds to a Gamma distribution with parameters $(\alpha, \beta)$ truncated at $(\ell, u)$. Since the posterior distribution of $\alpha_k$ does not belong to a standard family, we implement a Metropolis-Hastings algorithm within the Gibbs sampler to update the $\alpha_k$. We use a Gamma proposal distribution; specifically, $\wt{\alpha}_k \sim \text{Ga}(2, 2/\alpha_k; t, \infty)$, and we accept the proposed $\wt{\alpha}_k$ or keep the original $\alpha_k$ according to the general Metropolis-Hastings rule. The proposal distribution is truncated to reflect the prior assumption on $\alpha_k$.

For all of our simulations presented, we treat the error variances $\sigma_i^2$ for all $U_i$ as known: this is reasonable in our examples, and often used in the
standard deconvolution theory. The default selected values for hyperparameters are $\lambda = 2, t = 2.5, \Xi_1 = 1, \Xi_2 = 4, K = 8, m = 20$. Sensitivity analysis
showed little sensitivity to different choices of the hyperparameters. The marginal density for $X$, our estimator, is computed as the average value of the marginal
density at each MCMC iteration. We name the method as Bayes density deconvolution with shape constraint estimator (Constrained Bayes Deconvolution).

Our Constrained Bayes Deconvolution method is easily seen to be scalable in that it is linear in the sample size, and indeed in Section \ref{sec:GIANT} it is show to
be able to handle sample size of nearly $10^6$: it is written in R with use of the package RCPP.

\section{Simulations}\label{sec:simulation}

\subsection{Overview}\label{sec5.1}

We conducted simulations for two distinctly different problems. In the
first, the target density for $X$ has a standard t-distribution with 5
degrees of freedom. In the second, related to our examples, $X$ has a
density that is a mixture of (a) t random variables with 5 degrees of
freedom; and (b) values with mean zero and very small variability. In
addition, for each of (a) and (b), we consider the case of
homoscedastic and heteroscedastic measurement errors generated from either the Normal or the Laplace distributions.

Case (b) is the important one for us given the type of data we
want to analyze, while Case (a) is simply meant to show that we
are competitive with the standard method, namely the kernel density
deconvolution estimator, in standard problems. The kernel estimator
has two versions depending on whether the measurement errors are
homoscedastic or heteroscedastic. The plug-in bandwidth, which
minimizes the asymptotic mean integrated squared error, is chosen for
this estimator in comparison with our method, see
\cite{delaigle2008density}. The R package, deconvolve,
published on Github implements the kernel density deconvolution estimator.

In each design of the simulation we generated data with sample sizes
$n = 1,000$, $5,000$, each repeated with $100$ simulated data
sets.

We compute posterior samples of the density across the MCMC steps and the estimated density is
obtained as the mean of these posterior samples. The estimated
densities and the true density are compared via the  square root of the integrated squared error (ISE),
the integrated absolute error (IAE) and the Wasserstein
distance ($W_2$) for each simulated data set. An overall summary is
given in Section \ref{sec5.4}.

\subsection{When $X$ has a t-distribution With 5 Degrees of Freedom}\label{sec5.2}

We generated observations by $W_i = X_i + U_i$, $X_i$ has a $t$ distribution
with $5$ degrees of freedom. In the case of homoscedastic error,
the variance of $U$ is equal to the variance of $X$, specifically, $\mbox{Var}(U_i) = 1.66$.
In the heteroscedastic case,  $\mbox{Var}(U_i) = (1 + X_i/4)^2$, with the variance of $X$ being $1.5$ times the mean of $\mbox{Var}(U_i)$. In all cases,
the observations are subject to substantial measurement error.
The estimated densities are displayed in Figure
\ref{sim_fig1} -- Figure \ref{sim_fig4}. The numerical comparisons for
our Constrained Bayes Deconvolution method and the Kernel method are given in Table
\ref{tab2} -- Table \ref{tab4}.

\subsection{When $X$ has a Tight Peak Around Zero}\label{sec5.3}

The setting in this section is designed for cases when the distribution of $X$ has a large probability clustered near zero, as we expect in our examples. One way to do this is through a mixture structure, assuming that the density of $X$ has a component that is tightly concentrated at zero and another component from a standard density. We implement a mixing of a $\Normal(0, \sigma_{00}^2)$ for the first component and a $t$-distribution with $5$ degrees of freedom for the second component, with mixing probabilities $0.8$ and $0.2$ respectively. We choose the small value $\sigma_{00} = 0.2$ so that the mixing density has a very sharp peak around zero. For $\sigma_{00} = 0.2$ $\var(X)=0.37$.

In this case, when the true density puts a high concentration around
zero, in addition to the usual global metrics IAE, ISE and $W_2$, it is interesting to study how well an estimated density can capture the probability greater than, in
absolute value, $3$ times the standard deviation of the ``tight peak'' component. With a small abuse of notation, in the following, ``Exceedance'' is defined as the
absolute difference between the exceedance probability under the estimated density and that under the true density.

In the case of homoscedastic error,
$\mbox{Var}(U_i) = 0.36$, such that the variance of $U$ is equal to the variance of $X$.
We implement the heteroscedastic case by adjusting an appropriate form
for $\mbox{Var}(U_i)$ in Section \ref{sec5.2} such that the mean of
$\mbox{Var}(U_i)$ is more than the variance of $X$, specifically,
$\mbox{Var}(U_i) = (0.75 + X_i/4)^2$. Again in all cases,
the observations are subject to substantial measurement error.
The estimated densities are displayed in Figure
\ref{sim_fig6} -- Figure \ref{sim_fig8}. The numerical comparisons for
our Constrained Bayes Deconvolution method and the Kernel method are given in Table
\ref{tab6} -- Table \ref{tab7}.

\subsection{Conclusions from the Simulations}\label{sec5.4}

For both the simulations in Section \ref{sec5.2} and Section \ref{sec5.3}, with either homoscedastic or heteroscedastic error, we observe that under the global metrics ISE and IAE, large gains in efficiency are achieved with our Constrained Bayes Deconvolution estimator over the deconvoluting kernel estimator across all choices of sample size. Also, from the figures and tables of Section \ref{sec5.3}, with either homoscedastic or heteroscedastic error, the Constrained Bayes Deconvolution estimator performs much better in capturing the peak as well as the tail behavior, from both a visual check and the Exceedance metric. Lastly, the kernel deconvolution estimator gives a biased peak for our sample sizes when the errors are heteroscedastic.

\section{Genome Wide Association Applications}\label{sec:data}

\subsection{Background}\label{sec6.1}

In this section, we describe the results of a genome-wide association study (GWAS) that is particularly appropriate. In the \Supp, we also describe results from a
microarray experiment, which reaches similar conclusions.

\subsection{Height data}\label{sec:GIANT}

Our data come from a genome-wide association study for height \citep{allen2010hundreds}. The study data we have involves 133,653 individuals, and each individual in our data set has 941,389 SNPs that were measured. The goal of the study was to understand which SNPs were related to height, either positively or negatively.  Because of the relative rareness of traits that affect height, the simulation of Section \ref{sec5.3} is particularly relevant.

The data we have access to are regression coefficients of standardized heights, $Y_k$ say, on standardized SNPs for SNP $i$, $Z_{ik}$ say, and are thus estimated effect sizes. If we regress the $Y_k$ on the $Z_{ik}$, it is easy to see that if the true effect size is $X_i=\beta_i$, the estimated effect size is $W_i=\wh{\beta}_i$, which, because of the sample size involved, is approximately normally distributed with mean $\beta_i$ and measurement error $U_i = \Normal(0,\sigma_i^2)$, where $\sigma_i^2 = \sigma^2_{i\epsilon}/n$, where $n$ is the sample size and $\sigma^2_{i\epsilon}$ is the regression variance of the $Y_k$ on the $Z_{ik}$. Clearly, because of the sample size and the division by $n$, $\var(U_i) = \sigma_i^2$ is well-estimated and thus essentially known, but heteroscedastic.

For our Constrained Bayes Deconvolution estimator, we run 5000 MCMC iterations using the same hyperparameters used in the simulation section. There was a difficulty with the deconvoluting kernel density estimator, because its current implementation is exceedingly slow in terms of computation and resulted in a memory issue on a Linux machine with Intel(R) Xeon(R) CPU E5-2690 0 @ 2.90GHz. As a result, we subsampled 1\% of the SNPs (by taking every 100th SNP) to obtain results for this estimator, although such subsampling was unnecessary for our efficient implementation of the Constrained Bayes Deconvolution estimator. We have confirmed that our Constrained Bayes Deconvolution estimator gave very similar results for both the full data and the subsampled data. We also ran the R package Kern Smooth to obtain the naive Kernel density estimator that ignores measurement error: as expected, our Constrained Bayes Deconvolution estimator dominated it as well for both the full and subsampled data.

The resulting density estimators are shown in Figure \ref{fig3}. Among the three, our Constrained Bayes Deconvolution method yields a density that has a much sharper peak. This is expected, as in the simulation of Section \ref{sec5.3}, because regular kernel methods, deconvolved or not, cannot handle well this type of very non-standard, but practically important, density.

In addition to the graphical comparison, quantitative comparisons were also made. We compute the estimated probability of the effect size in absolute value being greater than some choices of minimum effect size, displayed in Table \ref{prob_small_eff_GIANT} and Figure \ref{fig8}. As mentioned above, the effect sizes for all SNPs are chosen for our Constrained Bayes Deconvolution and naive Kernel estimators while that of every 100th SNP are selected for the Kernel deconvolution estimator.
% First, the estimated probability of small effect sizes ($< 0.002$) as a measure of peak estimation is displayed in Table \ref{prob_small_eff_GIANT}.

\begin{table}[htbp]
\centering
\begin{tabular}{lccccccc}
\hline\hline
  &\multicolumn{7}{c}{Minimum effect size}\\
  \cmidrule(lr){2-8}
Estimator             & 0.002  & 0.0025 & 0.003 & 0.0035 & 0.004 &0.0045 & 0.005\\
\hline
Constrained Bayes     & 0.253  & 0.175 & 0.104 & 0.067 & 0.040 & 0.021 & 0.007  \\*[-.60em]
Kernel                & 0.426  & 0.346 & 0.286 & 0.226 & 0.191 & 0.159 & 0.130  \\*[-.60em]
Naive Kernel          & 0.561  & 0.466 & 0.382 & 0.310 & 0.248 & 0.196 & 0.133  \\
\hline\hline
\end{tabular}
\caption{\baselineskip=12pt Comparison of estimated probability of effect sizes associated with height that the absolute value of effect sizes is greater than the given
minimum effect size under our Constrained Bayes Deconvolution  method (Constrained Bayes), the deconvoluting kernel density estimator (Kernel) and the naive
ordinary kernel density estimator (Naive Kernel) for the GIANT Height effect sizes.}
\label{prob_small_eff_GIANT}
\end{table}

A scientific question in GWAS is to predict the number of significant SNPs for a given sample size, i.e., the number of individuals. Current scientific discoveries are based on the significance of p-values (with a Bonferroni significance level $\alpha = 5 \times 10^{-8}$) for individual SNPs followed by a ``LD clumping" step which selects independent SNPs using their linkage disequilibrium. In recently published GWAS studies of height, \cite{allen2010hundreds}, \cite{wood2014defining}, and \cite{yengo2018meta}, the number of individuals increased from 133K, 253K, to 700K, leading to 180, 697, and 3290 significant discoveries using the described method or more complicated methods regarding the joint SNP effects.

We now briefly discuss the relevance of our density estimation
procedure towards such sample size calculations; additional details
are deferred to Section \ref{sec.S1.3} of \Supp. Suppose
$\wh{\beta} \mid \beta \sim \mbox{N}(\beta, \sigma^2/n)$, where
$\wh{\beta}$ denotes an observed effect size, $\beta$ denotes the
corresponding true effect size with density $f$, and the error
variance $\sigma^2$ is displayed as a constant here for notational
simplicity. A standard approach \citep{chatterjee2013projecting} for
predicting the number of effect sizes achieving genome-wide
significance $\alpha$ at sample size $n$ is provided by the projection
formula, $n \times \mbox{Pr}(\sigma^{-1} \sqrt{n} |\wh{\beta}| > z_{\alpha/2})
= n \int \mbox{pow}_{\sigma, \alpha}(\beta) f(\beta) d\beta$, where
$\mbox{pow}_{\sigma, \alpha}(\beta) = 1 - \Phi(z_{\alpha/2} -
\sqrt{n}\sigma^{-1}\beta) + \Phi(-z_{\alpha/2} -
\sqrt{n}\sigma^{-1}\beta)$.  Here $\Phi(\cdot)$ and $z_{\alpha/2}$
denote the cummulative distribution function and the $(1-\alpha/2)$th
quantile of a standard normal random variable.

We can obtain point and interval estimates for the quantity $\int \mbox{pow}_{\sigma, \alpha}(\beta) f(\beta) d\beta$ from our MCMC output. A Monte Carlo integration is performed to approximate the projection formula using the posterior samples of $\beta$, leading to the desired point prediction. We can further quantify the posterior variability of the predicted number by repeating the calculation on slices dispersed over a MCMC chain. Since scientists are generally interested in the number of independent SNPs that are discovered, we first selected a subset of independent SNPs based on the linkage disequilibrium between the SNPs before estimating the density of $X$ using our procedure. More details about the above procedures can be found in Section \ref{sec.S1.3} of \Supp.

We report in Table \ref{projection_GIANT} the posterior mean of these predicted numbers as our estimator for the expected number of SNPs discovered, together with a $95\%$ credible interval for that number. Although we make an uncommon assumption that none of the effect sizes are exactly zero, our estimates in Table \ref{projection_GIANT} are in the ballpark of the actual numbers from the three cited papers. A clear advantage of using a valid density estimator of true effect sizes in conjunction with the projection formula is that it provides a cheap and simple calculation without carrying out any large-scale experiments. That is, we obtain the density estimator based on the smallest sample size of height study, and quantifies the number of significant SNPs including its uncertainty for larger studies, given no information except their sample sizes. Hence our method can be used to infer the required sample size needed for an expected given number of discoveries.

\begin{table}[htbp]
\centering
\begin{tabular}{lccc}
\hline\hline
  &\multicolumn{3}{c}{Number of individuals}\\
  \cmidrule(lr){2-4}
             & 133K  & 253K & 700K\\
\hline
Exp. Disc.    & 134  & 375 & 2907 \\*[-.60em]
$95\%$ C.I.   & (125, 143)  & (357, 394) & (2790, 3039)\\
\hline\hline
\end{tabular}
\caption{\baselineskip=12pt Estimated value (Exp.Disc.) and a $95\%$
  credible interval ($95\%$ C.I.) for predicting the expected number of SNPs discovered  as the number of individuals varies. We obtain posterior samples of
  the predicted number from the projection formula and posterior samples of effect size distribution.}
\label{projection_GIANT}
\end{table}

\section{Discussion}\label{discuss}

We have considered the case of nonparametric density deconvolution with possibly heteroscedastic measurement errors, where the true densities are subject to shape
constraints, in our case symmetry and unimodality. We are particularly interested in applications where there is a large probability near zero coupled with possibly
heavy tailed distributions. We showed that our method, which we call Constrained Bayes Deconvolution, is nonparametrically consistent for estimating the true target
density in general, and is particularly well-equipped for the mixture problem described immediately above. Computationally, it is linear in the sample size, and hence
highly scalable.

Mixtures of uniforms are known to contain the Normal variance mixture
class \citep{wang2013class} described in Section \ref{sec1}, and have
been utilized in various applications for modeling a symmetric
unimodal density. However, the flexibility of such a model depends
critically on the flexibility of the mixing distribution. Our
carefully designed choice of the Dirichlet process mixture of gammas
for this mixing distribution has large support on the space of
densities on the positive real line, leads to efficient computation,
and is provably consistent. Different approaches, based instead on a
number of mixtures of Normals, include
\cite{stephens2016false}, and a very different approach, based on a
computation in \cite{yang2012conditional}, has been taken by
\cite{zhang2018estimation}, wherein they fit a regression to a large
number of predictors, get the joint regression coefficients, and then
do approximations and linear model calculations to reduce to the
marginal effects, which in this context is our
$X$. \cite{zhu2017bayesian} is a Bayesian approach similar to
\cite{zhang2018estimation}. This particular approach \citep{zhu2017bayesian} seems to be limited to genome-wide association studies based on SNPs, where the
linkage disequilibrium (correlation) between the SNPs is known.

While we are not limited to the effect size context, in that context it might be interesting to replace the idea of a large probability near zero to the case of a point mass
exactly at zero, which has been done in the mixtures of Normals by \cite{stephens2016false} and \cite{zhang2018estimation}. This is possible to do within our
framework and will be reported upon elsewhere. The corresponding results in Table \ref{prob_small_eff_GIANT} are much the same.

\section*{Supplementary Material}

The \Supp\ includes a data analysis of a microarray experiment. The R
code is available from the last author. Code for
  simulations are provided at https://github.com/tamustatsy/Constrained\_Deconvolution/.

\baselineskip=14pt

\section*{Acknowledgments}
Su and Carroll were supported by a grant from the National Cancer Institute (U01- CA057030). Bhattacharya was supported from National Science Foundation grant (NSF DMS 1613156) and a NSF CAREER Award (DMS 1653404). Zhang and Chatterjee were partially funded through a Patient-Centered Outcomes Research Institute (PCORI) Award (ME-1602-34530).
The authors were also supported in part by a grant from the National Human Genome Research Institute (R01-HG010480).
The statements and opinions in this article are solely the responsibility of the authors and do not necessarily represent the views of PCORI, its Board of Governors or Methodology Committee. The authors are grateful to Aurore Delaigle of the University of Melbourne and her collaborators for publishing R package, deconvolve, for 
homoscedastic and heteroscedastic kernel density deconvolution on Github.

\bibliographystyle{biomAbhra}
\bibliography{Bayes_Unimod_Sym_wMe_RJC,CarrollPapers}

\clearpage\pagebreak\newpage
\pagestyle{plain}
\newcommand{\Appendix}{\appendix\def\thesection{Appendix~\Alph{section}}\def\thesubsection{\Alph{section}.\arabic{subsection}}}
\section*{Appendix}
\begin{appendix}
\Appendix\renewcommand{\theequation}{A.\arabic{equation}}
\renewcommand{\thesubsection}{A.\arabic{subsection}}
\renewcommand{\thecondition}{A.\arabic{condition}}
\renewcommand{\theremark}{A.\arabic{remark}}
\renewcommand{\theproposition}{A.\arabic{proposition}}
\renewcommand{\thethm}{A.\arabic{thm}}
\renewcommand{\thelemma}{A.\arabic{lemma}}
\setcounter{equation}{0}
\setcounter{lemma}{0}
\baselineskip=18pt

\subsection{Proof of Theorem \ref{thm_p}} \label{sec:app:6}

Below we provide details to verify (\ref{eq:6}), (\ref{eq:7}) and (\ref{eq:8}) in Section
\ref{thm_p}. 

\cite{bochkina2017adaptive} derive the posterior convergence rate for
Dirichlet location-mixture of Gammas in the no-measurement error
case. We obtain some preliminary results on the layer of $g(\cdot)$
from their work. It is worth pointing out that since the condition on
the Dirichlet process base probability is different from theirs, only
results that are not affected by the type of prior can be inherited
directly in this article. These results can be obtained by Proposition
2.1 and Lemma B.2 in \cite{bochkina2017adaptive}.  Any
$g_0 \in \mathcal{M}$ can be approximated by convoluting a Gamma
kernel and some discrete probability, that is, $K_z \ast P_N$, where
$K_z$ is representing the Gamma kernel with shape and rate parameter
$(z, z/\mu)$ and $P_N$ is a discrete probability
$P_N = \sum_{j = 1}^N p_j \delta_{u_j}$, with
$N \leq N_0 \sqrt{z} (\log z)^{3/2}$, $u_j \in [e_z, E_z]$. The
sequences $\{u_j\}_{j=1}^N$ and $\{p_j\}_{j = 1}^N$ satisfy that
$u_1 = e_z$, $u_N = E_z$, $u_{j + 1} - u_j > z^{-A}$ and
$p_j > z^{-A}$ for some $A > 0$ and with $e_z = z^{-a}$ and
$E_z = z^b$, $a > 1$, $b > 1/\rho_1$, the choice of lower bound on $b$
is larger than that used in \cite{bochkina2017adaptive}, specifically
we require $b > 1/(\rho_1 - 2)$. Define $u_0 = u_1$, $u_{N+1} = u_N$,
then $U_j = [(u_j + u_{j-1})/2, (u_j + u_{j+1})/2]$ covers
$[e_z, E_z]$. Moreover,
$U_0 = \mathbb{R}^+ \setminus \cup_{j = 1}^N U_j$.

Under our Dirichlet location-mixture of Gammas model,
$g(\theta) = K_z \ast G(\theta) = \int g_{z, z/\mu}(\theta) dG(\mu)$,
where the mixing measure $G$ follows $\text{DP}(m, D)$. Define a prior
set
$\mathcal{G}_z = \{G: G(U_i)/p_i \in (1 - 2z^{-A}, 1 - z^{-A}), i = 1,
\ldots, N\}$, while $z \in I_n = (z_n, 2z_n)$. The choice of $z_n$
will be specified later.

In Appendix \ref{sec:app:4} below , we show that on this prior set
$\mathcal{G}_z \times I_n$, the following bounds hold,
\begin{equation}\label{eq:25}
  P_0 \log (p_0/p) \lesssim z_n^{-1} \log(z_n) , \text{ and }
  P_0 \{\log (p_0/p)\}^2 \lesssim  z_n^{-1} \log (z_n) \log(n).
\end{equation}

In Appendix \ref{sec:app:5}, the lower bound for the prior probability
of the prior set $\mathcal{G}_z \times I_n$ is derived, namely that
\begin{equation}
  \label{eq:22}
  \Pi(\mathcal{G}_z \times I_n) \gtrsim \exp\{C \sum_j \log(\alpha_j)\} \gtrsim \exp\{-C z_n^{B+1/2} (\log z_n)^{3/2}\},
\end{equation}
where $B = \max(b a_1, a a_0)$.

Take $z_n = n^{2/(2B + 3)} (\log n)^{-1/(2B + 3)}$, such that
$\epsilon_n^2 = z_n^{-1} \log z_n \asymp n^{-2/(2B + 3)}(\log n)^{(2B
  + 2)/(2B + 3)}$. From (\ref{eq:25}) and (\ref{eq:22}), the prior set
$\mathcal{G}_z \times I_n$ has prior probability bounded below by
$\exp(-C n\epsilon_n^2)$ while on this set
$P_0 \log (p_0/p) \lesssim \epsilon_n^2$,
$P_0 \{\log (p_0/p)\}^2 \lesssim \epsilon_n^2 \log n$. Therefore, the
prior concentration inequality (\ref{eq:8}) holds.

Under the prior in Condition \ref{con4} and the $\epsilon_n$ just
defined, the sieve space on $p(\cdot)$, $\mathcal{F}_n$, in
(\ref{eq:6}) and (\ref{eq:7}) will be defined as follows. Consider a
subspace of $\mathcal{G}$,
\begin{align*}
  Q = Q(\epsilon , J , a , b , \underline{z} , \bar{z}) =
  \{&g(\cdot) = \sum_{j=1}^\infty \pi_j g_{z,z/\mu_j}(\cdot): \sum_{j > J } \pi_j < \epsilon , z \in [\underline{z} ,\bar{z} ],
      \mu_j \in [a , a  + b ] \\
    &\text{ for } j = 1, \ldots, J \}.
\end{align*}
The sieve space of $\mathcal{G}$ is given by
$Q_n = Q(\zeta \epsilon_n, J_n, a_n, b_n, \underline{z}_n,
\bar{z}_n)$. Because of the multi-layer relationship between
$p(\cdot)$, $f(\cdot)$ and $g(\cdot)$ from the definition of
$p(\cdot)$ and $f(\cdot)$, the sieve space on $p(\cdot)$,
$\mathcal{F}_n$, is defined naturally based on $Q_n$. Furthermore, the
entropy and prior mass conditions, (\ref{eq:6}) and (\ref{eq:7}), for
$Q_n$ can be passed along to $\mathcal{F}_n$ due to the fact that the
Hellinger distance between any two functions
$g_1, g_2 \in \mathcal{G}$ is greater than or equal to that between
the corresponding $p_1, p_2$, that is,
$h^2(p_1, p_2) \leq h^2(g_1,g_2)$. It remains to show that $Q_n$
satisfies (\ref{eq:6}) and (\ref{eq:7}).

According to Lemma 4.2 in \cite{bochkina2017adaptive}, (\ref{eq:7})
holds for $Q_n$ if for some positive constant $c$,
\begin{align}\label{eq:3}
  J_n D\{(0,a_n)\} \notag &\lesssim \exp(- c n \epsilon_n^2), \enspace J_n D\{(a_n + b_n, \infty)\} \lesssim \exp(- c n \epsilon_n^2),\\
  1 - \Pi_z([\underline{z}_n, \bar{z}_n]) &\lesssim \exp(- c n \epsilon_n^2), \enspace \{e m J_n^{-1} \log(1/\epsilon_n)\}^{J_n} \lesssim \exp(- c n \epsilon_n^2).
\end{align}

Equation (\ref{eq:6}) holds for $Q_n$ if
\begin{equation}\label{eq:5}
  J_n\{\log \log(b_n/a_n) + \log(\bar{z}_n) + \log(1/\epsilon_n)\} + \log \log(\bar{z}_n/\underline{z}_n) \lesssim n\epsilon_n^2.
\end{equation}

For notational simplicity, let $\eta = 2B + 3$, and set $C > 0$ as a
large enough constant. These conditions are met (details can be found
in Appendix \ref{sec:app:3}) for the following choices of
$J_n = C n^{(2B + 1)/\eta} (\log n)^{-1/\eta}, a_n = C \{n^{(2B +
  1)/\eta} (\log n)^{(2B + 2)/\eta}\}^{-(1/a_0^\prime)}, b_n = C
\{n^{(2B + 1)/\eta} (\log n)^{(2B + 2)/\eta}\}^{(1/a_1^\prime)}$,
$\underline{z}_n = 1 + \exp\{-C n^{(2B + 1)/\eta} (\log n)^{(2B +
  2)/\eta}\}$, $\bar{z}_n = C n^{2(2\beta + 1)/\eta}$
$(\log n)^{2\{(2B + 2)/\eta - \rho_z\}}$.

\subsubsection{Kullback-Leibler Bound}\label{sec:app:4}
One useful result from \cite{bochkina2017adaptive} (in the proof of
their Lemma B.3) is that for any $z$, and
$G \in \mathcal{G}_z = \{G: G(U_i)/p_i \in (1 - 2z^{-A}, 1 - z^{-A}),
i = 1, \ldots, N\}$, it is proved that $h^2(g_0, g) \lesssim z^{-1}$,
where $g = K_z \ast G$. Moreover, it has been shown (in the proof of
their Lemma B.3) that $g(\theta) = (K_z \ast G) (\theta)$ satisfies
\begin{eqnarray}
  \label{eq:17}
  g(\theta) \gtrsim
  \begin{cases}
    z^{-A + 1/2 - M^2/2} & \theta \in [e_z, E_z],\\
    \exp\{2z \log(\theta/e_z) - c \log z\} & \theta < e_z, \\
    \exp(-2z\theta/e_z) & \theta > E_z.
  \end{cases}
\end{eqnarray}

\cite{bochkina2017adaptive} also contains the following lemma (Lemma
C.2 in their paper) which we will make use of to find the tail
probability of the integral with respect to $g$, which is stated as
Lemma \ref{tail_mix_gamma} below.

\begin{lemma}\label{gamma_ineq}
  For all $\delta \in (0,1)$ there exists $c(\delta) > 0$ such that for all $z$ large enough and $u < 1 - \delta$,
  \begin{equation*}
    \frac{z^z \exp{(-z/u)}}{\Gamma(z) u^z} \leq \exp\{-c(\delta)z/u\}.
  \end{equation*}
\end{lemma}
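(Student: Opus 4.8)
The plan is to take logarithms and reduce the claimed bound to an elementary inequality for the single‑variable function $\phi(u) = u - u\log u$ on the interval $(0,\,1-\delta]$, using only a crude Stirling lower bound for $\Gamma(z)$.

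First I would invoke Stirling's formula in the form that there is a constant $c_0>0$ with $\Gamma(z)\ge c_0\,z^{z-1/2}e^{-z}$ for all $z\ge 1$ (indeed $\Gamma(z)=\sqrt{2\pi}\,z^{z-1/2}e^{-z}e^{\mu(z)}$ with $\mu(z)\in(0,1/(12z))$, so $c_0=\sqrt{2\pi}$ works). Substituting this into the left‑hand side and collecting the exponentials gives, for $0<u<1-\delta$,
\[
\frac{z^z e^{-z/u}}{\Gamma(z)\,u^z}\;\le\;c_0^{-1}\sqrt{z}\;\exp\!\big\{\,z(1-\log u)-z/u\,\big\}.
\]
Hence it suffices to produce $c(\delta)>0$ so that the right‑hand side is at most $\exp(-c(\delta)z/u)$ for all $z$ large; taking logarithms, dividing by $z$, and multiplying by $u>0$, this is equivalent to
\[
\phi(u)+R_z(u)\;\le\;1-c(\delta),\qquad R_z(u):=\frac{u\big(\tfrac12\log z-\log c_0\big)}{z},
\]
with $\phi(u)=u-u\log u$.

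Next I would analyze $\phi$ on $(0,1)$: it satisfies $\phi(0^+)=0$, $\phi(1)=1$, and $\phi'(u)=-\log u>0$, so $\phi$ is strictly increasing on $(0,1)$; consequently $\sup_{0<u<1-\delta}\phi(u)=\phi(1-\delta)=(1-\delta)\{1-\log(1-\delta)\}<1$. Setting $c(\delta):=\tfrac12\{1-\phi(1-\delta)\}>0$ and noting $|R_z(u)|\le\{\tfrac12\log z+|\log c_0|\}/z\to 0$ uniformly over $u<1$, one obtains, for all $z$ large enough (depending only on $\delta$), $\phi(u)+R_z(u)\le\phi(1-\delta)+\tfrac12\{1-\phi(1-\delta)\}=1-c(\delta)$, which is exactly what is needed.

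There is no serious obstacle; the only point requiring a little care is that the bound must hold uniformly over the whole range $0<u<1-\delta$, in particular as $u\downarrow 0$ (where $z/u\to\infty$ and the target bound itself tends to $0$). This is automatic because $\phi$ is increasing, so the binding case is $u\uparrow 1-\delta$ — precisely where the hypothesis $u<1-\delta$ provides the slack $1-\phi(1-\delta)>0$ — and because the $z$‑dependent correction $R_z(u)$ is dominated by $\tfrac12(\log z)/z$ irrespective of $u$. One also records the harmless book‑keeping that the threshold in ``$z$ large enough'' may be taken to depend on $\delta$ alone.
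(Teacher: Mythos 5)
Your proof is correct. There is, however, no in-paper argument to compare it against: the paper states this result as a quotation of Lemma C.2 of \cite{bochkina2017adaptive} and gives no proof, so your writeup serves as a self-contained substitute for that citation. Your route is the natural elementary one: the Stirling lower bound $\Gamma(z)\ge \sqrt{2\pi}\,z^{z-1/2}e^{-z}$ reduces the claim, after taking logarithms, dividing by $z$ and multiplying by $u>0$, to $\phi(u)+R_z(u)\le 1-c(\delta)$ with $\phi(u)=u-u\log u$; monotonicity of $\phi$ on $(0,1)$ with $\phi(1)=1$ gives the uniform slack $1-\phi(1-\delta)>0$ over the whole range $u<1-\delta$, and the $O\{(\log z)/z\}$ correction is absorbed for $z$ large with a threshold depending only on $\delta$, with the choice $c(\delta)=\tfrac12\{1-\phi(1-\delta)\}$. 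You correctly identify that the binding case is $u\uparrow 1-\delta$ rather than $u\downarrow 0$. The only implicit restriction is $u>0$, which is forced by the expression $u^z e^{-z/u}$ for non-integer $z$ and is satisfied in the paper's application (there $u=\mu/\theta\le 1/2$), so nothing is lost; this is essentially the standard Stirling-based computation one would expect to find in the cited source.
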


Now we state our Lemma which makes use of Lemma \ref{gamma_ineq} to bound the tail probability of the integral with respect to $g$. The proof is given in Section \ref{sec:app:1}.

\begin{lemma}\label{tail_mix_gamma}
  For all $z$ large enough such that Lemma \ref{gamma_ineq} holds, we have
  \begin{equation*}
    \int_{\theta < 2E_z} g(\theta) d\theta \geq 1 - z^{-1}
    \exp\{-2c(0.5)z\} - z^{-A}.
  \end{equation*}
\end{lemma}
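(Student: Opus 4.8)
The plan is to bound the complementary tail $\int_{\theta \ge 2E_z} g(\theta)\,d\theta$ and subtract from $1$, using that $g = K_z \ast G$ is a probability density. By Tonelli's theorem,
\[
\int_{\theta \ge 2E_z} g(\theta)\,d\theta = \int_0^\infty \Pr(\Theta_\mu \ge 2E_z)\,dG(\mu),
\]
where $\Theta_\mu \sim \mathrm{Ga}(z, z/\mu)$ has mean $\mu$. The key observation is the scaling identity $\Theta_\mu \stackrel{d}{=} \mu V$ with $V \sim \mathrm{Ga}(z,z)$, so $\Pr(\Theta_\mu \ge 2E_z) = \Pr(V \ge 2E_z/\mu)$, which is nonincreasing in $\mu$. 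I would then split the $\mu$-integral at $\mu = E_z$.

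For the tail $\mu > E_z$: since $u_N = u_{N+1} = E_z$ we have $U_N = [(u_{N-1}+E_z)/2,\ E_z]$, so $(E_z,\infty)$ meets none of $U_1,\dots,U_N$ and thus $(E_z,\infty)\subseteq U_0$. On the prior set $\mathcal{G}_z$ one has $G(U_j) > (1-2z^{-A})p_j$ for $j=1,\dots,N$, hence $\sum_{j=1}^N G(U_j) > 1-2z^{-A}$ and $G(U_0) < 2z^{-A}$. Bounding $\Pr(\Theta_\mu \ge 2E_z)\le 1$, the $\mu>E_z$ contribution is at most $G((E_z,\infty)) \le G(U_0) < 2z^{-A}$, which furnishes the $z^{-A}$ term (the factor $2$ being absorbed into the free constant $A$ for $z$ large).

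For the bulk $\mu \le E_z$: here $2E_z/\mu \ge 2$, so $\Pr(V \ge 2E_z/\mu) \le \Pr(V \ge 2)$ and this contribution is at most $\Pr(V \ge 2)$. To bound $\Pr(V\ge 2)$ I would recast it into the form of Lemma \ref{gamma_ineq}: the $\mathrm{Ga}(z,z)$ density at $t$ equals $\tfrac1t\cdot\frac{z^z e^{-z/u}}{\Gamma(z)u^z}$ with $u = 1/t$, and for $t\ge 2$ we have $u \le 1/2$, so Lemma \ref{gamma_ineq} with $\delta=1/2$ bounds the density at $t$ by $\tfrac1t\exp\{-c(1/2)zt\}$. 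Integrating over $t\in[2,\infty)$ gives $\Pr(V\ge 2) \le \{c(1/2)z\}^{-1}\exp\{-2c(1/2)z\} \le z^{-1}\exp\{-2c(0.5)z\}$ for $z$ large (absorbing $c(1/2)^{-1}$ into $z$, possibly relabelling the exponent constant by an arbitrarily small amount). Adding the two contributions yields $\int_{\theta \ge 2E_z} g(\theta)\,d\theta \le z^{-1}\exp\{-2c(0.5)z\} + z^{-A}$, which is the claim after passing to the complement.

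I expect the only delicate step to be the last one: spotting the substitution $u = 1/t$ that puts the Gamma density exactly in the shape covered by Lemma \ref{gamma_ineq}, and checking that integrating the pointwise exponential bound over $[2,\infty)$ costs no more than a factor $z^{-1}$, so the result still matches $z^{-1}\exp\{-2c(0.5)z\}$ rather than losing a polynomial-in-$z$ factor. The set-theoretic bookkeeping for $G(U_0)$ and the inclusion $(E_z,\infty)\subseteq U_0$ are routine once the endpoint structure of the partition $\{U_j\}$ is unwound.
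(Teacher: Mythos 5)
Your proposal is correct and follows essentially the same route as the paper: split the tail integral at $\mu = E_z$, bound the $\mu > E_z$ part by the prior-set constraint $G(U_0) \lesssim z^{-A}$, and control the $\mu \leq E_z$ part via Lemma \ref{gamma_ineq} (the paper applies it directly to $g_{z,\mu}(\theta)$ with $u = \mu/\theta < 1/2$, while you first rescale to $V \sim \mbox{Ga}(z,z)$ and take $u = 1/t$, which is the same computation). One trivial slip: $\Pr(V \geq 2E_z/\mu)$ is nondecreasing, not nonincreasing, in $\mu$, but the inequality you actually use ($2E_z/\mu \geq 2$ for $\mu \leq E_z$, hence $\Pr(V \geq 2E_z/\mu) \leq \Pr(V \geq 2)$) is correct, and your constant bookkeeping matches the paper's own use of $\lesssim$.
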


The following inequality, by Lemma 4 of \cite{shen2013adaptive}, can be used to bound the quantities $P_0 \log
(p_0/p)$ and $P_0 \{\log (p_0/p)\}^2$. There exists a $\lambda_0$ such
that for any $\lambda \in (0, \lambda_0)$ and any two densities $p$
and $q$ ($P$ denotes the probability distribution with respect to $p$), 
\begin{eqnarray}
  \label{eq:9} P \log (p/q) &\leq& h^2(p,q) (1 - 2 \log \lambda) + 2 P\{\log (p/q) I(q/p \leq \lambda)\}, \\ \label{eq:10} P \{\log (p/q)\}^2 &\leq& h^2(p,q) \{12 + 2
                                                                                                                                                     (\log \lambda)^2\} + 8 P[\{\log (p/q) \}^2 I(q/p \leq \lambda)].
\end{eqnarray}

We will use $\phi_\sigma$ to denote a Normal density with mean zero and standard deviation $\sigma$. Since
\begin{eqnarray}
  \label{eq:13}
  \int f^{1/2}(u) f_0^{1/2}(u) du \notag&=& \int \int
                                            \phi_\sigma(w -
                                            u)f^{1/2}(u) f_0^{1/2}(u) du dw \\
  \notag&\leq& \int \bigg\{\int \phi_\sigma(w-u) f(u) du \bigg\}^{1/2}
               \bigg\{\int \phi_\sigma(w-u)f_0(u) du \bigg\}^{1/2} dw \\
                                        &=& \int p^{1/2}(w) p_0^{1/2}(w) dw.
\end{eqnarray}
\begin{eqnarray}
  \label{eq:14}
  \int g^{1/2}(\theta) g_0^{1/2}(\theta) d\theta \notag &\leq& \int \int (2\theta)^{-1} I_{(-\theta \leq u < \theta)} g^{1/2}(\theta) g_0^{1/2}(\theta) d\theta du\\
  \notag&\leq&\int \bigg\{\int (2\theta)^{-1} I_{(-\theta \leq u \leq \theta)} g(\theta) d\theta \bigg\}^{1/2} \int \bigg\{ (2\theta)^{-1} I_{(-\theta \leq u \leq \theta)}
               g_0(\theta) d\theta \bigg\}^{1/2} du \\
                                                        &=& \int f^{1/2}(u) f_0^{1/2}(u) du.
\end{eqnarray}
Making use of (\ref{eq:13}) and (\ref{eq:14}), together with the fact that $1 - h^2(p_1,p_2)/2 = \int p_1^{1/2}(x) p_2^{1/2}(x) dx$ holds for any two integrable functions $p_1, p_2$ and the previous result from \cite{bochkina2017adaptive} about $h^2(g, g_0)$, we obtain that
\begin{equation}
  \label{eq:15}
  h^2(p, p_0) \lesssim z^{-1}.
\end{equation}

Suppose $p_0(\cdot)$ has an upper bound $K$. For $|w| < E_z - \delta_N - z^{-A}$, where $\delta_N = u_N - u_{N - 1} > z^{-A}$,
\begin{eqnarray}
  \label{eq:16}
  p(w)/p_0(w) \notag&\geq& K \sigma^{-1} \int_{w -
                           z^{-A}}^{w + z^{-A}} \exp(-(w -
                           u)^2/\sigma^2) \int_{|u|}^\infty \theta^{-1}
                           g(\theta)
                           d\theta du \\
  \notag&\geq& 2 K \sigma^{-1} z^{-A} \exp(-z^{-2A}/\sigma^2) \int_{|w| +
               z^{-A}}^\infty \theta^{-1} g(\theta) d\theta\\
  \notag&\geq& 2 K \sigma^{-1} z^{-A} \exp(-z^{-2A}/\sigma^2)
               \int_{E_z - \delta_N}^\infty \theta^{-1} g(\theta)
               d\theta \\
  \notag&\gtrsim& z^{-A} \int_{E_z - \delta_N}^{E_z} \theta^{-1} g(\theta)
                  d\theta \\
                    &\gtrsim& E_z^{-1} z^{-A + 1/2 - M^2/2} z^{-A} z^{-A} = z^{-3A - b
                              +1/2 - M^2/2}.
\end{eqnarray}
The last inequality is a result of (\ref{eq:17}).

On the other hand, when $|w| > E_z - \delta_N - z^{-A}$, so that when $z$ is large, $w^2 > E_z^2/2$,
\begin{eqnarray}
  \label{eq:18}
  p(w)/p_0(w) \notag&\geq& K \sigma^{-1} \int_{|u| \leq 2 E_z} \exp(-(w -
                           u)^2/\sigma^2) \int_{|u|}^\infty \theta^{-1}
                           g(\theta)
                           d\theta du \\
  \notag&\geq&  K \sigma^{-1} \exp(-18 w^2/\sigma^2)\int_{\theta
               \leq 2E_z} g(\theta) d\theta\\
                    &\geq& K \sigma^{-1} [1 - z^{-1} \exp\{-2c(0.5)z\} -z^{-A}] \exp(-18
                           w^2/\sigma^2).
\end{eqnarray}

  According to (\ref{eq:16}), for $\lambda = K^\prime z^{-3A - b +1/2 - M^2/2}$, if $K^\prime$ is small enough, $\{w: p(w)/p_0(w) \leq \lambda\} \subset \{|w| > E_z
  - \delta_N - z^{-A}\}$. On the latter set, $p_0/p$ is upper bounded as shown in (\ref{eq:18}). Therefore,
  \begin{eqnarray} \label{eq:19}
    P_0[\{\log (p_0/p)\}^2
    I(p/p_0 \leq \lambda)] \lesssim \int_{|w| >
    E_z - \delta_N - z^{-A}} w^4 p_0(w) dw.
  \end{eqnarray}

Our next result, Lemma \ref{tail_p}, is proved in Section \ref{sec:app:1}.
\begin{lemma}\label{tail_p}
 {\rm  Under Condition \ref{con2}, when $t$ is large, $\int_t^\infty w^4 p_0(w) dw \lesssim t^{- \rho_1 + 2}$.}
  \end{lemma}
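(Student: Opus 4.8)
The plan is to transfer the $w^{4}$-weighted tail of $p_{0}$ to the $\theta^{4}$-weighted tail of $g_{0}$, which is exactly what Condition~\ref{con2} controls, using the fact that the Normal error contributes only super-polynomially small mass in its tails. Since $p_{0}(w)=\int\phi_{\sigma}(w-x)f_{0}(x)\,dx$, Fubini gives
\begin{equation*}
\int_{t}^{\infty}w^{4}p_{0}(w)\,dw=\int_{\mathbb{R}}f_{0}(x)\Big\{\int_{t}^{\infty}w^{4}\phi_{\sigma}(w-x)\,dw\Big\}\,dx,
\end{equation*}
and I would split the outer integral according to whether $|x|\le t/2$ or $|x|>t/2$.

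On $\{|x|\le t/2\}$: substituting $w=x+v$, the inner integral is $\int_{t-x}^{\infty}(x+v)^{4}\phi_{\sigma}(v)\,dv$, and on this range $v\ge t-x\ge t/2\ge|x|$, so $(x+v)^{4}\le16v^{4}$; hence the inner integral is at most $16\int_{t/2}^{\infty}v^{4}\phi_{\sigma}(v)\,dv$, a standard Gaussian tail that is $o(t^{-k})$ for every $k$ and in particular $\lesssim t^{-\rho_{1}+2}$. Since $\int f_{0}=1$, this contribution is $\lesssim t^{-\rho_{1}+2}$. On $\{|x|>t/2\}$: bound the inner integral by $\int_{\mathbb{R}}(x+v)^{4}\phi_{\sigma}(v)\,dv=\mathbb{E}(x+U)^{4}\le8(x^{4}+\mathbb{E}U^{4})=8(x^{4}+3\sigma^{4})\lesssim x^{4}$ for $t$ (hence $|x|$) large. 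It then remains to bound $\int_{|x|>t/2}x^{4}f_{0}(x)\,dx$; substituting $f_{0}(x)=\int(2\theta)^{-1}I_{(|x|\le\theta)}g_{0}(\theta)\,d\theta$ and using Fubini again, this equals $\int_{0}^{\infty}(2\theta)^{-1}g_{0}(\theta)\big(\int_{t/2<|x|\le\theta}x^{4}\,dx\big)\,d\theta$. The inner integral vanishes unless $\theta>t/2$, and for $\theta>t/2$ it is at most $2\theta^{5}/5$, so the whole expression is at most $\int_{t/2}^{\infty}(\theta^{4}/5)g_{0}(\theta)\,d\theta\le(1/5)C(1+t/2)^{-\rho_{1}+2}\lesssim t^{-\rho_{1}+2}$ by Condition~\ref{con2}. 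Adding the two pieces proves the claim; finiteness of $\mathbb{E}U^{4}$ and of $\int_{0}^{\infty}\theta^{4}g_{0}(\theta)\,d\theta$ (the latter being Condition~\ref{con2} at $x=0$) legitimizes the constants.

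The argument is essentially routine, so I do not expect a genuine obstacle; the only care needed is in the case split—making sure that when $x$ is moderate the Gaussian weight is evaluated far out in its tail, and that when $x$ is large the polynomial factor $x^{4}$ is absorbed into the $\theta^{4}$-tail of $g_{0}$ controlled by Condition~\ref{con2}.
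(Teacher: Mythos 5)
Your proof is correct and follows essentially the same route as the paper's: the Gaussian error tail is dismissed as super-polynomially small, and the main term is reduced to $\int_{|x|>t/2}x^4 f_0(x)\,dx$, which is bounded via the uniform-mixture representation of $f_0$ and Condition \ref{con2} exactly as in the paper. The only cosmetic difference is that you organize the reduction by Fubini and a case split on $|x|\lessgtr t/2$, whereas the paper uses the inequality $W^4\lesssim X^4+U^4$ together with $I(|W|>t)\le I(|X|>t/2)+I(|U|>t/2)$; the substance is identical.
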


  Immediately, Lemma \ref{tail_p} leads to an upper bound of (\ref{eq:19}),
  \begin{equation}\label{eq:21}
    P_0[\{\log (p_0/p)\}^2 I(p/p_0 \leq \lambda)] \lesssim
    z^{-b(\rho_1 - 2)} \leq z^{-1},
  \end{equation}
  the last inequality making use of the property of $b$ that $b > 1/(\rho_1 - 2)$.

 Based on (\ref{eq:15}) and (\ref{eq:21}), we can apply (\ref{eq:9}) and (\ref{eq:10}) with the choices of $\lambda$ the same as the one used in (\ref{eq:19}), $p =
 p_0$ and $q = p$, and derive that
  \begin{eqnarray}
    \label{eq:20}
       P_0 \log (p_0/p) \lesssim z^{-1} \log(z); \enspace P_0 \{\log (p_0/p)\}^2 \lesssim z^{-1} \log(z) \log(n).
  \end{eqnarray}

  In summary, (\ref{eq:20}) holds whenever $g \in \mathcal{G}_z$, for any $z$. Hence on the prior set $\mathcal{G}_z \times I_n$, $P_0 \log (p_0/p) \lesssim z_n^{-1}
  \log(z_n)$, $P_0 \{\log (p_0/p)\}^2 \lesssim z_n^{-1} \log(z_n) \log(n)$.

\subsubsection{Prior Probability Bound}\label{sec:app:5}
 Under the new set of priors in Condition \ref{con4}, the prior probability of the prior set $\mathcal{G}_z \times I_n$ has to be modified in the following way. The
 techniques in \cite{bochkina2017adaptive} still apply. The only modification lies in the rate of $\alpha_j = m D(U_j), j = 0, \ldots, N$. Note that for large $u_{j-1}
 \gtrsim E_z$,
  \begin{equation*}
    \alpha_j = m \int_{(u_{j-1} + u_j)/2}^{(u_j + u_{j + 1})/2} d(u) du \gtrsim C \int_{(u_{j-1} + u_j)/2}^{(u_j + u_{j + 1})/2} \exp(- u^{a_1}) du \gtrsim C
    \exp(-E_z^{a_1}) = C \exp(- z^{b a_1}).
  \end{equation*}
For small $0 < u_{j+1} \lesssim e_z$,
  \begin{equation*}
    \alpha_j = m \int_{(u_{j-1} + u_j)/2}^{(u_j + u_{j + 1})/2} d(u) du \gtrsim C \int_{(u_{j-1} + u_j)/2}^{(u_j + u_{j + 1})/2} \exp(- u^{- a_0}) du \gtrsim C
    \exp(-e_z^{-a_0}) = C \exp(-z^{a a_0}).
  \end{equation*}

  Denote $B = \max(b a_1, a a_0)$. For simplicity, we assume without loss of generality that $B = ba_1$. From the above results, $\sum_j (- \log \alpha_j) \lesssim N
  z^B \asymp z^{B + 1/2} (\log z)^{3/2}$. Then we can repeat the lines in the proof of Lemma 4.1 in \cite{bochkina2017adaptive}, so that for $z \in I_n$,
  \begin{equation*}%\label{eq:23}
    \Pi(\mathcal{G}_z) \gtrsim \exp\{C \sum_j \log(\alpha_j)\} \gtrsim \exp\{-C z_n^{B+1/2} (\log z_n)^{3/2}\}.
  \end{equation*}
  On the other hand,
  \begin{equation*} %\label{eq:24}
    \Pi_z(I_n) \gtrsim \exp\{-C \sqrt{z_n}(\log z_n)^{\rho_z}\}.
  \end{equation*}

\subsubsection{Verification of (\ref{eq:6}) and (\ref{eq:7}) on the Sieve Space $Q_n$}\label{sec:app:3}

In this section, we are going to verify the set of inequalities (\ref{eq:3}) and (\ref{eq:5}) in Appendix \ref{sec:app:6}. Again, our choices of the sieve space parameters
are $J_n = C n^{(2B + 1)/\eta} (\log n)^{-1/\eta}$, $a_n = C \{n^{(2B + 1)/\eta} (\log n)^{(2B + 2)/\eta}\}^{-(1/a_0^\prime)}$, $b_n = C \{n^{(2B + 1)/\eta} (\log
n)^{(2B + 2)/\eta}\}^{(1/a_1^\prime)}$, $\underline{z}_n = 1 + \exp\{-C n^{(2B + 1)/\eta} (\log n)^{(2B + 2)/\eta}\}$, and $\bar{z}_n = C n^{2(2\beta + 1)/\eta}(\log
n)^{2((2B + 2)/\eta - \rho_z)}$.

 Plugging these values together with the condition on the prior,
 \begin{align*}
   J_n D\{(0, a_n)\} &= C n^{(2B + 1)/\eta} (\log n)^{-1/\eta} \int_0^{a_n} d(u) du \\
                        &\lesssim  n^{(2B + 1)/\eta} (\log n)^{-1/\eta}  \exp(-a_n^{-a_0^\prime}) \lesssim \exp(-cn\epsilon_n^2);\\
   J_n D\{(a_n + b_n, \infty)\} &= C n^{(2B + 1)/\eta} (\log n)^{-1/\eta} \int_{a_n + b_n}^{\infty} d(u) du \\
                        &\lesssim  n^{(2B + 1)/\eta} (\log n)^{-1/\eta} \exp(-b_n^{a_1^\prime}) \lesssim \exp(-cn\epsilon_n^2);\\
   \Pi_z\{(1, \underline{z}_n)\} &\lesssim (\underline{z}_n - 1)^{c_0} \lesssim \exp(- c n \epsilon_n^2);\\
   \Pi_z\{[\bar{z}_n, \infty)\} &\lesssim \exp(-c^\prime \sqrt{\bar{z}_n} (\log \bar{z}_n)^{\rho_z}) \lesssim \exp(- c n \epsilon_n^2).
 \end{align*}
 To see that $\{e m J_n^{-1} \log(1/\epsilon_n)\}^{J_n} \lesssim \exp(- c n \epsilon_n^2)$, it is sufficient to show that
 \begin{equation*}
   J_n[\log (J_n) - \log \log(n) + C] \gtrsim c n^{(2B + 1)/\eta} (\log n)^{(2B + 2)/\eta},
 \end{equation*}
 which holds for $J_n = C n^{(2B + 1)/\eta} (\log n)^{-1/\eta}$.

Lastly, we can easily check that the sufficient inequality for (\ref{eq:6}) is valid,
\begin{equation*}
 J_n[\log \log(b_n/a_n) + \log(\bar{z}_n) + \log(1/\epsilon_n)] + \log  \log(\bar{z}_n/\underline{z}_n) \lesssim n\epsilon_n^2.
\end{equation*}

\subsection{Proof of Lemma \ref{inv_ine}}\label{sec:app:2}

Denote $K$ as a symmetric density, whose Fourier transform $\wh{K}$ has support $[-1,1]$. Moreover, $K$ has bounded moments up to order $s$ ($s > 2$). Let
$K_\delta(\cdot) = \delta^{-1} K(\cdot/\delta)$ be its mollifier. Let $g_\delta$ be a function whose Fourier transform $\wh{g}_\delta$ equals
$\wh{K}_\delta/\wh{\phi}_\sigma$, the ratio between the Fourier transform of the kernel $K_\delta$ and that of the Gaussian kernel $\phi_\sigma$.

By the triangular inequality,
\begin{equation}\label{eq:app:8}
  W_2^2(f, f_0) \lesssim W_2^2(f, f*K_\delta) + W_2^2(f_0, f_0*K_\delta) + W_2^2(f*K_\delta ,f_0*K_\delta).
\end{equation}

For the first and second term, based on the property of Wasserstein distance and convolution, the techniques in \cite{nguyen2013convergence} can be used to show
that $W_2^2(f, f*K_\delta) \lesssim \delta^2$, $W_2^2(f_0, f_0*K_\delta) \lesssim \delta^2$.

For the third term in (\ref{eq:app:8}), we first follow the route in Lemma 7 of \cite{gao2016posterior} which makes use of Theorem 6.15 in \cite{villani2008optimal}
stating that the Wasserstein distance $W_k(H_1, H_2)$ is upper bounded by a multiple of the $k$th root of $\int |x|^k d|H_1 - H_2|(x)$,
\begin{equation*} %\label{eq:app:9}
  W_2^2(f*K_\delta, f_0*K_\delta) \lesssim \bigg(\int_{|x| \leq M} + \int_{|x| > M}\bigg) |x|^2 |(f-f_0)*K_\delta(x)|dx = T_1 + T_2,
\end{equation*}
say. We will work on $T_1$ and $T_2$ separately.

By the Cauchy Schwartz inequality,
\begin{equation*}  %\label{eq:app:10}
  T_1 \leq M^{2 + 1/2} \|f*K_\delta - f_0*K_\delta\|_2.
\end{equation*}
Using the arguments in Corollary 2 of \cite{donnet2014posterior},
\begin{eqnarray*}  %\label{eq:app:11}
  \|f*K_\delta - f_0*K_\delta\|_2 \notag&=& \|(f*\phi_\sigma)*g_\delta - (f_0*\phi_\sigma)*g_\delta\|_2 = \|p*g_\delta - p_0*g_\delta\|_2\\
  &\leq& \|p - p_0\|_1 \|g_\delta\|_2 \lesssim \|p - p_0\|_1 \exp(\sigma^2\delta^{-2}/2).
\end{eqnarray*}

On the other hand,
\begin{align*} %\label{eq:app:12}
  T_2 \notag&\leq M^{-(s - 2)} \int_{|x| > M} |x|^s [(f + f_0)*K_\delta(x)] dx \\
      \notag&\lesssim M^{-(s - 2)} \int \int (|x - y|^s + |y|^s) (f + f_0)(x - y) K_\delta(y) dx dy\\
      &\lesssim M^{-(s - 2)} \int |y|^s K_\delta(y) dy + M^{-(s - 2)} \int |x|^s (f + f_0)(x) dx.
\end{align*}
The $s$th moment of $K_\delta$ is finite according to the assumption on $K$, moreover, the $s$th moment of $f_0$ is also finite under the fact that the $s$th moment
of $f_0$ is equivalent to the $s$th moment of $g_0$ and Condition \ref{con2} whenever $s \leq 4$. To make precise what the upper bound for $T_2$ is, it remains to
check the $s$th moment of $f$.

We consider $f(x) = \int I_{(|x| < \theta)} (2\theta)^{-1} g(\theta) d\theta$, with $g$ in the sieve space $Q_n$ in Section \ref{sec3}.
\begin{align}
  \label{eq:app:13}
  \int |x|^s f(x) dx \notag&= \int |x|^s \int I_{(|x| < \theta)} (2\theta)^{-1} g(\theta) d\theta dx = \int \bigg(\int |x|^s I_{(|x| < \theta)} dx\bigg) (2\theta)^{-1} g(\theta)
  d\theta\\
  \notag&\asymp \int \theta^s g(\theta) d\theta = \sum_j \pi_j (z/\mu_j)^{-s} \Gamma(z + s)/\Gamma(z) \lesssim b_n^s,
\end{align}
the last $\lesssim$ is because $\mu_j$ has the upper bound $b_n = C \{n^{(2B + 1)/\eta} (\log n)^{(2B + 2)/\eta}\}^{(1/a_1^\prime)}$.

Plugging the pieces into (\ref{eq:app:8}),
\begin{equation}
  \label{eq:app:14}
   W_2^2(f, f_0) \lesssim \delta^2 + M^{2 + 1/2} \exp(\sigma^2 \delta^{-2}/2) \|p - p_0\|_1 + M^{-(s - 2)} b_n^s.
\end{equation}

The next Lemma is used to select the choice of $M$ in (\ref{eq:app:14}).
\begin{lemma}\label{M_exist}
  As long as $\rho_1$ and $a_1^\prime$ are large enough, there exist some $\nu_1, \nu_2 > 0$ $M = \|p - p_0\|_1^{-2/5 + \nu_1}$ such that $M^{-(s - 2)} b_n^s =
  o_p(\|p - p_0\|_1^{\nu_2})$.
\end{lemma}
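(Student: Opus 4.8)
The plan is to substitute the prescribed form of $M$ into the estimate $T_2 \lesssim M^{-(s-2)} b_n^s$ obtained from $(\ref{eq:app:13})$, and then reduce the claim to a comparison of polynomial rates in $n$. It helps to first see why $M$ must have this form. Setting $\delta^{-2} = c_0\{-\log(\|p-p_0\|_1)\}$ for a small constant $c_0 > 0$ (legitimate for $n$ large, where on the part of the sieve reached by the posterior one has $\|p-p_0\|_1 < 1$) makes the first term in $(\ref{eq:app:14})$ of order $\{-\log(\|p-p_0\|_1)\}^{-1}$, while $\exp(\sigma^2\delta^{-2}/2) = \|p-p_0\|_1^{-\sigma^2 c_0/2}$, so the second term equals $M^{5/2}\|p-p_0\|_1^{1-\sigma^2 c_0/2}$. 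Taking $M = \|p-p_0\|_1^{-2/5+\nu_1}$ turns this into $\|p-p_0\|_1^{5\nu_1/2 - \sigma^2 c_0/2}$, a strictly positive power of $\|p-p_0\|_1$ as soon as $c_0 < 5\nu_1/\sigma^2$, hence negligible. This pins down the exponent $-2/5$ and leaves only the third term, $M^{-(s-2)}b_n^s$, to control, which is the content of the lemma.

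With $M = \|p-p_0\|_1^{-2/5+\nu_1}$ one has $M^{-(s-2)}b_n^s = \|p-p_0\|_1^{(2/5-\nu_1)(s-2)}b_n^s$, so splitting $(2/5-\nu_1)(s-2) = \nu_2 + c_\ast$ with $\nu_2, c_\ast > 0$, it suffices to take $\nu_1, \nu_2$ small (so that $c_\ast$ is close to its limiting value) and to show $\|p-p_0\|_1^{c_\ast}b_n^s = o_p(1)$ under the posterior. We take $s = 4$, the largest order for which the $s$th moment of $f_0$ is finite under Condition $\ref{con2}$, so $c_\ast \to 4/5$ as $\nu_1, \nu_2 \to 0$. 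Recall $\eta = 2B+3$; then $b_n^s \asymp \{n^{(2B+1)/\eta}(\log n)^{(2B+2)/\eta}\}^{s/a_1'}$ has $n$-exponent $s(2B+1)/(\eta a_1')$, and $\epsilon_n$ has $n$-exponent $-1/\eta$, so $\epsilon_n^{c_\ast}b_n^s$ has $n$-exponent $\eta^{-1}\{-c_\ast + s(2B+1)/a_1'\}$, which is negative provided $a_1' > s(2B+1)/c_\ast$. Since $B = \max(ba_1, aa_0)$ with $b$ takeable just above $1/(\rho_1-2)$, the parameters are chosen in the order: fix $a_1'$ and some $a_1 \ge a_1'$; then take $\rho_1$ large relative to $a_1/a_0$ so that $ba_1 < aa_0$ and hence $B = aa_0$, a fixed quantity close to $a_0$; and choose $a_1'$ at the first step large enough (in terms of $aa_0$) that $s(2B+1)/a_1' < 4/5$, whence for all sufficiently small $\nu_1, \nu_2$ one has $a_1' > s(2B+1)/c_\ast$. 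This is precisely the hypothesis that $\rho_1$ and $a_1'$ be large enough, and it gives $\epsilon_n^{c_\ast}b_n^s \to 0$, equivalently $b_n^{-s/c_\ast}/\epsilon_n \to \infty$.

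To finish, write $M^{-(s-2)}b_n^s = \|p-p_0\|_1^{\nu_2}\cdot\|p-p_0\|_1^{c_\ast}b_n^s$. For every $\tau > 0$,
\[
  \{\|p-p_0\|_1^{c_\ast}b_n^s > \tau\} = \{\|p-p_0\|_1 > \tau^{1/c_\ast}b_n^{-s/c_\ast}\} \subseteq \{\|p-p_0\|_1 > M'\epsilon_n\}
\]
for any fixed $M'$ and all $n$ large, because $b_n^{-s/c_\ast}/\epsilon_n \to \infty$. Combining $\|p-p_0\|_1 \le 2h(p,p_0)$ with the posterior concentration $\Pi_n(h(p,p_0) > M'\epsilon_n \mid W_1,\ldots,W_n) \to 0$ for $M'$ large, established in Theorem $\ref{thm_p}$, the posterior mass of that event tends to $0$ almost surely; hence $\|p-p_0\|_1^{c_\ast}b_n^s = o_p(1)$, and the lemma follows with these $\nu_1, \nu_2$.

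The step I expect to be the main obstacle is the bookkeeping in the second paragraph: one must verify that the interlocking constraints $a_1' \le a_1$, $b > 1/(\rho_1-2)$, and $B = \max(ba_1, aa_0)$, all of which feed simultaneously into $b_n$ and into $\epsilon_n = \epsilon_n(B)$, do not create a circular dependence in which enlarging $a_1'$ drives $B$ (and hence $s(2B+1)/a_1'$, or the exponents in $\epsilon_n$) up at least as fast as $a_1'$ itself. Choosing the parameters in the order given removes the circularity, but this is the part that genuinely requires both $\rho_1$ and $a_1'$ to be taken large enough.
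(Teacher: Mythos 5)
Your proposal is correct and follows essentially the same route as the paper: reduce the claim to the exponent inequality $s(2B+1)/\{a_1^\prime(s-2)\} < 2/5$ by comparing the $n$-rates of $b_n^s$ and $\epsilon_n$ from Theorem \ref{thm_p}, and then satisfy it by taking $\rho_1$ and $a_1^\prime$ large. The only cosmetic differences are that you resolve $B=\max(ba_1,aa_0)$ as $aa_0$ (by shrinking $b$ via large $\rho_1$) while the paper works with $B=ba_1$ and states the equivalent condition $a_1^\prime > \{s/(s-2)\}(5/2+5a_1/\rho_1)$, and that you spell out the posterior-mass version of the $o_p$ statement, which the paper leaves implicit.
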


From Lemma \ref{M_exist}, we can take $M = \|p - p_0\|_1^{-2/5 + \nu_1}$. In (\ref{eq:app:14}) the optimal value is achieved at $\delta \asymp \{-\log(M^{5/2} \|p -
p_0\|_1)\}^{-1/2} \asymp \{-\log(\|p - p_0\|_1)\}^{-1/2}$. With this choice of $M$ and $\delta$, the second and third term are of order $o(\|p - p_0\|_1^{\nu_1})$ and
$o_p(\|p - p_0\|_1^{\nu_2})$, both are of smaller order than the first term $\delta^2 \asymp \{-\log(\|p - p_0\|_1)\}^{-1}$. Thus we have established that $W_2^2(f, f_0)
\lesssim \{-\log(\|p - p_0\|_1)\}^{-1}$ whenever $g$ is in the sieve space $Q_n$.

\begin{remark}\label{rem2}
  From Condition \ref{con2}, the tail of $g_0$ needs to decrease with a higher order as $\rho_1$ increases.
\end{remark}
\begin{proof}[Proof of Lemma \ref{M_exist}]
  We have shown in Theorem \ref{thm_p} that $\|p - p_0\|_1 =
  O_p(\epsilon_n)$, where $\epsilon_n = n^{-1/\eta} (\log n)^{(2B +
    2)/(2\eta)}$, and $\eta = 2B + 3$. It is sufficient to prove $b_n^{s/(s-2)} = o(\epsilon_n^{-2/5 + \nu})$ for some $\nu > 0$. From the value of $b_n$ and
    $\epsilon_n$, $b_n^{s/(s-2)} = o(\epsilon_n^{-2/5 + \nu})$ holds if $(2 B + 1)s/\{a_1^\prime (s-2)\} < (2/5 - \nu)$ for some $\nu > 0$. The latter is equivalent to $(2
    B + 1)s/\{a_1^\prime (s-2)\} < 2 /5$. Since $B = b a_1$, $b > 1/\rho_1$, after some manipulation it becomes
  \begin{equation*}
  a_1^\prime > \{s/(s - 2)\} (5/2 + 5 a_1/\rho_1)
\end{equation*}
Recall the natural relation $a_1 > a_1^\prime$. A large value for $\rho_1$ and $a_1^\prime$ will guarantee the validity of the above inequality.
\end{proof}

\subsection{Proofs of Lemmas \ref{tail_mix_gamma} and \ref{tail_p}}\label{sec:app:1}

\begin{proof}[Proof of Lemma \ref{tail_mix_gamma}]
   Recall that $g_{z,\mu}$ denotes a Gamma density with shape $z$ and rate $z/\mu$.
      \begin{eqnarray} \label{eq:app:1}
      \int_{\theta < 2E_z} g(\theta) d\theta \notag&=& 1 - \int_{\theta >
                                                       2E_z} g(\theta) d\theta \\
      \notag&=& 1 - \int_{\theta > 2E_z}
      \int_{\mu < E_z} g_{z,\mu}(\theta)dG(\mu) d\theta
      - \int_{\theta > 2E_z}
      \int_{\mu > E_z} g_{z,\mu}(\theta)dG(\mu) d\theta
        \\
      &=& 1 - \mbf{I} - \mbf{II}, \text{ namely.}
      \end{eqnarray}
    Apply Lemma \ref{gamma_ineq} to $g_{z,\mu}(\theta)$ with
    $\theta > 2E_z$, $\mu < E_z$ such that $\delta = 1/2$,
    \begin{eqnarray}
      \label{eq:app:2}
       \mbf{I} \notag&\leq& \int_{\theta > 2E_z} \theta^{-1} \exp\{- c(0.5) z
                       \theta/E_z\} d\theta \\
      &\lesssim& E_z^{-1} \int_{\theta > 2E_z} \exp\{- c(0.5) z
       \theta/E_z\} d\theta = z^{-1} \exp\{-2c(0.5)z\}.
   \end{eqnarray}
   On the other hand, for any $G \in \mathcal{G}_z$, $G(\mu >
   E_z) \leq z^{-A}$, hence
   \begin{eqnarray}
     \label{eq:app:3}
     \mbf{II} \leq \int_{\theta > 2E_z} g_{z, E_z}(\theta) d\theta
     \int_{\mu > E_z} dG(\mu) \lesssim z^{-A} .
   \end{eqnarray}
   Combining (\ref{eq:app:1}), (\ref{eq:app:2}) and (\ref{eq:app:3}), the desired result is proved.
\end{proof}

\begin{proof}[Proof of Lemma \ref{tail_p}]

  Throughout the proof, we assume that $t$ is any large number.

  Since $W^4 \leq C (X^4 + U^4)$, $P_0(W > t) \leq P_0(X > t/2) + P_0(U > t/2)$,
  \begin{equation}\label{eq:app:4}
    P_0\{W^4 I_{(|W| > t)}\} \lesssim P_0\{X^4 I_{(|X| > t/2)}\} + P_0\{U^4 I_{(|X| > t/2)}\} + P_0\{X^4 I_{(|U| > t/2)}\} + P_0\{U^4 I_{(|U| > t/2)}\}.
  \end{equation}
 Under Condition \ref{tail_p}, it can be easily shown that $P_0(X^4) < \infty$. Moreover, the fourth moment of Normal distribution exists, therefore, $P_0(U^4) <
 \infty$. It follows that the second and third term in (\ref{eq:app:4}) are upper bounded by the first and fourth term correspondingly, thus
  \begin{equation}
    \label{eq:app:5}
    P_0\{W^4 I_{(|W| > t)}\} \lesssim P_0\{X^4 I_{(|X| > t/2)}\} + P_0\{U^4 I_{(|U| > t/2)}\}.
  \end{equation}

Since $U$ follows a Normal distribution which has exponential tail, $P_0\{U^4 I_{(|U| > t/2)}\} \lesssim t^{-\rho_1 + 2}$. For the proof of the Lemma, it remains to
show the upper bound of the first term on the right hand side of (\ref{eq:app:5}).
  \begin{align*}
    P_0\{X^4 I_{(|X| > t/2)}\} &= P_0[P_0\{X^4 I_{(|X| > t/2)}|\theta\}] = \int (2\theta)^{-1} \int x^4 I_{\{(|x| > t/2) \cap (|x| < \theta)\}} dx g_0(\theta) d \theta  \\
    &\lesssim \int \theta^{-1} I_{(|\theta| > t/2)} \{\theta^5 - (t/2)^5\} g_0(\theta) d \theta \\
    &\leq \int \theta^4 I_{(|\theta| > t/2)} g_0(\theta) d \theta \lesssim (1 + t/2)^{-\rho_1 + 2} \lesssim t^{-\rho_1 + 2}.
  \end{align*}
  The second but last inequality is because of Condition \ref{con2}. This concludes the proof of Lemma \ref{tail_p}.
\end{proof}

\subsection{Major differences in proofs when the error is Laplace }\label{sec:app:7}

We walk through the steps in Section \ref{sec:app:6} and
\ref{sec:app:2} to prove Theorem \ref{thm_p} and Lemma
\ref{inv_ine_lap} correspondingly. Theorem \ref{thm_f} is again a
corollary of the two. Let us denote
$\psi_\sigma = (2 \sigma)^{-1} \exp(-|x|/\sigma)$ as the density of
Laplace distribution with location zero and scale
parameter $\sigma$.

Theorem 1 can be shown by modifying Section \ref{sec:app:6}. We can directly show that only when
deriving the KL type upper bounds in \eqref{eq:25} the
error distribution might play a role. However, it turns out \eqref{eq:25} is not
changing based on the details below.

The lines in \eqref{eq:13} and \eqref{eq:14} go through for any density, in
particular $\psi_\sigma$. So from \eqref{eq:15} it remains true that $h^2(p, p_0) \lesssim
z^{-1}$.

Also, the lower bound \eqref{eq:16} for $p(w)/p_0(w)$ on $|w| < E_z -
\delta_N - z^{-A}$ stays the same, while the lower bound \eqref{eq:18} for
$p(w)/p_0(w)$ on $|w| > E_z - \delta_N - z^{-A}$ changes slightly to $K \sigma^{-1} [1 - z^{-1} \exp\{-2c(0.5)z\} -z^{-A}] \exp(-4
           |w|/\sigma)$. These bounds would yield the upper
           bounds for the KL-type divergence, for $\lambda =
           K^{\prime} z^{-3A - b + 1/2 - M^2/2}$ ($K^{\prime}$ small enough),
\begin{eqnarray*}
P_0\{\log (p_0/p) I(p/p_0 \leq \lambda)\} \lesssim \int_{|w| > E_z - \delta_N - z^{-A}} |w| p_0(w) dw, \\ P_0[\{\log (p_0/p) \}^2 I(p/p_0 \leq \lambda)] \lesssim \int_{|w|
> E_z - \delta_N - z^{-A}} w^2 p_0(w) dw.
\end{eqnarray*}
Under condition \ref{con3} we can show that, along the same lines of proofs for Lemma
\ref{tail_p}, both terms on the right hand side above are bounded
by $z^{-1}$. Hence $P_0\{\log (p_0/p) I(p/p_0 \leq \lambda)\} \lesssim
z^{-1}, P_0[\{\log (p_0/p)\}^2 I(p/p_0 \leq \lambda)] \lesssim z^{-1}$.
Thus \eqref{eq:25} concludes.

Lemma \ref{inv_ine_lap} can be shown by modifying Section \ref{sec:app:2}. We revise
the definition of $g_\delta$ whose Fourier transform $\wh{g}_\delta$
equals $\wh{K}_\delta/\wh{\psi}_\sigma$, the ratio between the Fourier
transform of the kernel $K_\delta$ and that of the Laplace density
$\psi_\sigma$.

As shown in Section \ref{sec:app:2}, the upper bound (up to constant) for the term $\|f * K_\delta
- f_0 * K_\delta\|_2$ is $\|p - p_0\|_1
\|g_\delta\|_2$. The $L_2$ norm of $\|g_\delta\|_2$ is the same as the
$L_2$ norm its Fourier transform, which is bounded by $(1 + \sigma^2 \delta^{-2}) \delta^{-1/2} \asymp \delta^{-5/2}$.

So \eqref{eq:app:14} (the other two terms are not affected by
distribution of error) is modified to
\begin{equation*}
   W_2^2(f, f_0) \lesssim \delta^2 + M^{2 + 1/2} \delta^{-5/2} \|p - p_0\|_1 + M^{-(s - 2)} b_n^s.
\end{equation*}
Lemma \ref{M_exist} still holds for the same choice of $M$, that is,
there exists $\nu_1, \nu_2 > 0$ such that $M = \|p -
p_0\|_1^{-2/5 + \nu_1}$ and $M^{-(s - 2)}
b_n^s = o_p(\|p - p_0\|_1^{\nu_2})$ given that $\rho_1$ and
$a_1^\prime$ are large enough. Then we
can show that the right hand side of the above is $O(\|p -
p_0\|_1^\nu)$. However, the value of $\nu$ is determined by the
interplay of $\rho_1$ and $a_1^\prime$ and does not have a simple form
so we omit writing it out.
\end{appendix}

\clearpage\pagebreak\newpage
\pagestyle{empty}

\begin{table}[htbp]
  \centering
\begin{tabular}  {llcccc}
\hline\hline
&& \multicolumn{2}{c}{Normal} & \multicolumn{2}{c}{Laplace}
  \\*[-.20em]
   \cmidrule{3-4}  \cmidrule{5-6}
              && Constrained & & Constrained\\*[-.60em]
    \( n \) & & Bayes & Kernel & Bayes & Kernel\\
    \hline
    1000 & IAE & 0.107 (0.016) & 0.349 (0.105) & 0.104 (0.012) & 0.185 (0.046)  \\*[-.60em]
    & ISE & 0.072 (0.019) & 0.148 (0.043) & 0.067 (0.012) & 0.079 (0.022)       \\*[-.60em]
     & $W_2$ & 0.165 (0.050) & 0.589 (0.209) & 0.185 (0.054) & 0.285 (0.068)     \\
    5000 & IAE & 0.091 (0.014) & 0.277 (0.047) & 0.081 (0.013) & 0.127 (0.028) \\*[-.60em]
    & ISE & 0.072 (0.014) & 0.120 (0.022) & 0.062 (0.013) & 0.055 (0.015)      \\*[-.60em]
    & $W_2$ & 0.073 (0.017) & 0.445 (0.122) & 0.076 (0.019) & 0.185 (0.038)     \\
\hline  \hline
\end{tabular}
 \caption{\baselineskip=12pt  Comparison of our Constrained Bayes
   Deconvolution method (Constrained Bayes) and the deconvoluting
   kernel density estimator (Kernel). This is in the case when the
   target density is a t-density with 5 degrees of freedom and the measurement errors
   are from Normal or Laplace distribution with homoscedastic
   variance. The sample size is $n$, IAE is integrated absolute error, and ISE is integrated squared error. $W_2$ denotes the Wasserstein distance of order $2$.
   Numbers in parentheses are standard errors. Sample sizes greater than $5000$ yield similar results.
}
  \label{tab2}
\end{table}

\begin{table}[htbp]
  \centering
\begin{tabular}{llcccc}
\hline\hline
&& \multicolumn{2}{c}{Normal} & \multicolumn{2}{c}{Laplace}
  \\*[-.20em]
   \cmidrule{3-4}  \cmidrule{5-6}
              && Constrained & & Constrained\\*[-.60em]
    \( n \) & & Bayes & Kernel & Bayes & Kernel\\
    \hline
  1000 & IAE & 0.086 (0.010) & 0.394 (0.045) & 0.089 (0.010) & 0.220 (0.046) \\*[-.60em]
    & ISE & 0.053 (0.007) & 0.183 (0.023) & 0.052 (0.006) & 0.098 (0.024)  \\*[-.60em]
    & $W_2$ & 0.140 (0.042) & 0.456 (0.060) & 0.154 (0.039) & 0.324 (0.070) \\
    5000 & IAE & 0.057 (0.007) & 0.389 (0.022) & 0.062 (0.007) & 0.189
  (0.056)\\*[-.60em]
    & ISE & 0.035 (0.005) & 0.181 (0.012) & 0.038 (0.004) & 0.085 (0.027)\\*[-.60em]
    & $W_2$ & 0.076 (0.014) & 0.430 (0.032) & 0.083 (0.016) & 0.279 (0.052)\\
%    10000 & IAE & 0.054 (0.008) & 0.393 (0.014) \\*[-.60em]
%    & ISE & 0.032 (0.005) & 0.184 (0.007) \\
    %15000 & IAE & 0.052 (0.010) & 0.390 (0.013) \\*[-.60em]
    %& ISE & 0.031 (0.006) & 0.183 (0.007) \\
    \hline\hline
    \end{tabular}
\caption{\baselineskip=12pt  Comparison of our Constrained Bayes
  Deconvolution method (Constrained Bayes) and the deconvoluting
  kernel density estimator (Kernel). This is in the case when the target density is
   a t-density with 5 degrees of freedom and the measurement errors
   are from Normal or Laplace distribution with heteroscedastic variance. The sample size is $n$,
  IAE is integrated absolute error, and ISE is integrated
  squared error. $W_2$ denotes the Wasserstein
  distance of order $2$. Numbers in parentheses are standard errors. Sample sizes greater than $5000$ yield similar results.
}
  \label{tab4}
\end{table}

\begin{table}[htbp]
  \centering
\begin{tabular}{llcccc}
\hline\hline
&& \multicolumn{2}{c}{Normal} & \multicolumn{2}{c}{Laplace}
  \\*[-.20em]
   \cmidrule{3-4}  \cmidrule{5-6}
              && Constrained & & Constrained\\*[-.60em]
    \( n \) & & Bayes & Kernel & Bayes & Kernel\\
    \hline
    1000 & IAE & 0.326 (0.054) & 0.720 (0.115) & 0.258 (0.044) &
                                                                  0.393 (0.078)\\*[-.60em]
    & ISE & 0.390 (0.065) & 0.572 (0.083) & 0.307 (0.054) & 0.309 (0.075) \\*[-.60em]
    & $W_2$ & 0.109 (0.026) & 0.263 (0.075) &0.086 (0.020) & 0.136
                                                             (0.029)  \\*[-.60em]
    & Exceedance & 0.068 (0.021) & 0.178 (0.048) & 0.049 (0.016) &
                                                                   0.047 (0.025)\\
    5000 & IAE & 0.188 (0.033) & 0.656 (0.059) & 0.139 (0.018) &                                                              0.280 (0.049)\\*[-.60em]
    & ISE & 0.219 (0.038) & 0.530 (0.046) & 0.167 (0.019) & 0.217 (0.049)\\*[-.60em]
    & $W_2$ & 0.057 (0.011) & 0.222 (0.032) & 0.041 (0.011) & 0.087 (0.018) \\*[-.60em]
    & Exceedance & 0.026 (0.010) & 0.146 (0.019) & 0.014 (0.008) &
                                                                   0.023 (0.014)\\
%    10000 & IAE & 0.151 (0.021) & 0.638 (0.049) \\*[-.60em]
%    & ISE & 0.179 (0.024) & 0.524 (0.040) \\*[-.60em]
%    & Exceedance & 0.018 (0.007) & 0.137 (0.014) \\
%    15000 & IAE & 0.140 (0.017) & 0.626 (0.048) \\*[-.60em]
%   & ISE & 0.169 (0.018) & 0.514 (0.038) \\*[-.60em]
%    & Exceedance & 0.015 (0.005) & 0.133 (0.014) \\
    \hline\hline
  \end{tabular}
  \caption{\baselineskip=12pt
   Comparison of our Constrained Bayes Deconvolution method (Constrained Bayes),
   the deconvoluting kernel density estimator (Kernel). This is in the
   case when the target density is a
   mixture of t-density with 5 degrees of freedom and a Normal density
   with standard deviation $0.2$, and when the measurement errors are
   from Normal or Laplace distribution with homoscedastic variance. The sample size is $n$, IAE is integrated
   absolute error, ISE is integrated squared error and
   Exceedance is the absolute difference between the exceedance
   probability under the estimated and true densities. $W_2$ denotes the Wasserstein distance of order $2$. Numbers in parentheses are standard errors. Sample sizes
   greater than $5000$ yield similar results.}
  \label{tab6}
\end{table}

\begin{table}[htbp]
  \centering
\begin{tabular}{llcccc}
\hline\hline
&& \multicolumn{2}{c}{Normal} & \multicolumn{2}{c}{Laplace}
  \\*[-.20em]
   \cmidrule{3-4}  \cmidrule{5-6}
              && Constrained & & Constrained\\*[-.60em]
    \( n \) & & Bayes & Kernel & Bayes & Kernel\\
    \hline
    1000 & IAE & 0.452 (0.067) & 0.848 (0.053) & 0.359 (0.058) & 0.473
  (0.067)\\*[-.60em]
    & ISE & 0.532 (0.073) & 0.656 (0.033) & 0.422 (0.066) & 0.374 (0.084)\\*[-.60em]
    & $W_2$ & 0.176 (0.036) & 0.360 (0.085) & 0.147 (0.030) & 0.183 (0.051)\\*[-.60em]
    & Exceedance & 0.121 (0.027) & 0.259 (0.040) & 0.092 (0.021) &
                                                                   0.071 (0.048)\\
    5000 & IAE & 0.276 (0.047) & 0.820 (0.024) & 0.183 (0.028) & 0.357
  (0.091)\\*[-.60em]
    & ISE & 0.321 (0.057) & 0.639 (0.014) & 0.211 (0.029) & 0.279 (0.080)\\*[-.60em]
    & $W_2$ & 0.100 (0.017) & 0.309 (0.053) & 0.083 (0.017) & 0.140
                                                              (0.058)\\*[-.60em]
    & Exceedance & 0.065 (0.013) & 0.231 (0.018) & 0.048 (0.010) &
                                                                   0.049 (0.047)\\
%    10000 & IAE & 0.210 (0.033) & 0.811 (0.021) \\*[-.60em]
%    & ISE & 0.242 (0.038) & 0.634 (0.012) \\*[-.60em]
%    & Exceedance & 0.053 (0.009) & 0.220 (0.016) \\
%    15000 & IAE & 0.182 (0.019) & 0.803 (0.018) \\*[-.60em]
%    & ISE & 0.210 (0.020) & 0.629 (0.010) \\*[-.60em]
%    & Exceedance & 0.049 (0.007) & 0.214 (0.013) \\
    \hline\hline
  \end{tabular}
\caption{\baselineskip=12pt
   Comparison of our Constrained Bayes Deconvolution method (Constrained Bayes),
   the deconvoluting kernel density estimator (Kernel). This is in the
   case when the target density is a mixture of t-density with 5
   degrees of freedom and a Normal density with standard deviation
   $0.2$, and when the measurement errors are
   from Normal or Laplace distribution with heteroscedastic variance. The sample size is $n$, IAE is  integrated absolute error, ISE is  integrated squared error and
   Exceedance is the absolute difference between the exceedance probability under the estimated and true densities. $W_2$ denotes the Wasserstein distance of order
   $2$. Numbers in parentheses are standard errors. Sample sizes greater than $5000$ yield similar results.}
    \label{tab7}
  \end{table}

\begin{figure}[!ht]
\centering
\includegraphics[height=5in, width=5.5in, trim=1cm 1cm 1cm 1.5cm, clip=true]{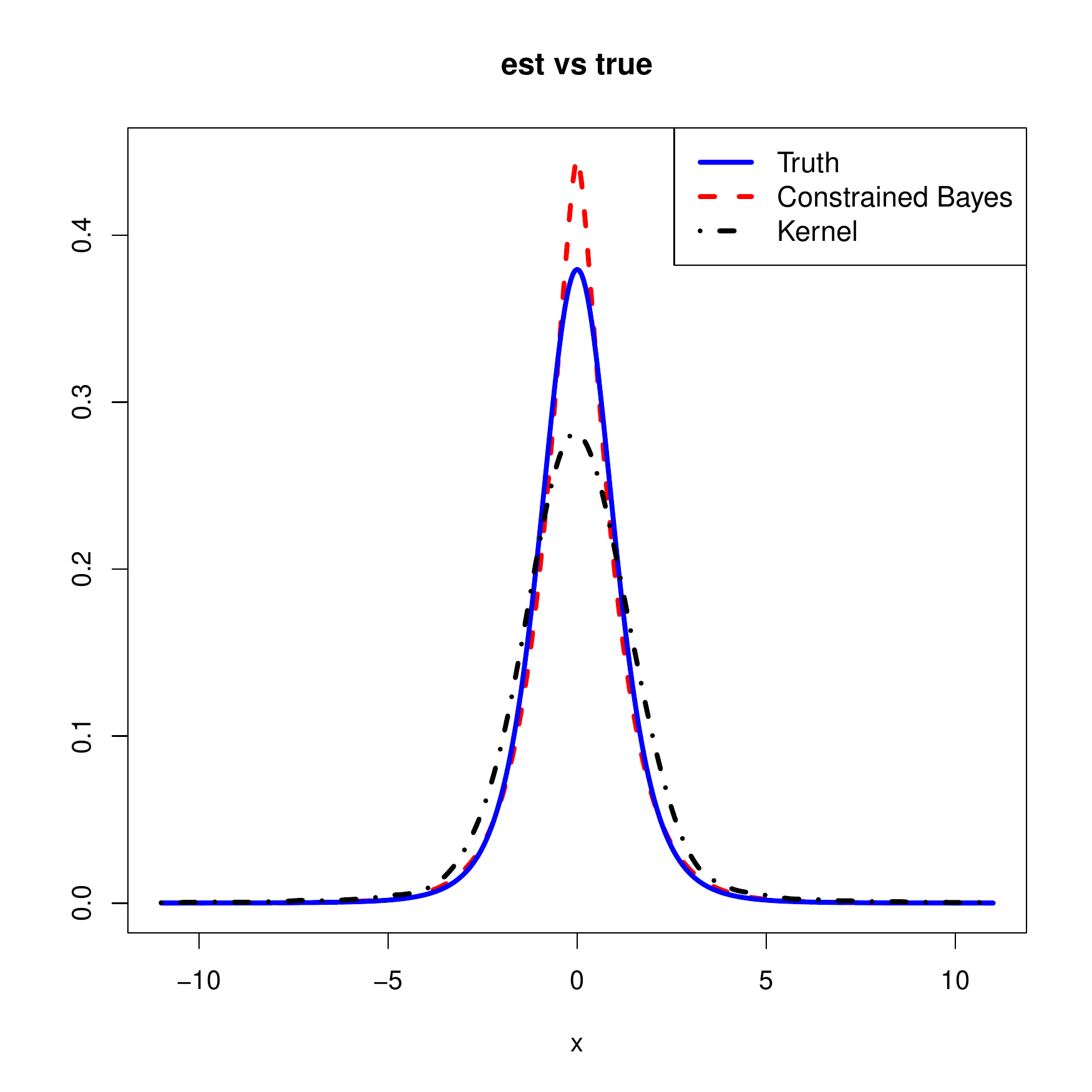}
\vskip -10pt
\caption{\baselineskip=12pt Mean density estimates for the
  homoscedastic Normal measurement error simulation of Section
  \ref{sec5.2} for sample size $n=5000$. Solid blue line is the truth (Truth, a t--density with 5 degrees of freedom), the dashed red line is our Constrained Bayes
  Deconvolution method (Constrained Bayes) and the dash-dotted black line is the deconvoluting kernel density estimator (Kernel).}
\label{sim_fig1}
\end{figure}

\begin{figure}[!ht]
\centering
\includegraphics[height=5in, width=5.5in, trim=1cm 1cm 1cm 1.5cm, clip=true]{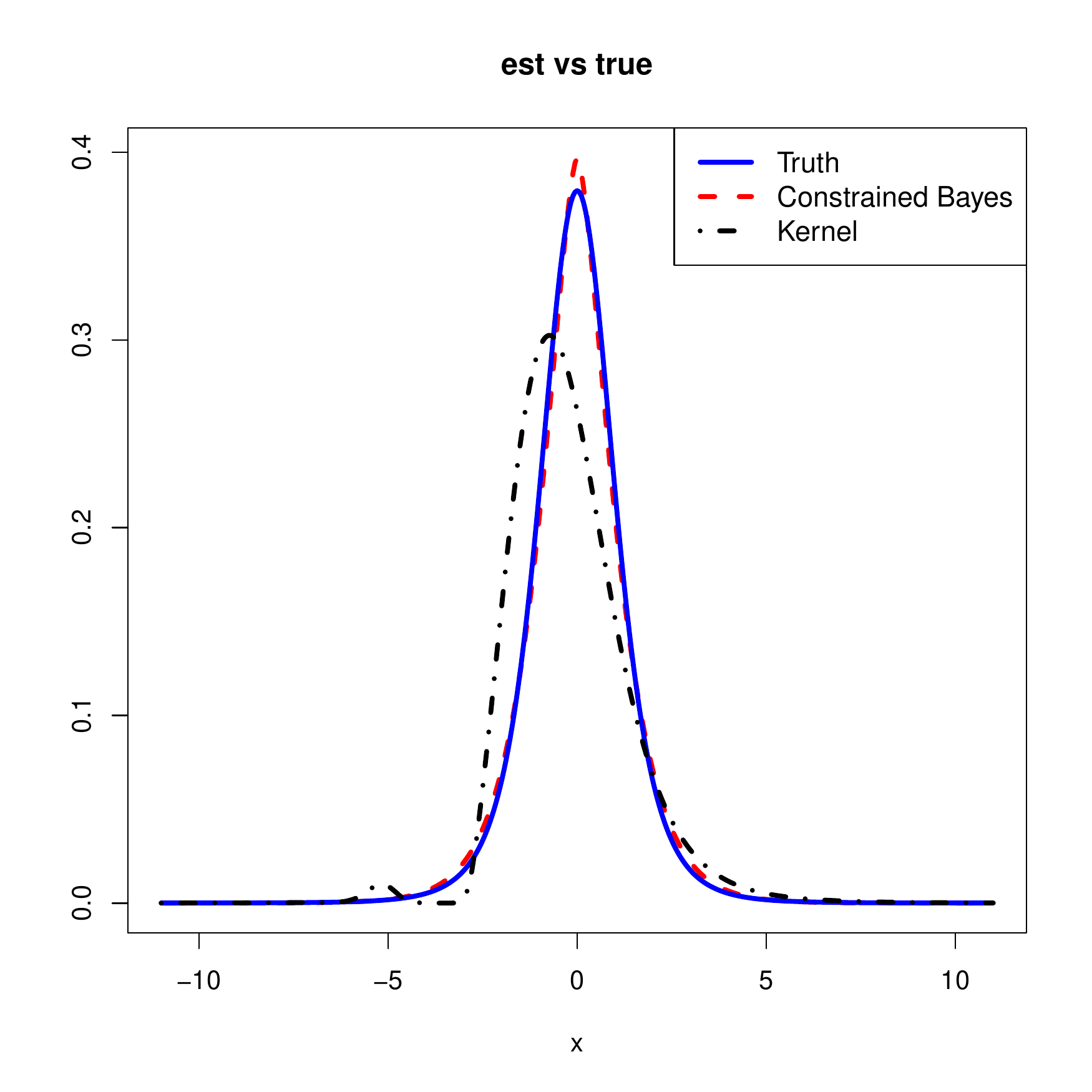}
\vskip -10pt
\caption{\baselineskip=12pt Mean density estimates for the
  heteroscedastic Normal measurement error simulation of Section
  \ref{sec5.2} for sample size $n=5000$. Solid blue line is the truth (Truth, a t--density with 5 degrees of freedom), the dashed red line is our Constrained Bayes
  Deconvolution method (Constrained Bayes) and the dash-dotted black line is the deconvoluting kernel density estimator (Kernel).}
\label{sim_fig5}
\end{figure}

\begin{figure}[!ht]
\centering
\includegraphics[height=5in, width=5.5in, trim=1cm 1cm 1cm 1.5cm, clip=true]{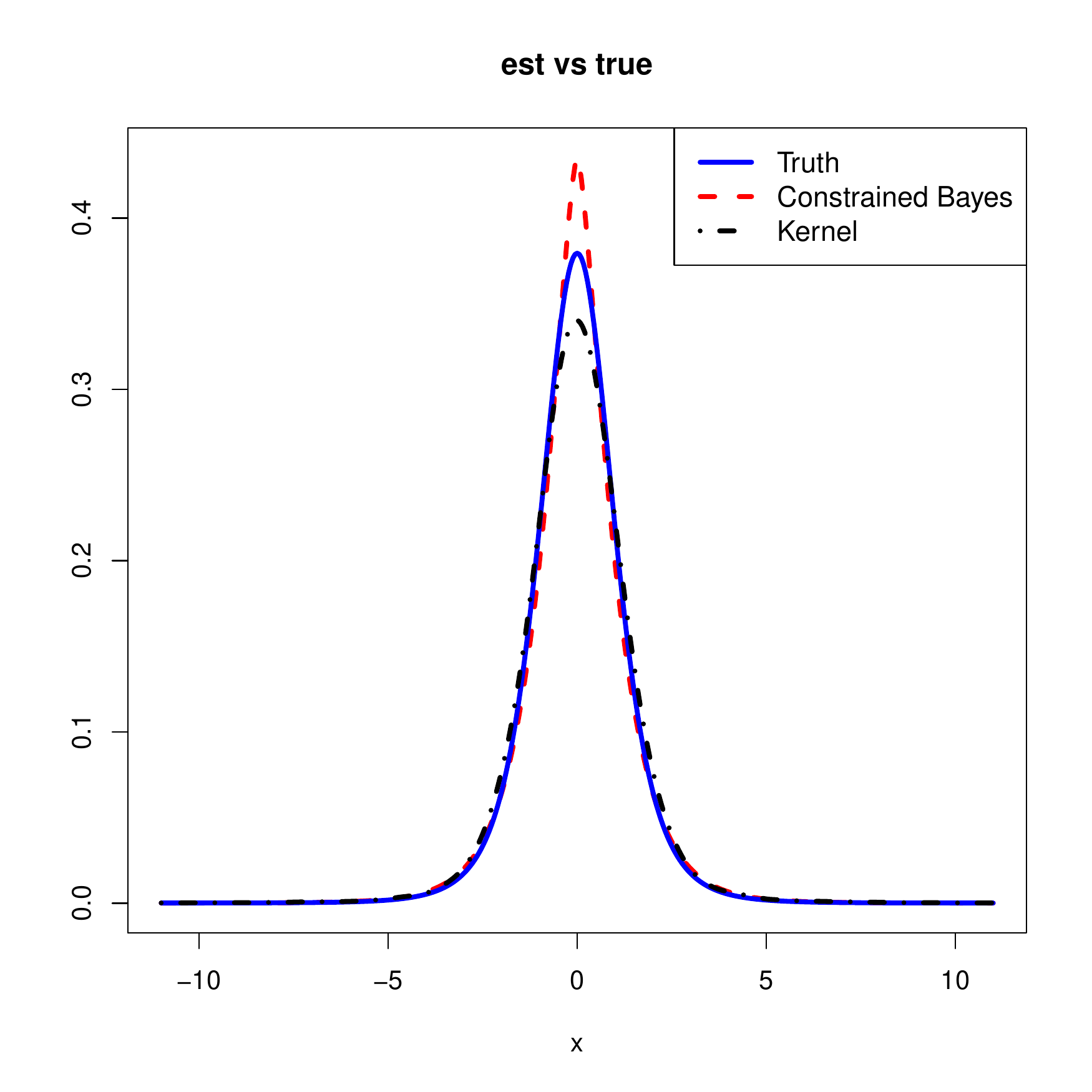}
\vskip -10pt
\caption{\baselineskip=12pt Mean density estimates for the
  homoscedastic Laplace measurement error simulation of Section
  \ref{sec5.2} for sample size $n=5000$. Solid blue line is the truth (Truth, a t--density with 5 degrees of freedom), the dashed red line is our Constrained Bayes
  Deconvolution method (Constrained Bayes) and the dash-dotted black line is the deconvoluting kernel density estimator (Kernel).}
\label{sim_fig2}
\end{figure}

\begin{figure}[!ht]
\centering
\includegraphics[height=5in, width=5.5in, trim=1cm 1cm 1cm 1.5cm, clip=true]{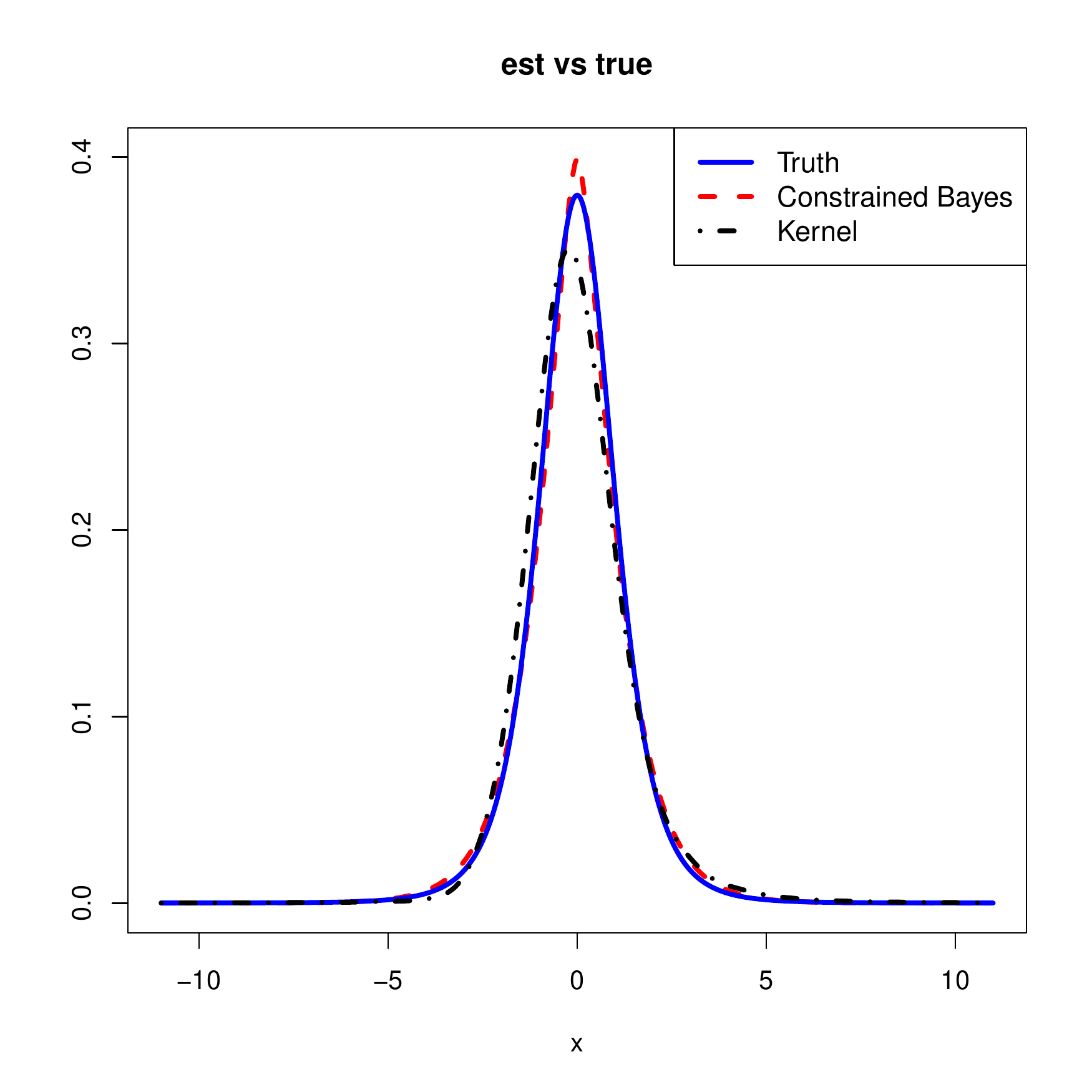}
\vskip -10pt
\caption{\baselineskip=12pt Mean density estimates for the
  heteroscedastic Laplace measurement error simulation of Section
  \ref{sec5.2} for sample size $n=5000$. Solid blue line is the truth (Truth, a t--density with 5 degrees of freedom), the dashed red line is our Constrained Bayes
  Deconvolution method (Constrained Bayes) and the dash-dotted black line is the deconvoluting kernel density estimator (Kernel).}
\label{sim_fig4}
\end{figure}

\begin{figure}[!ht]
\centering
\includegraphics[height=5in, width=5.5in, trim=1cm 1cm 1cm 1.5cm, clip=true]{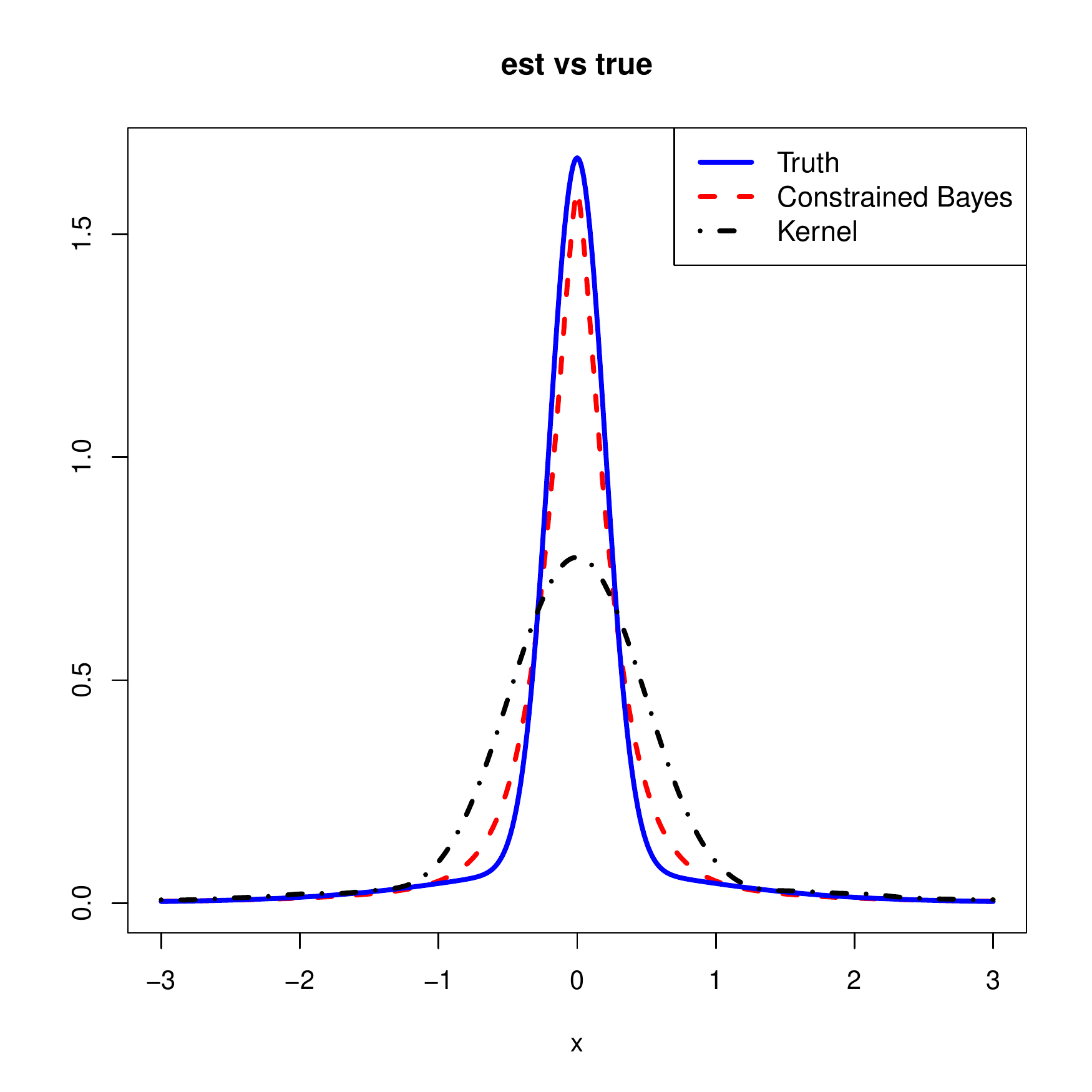}
\vskip -10pt
\caption{\baselineskip=12pt Mean density estimates for the
  homoscedastic Normal measurement error simulation of Section \ref{sec5.3} for sample size $n=5000$. Solid blue line is the truth (Truth, a mixture of a t--density
  with 5 degrees of freedom and a Normal density with standard deviation $0.2$), the dashed red line is our Constrained Bayes Deconvolution method (Constrained
  Bayes) and the dash-dotted black line is the deconvoluting kernel density estimator (Kernel).}
\label{sim_fig6}
\end{figure}

\begin{figure}[!ht]
\centering
\includegraphics[height=5in, width=5.5in, trim=1cm 1cm 1cm 1.5cm, clip=true]{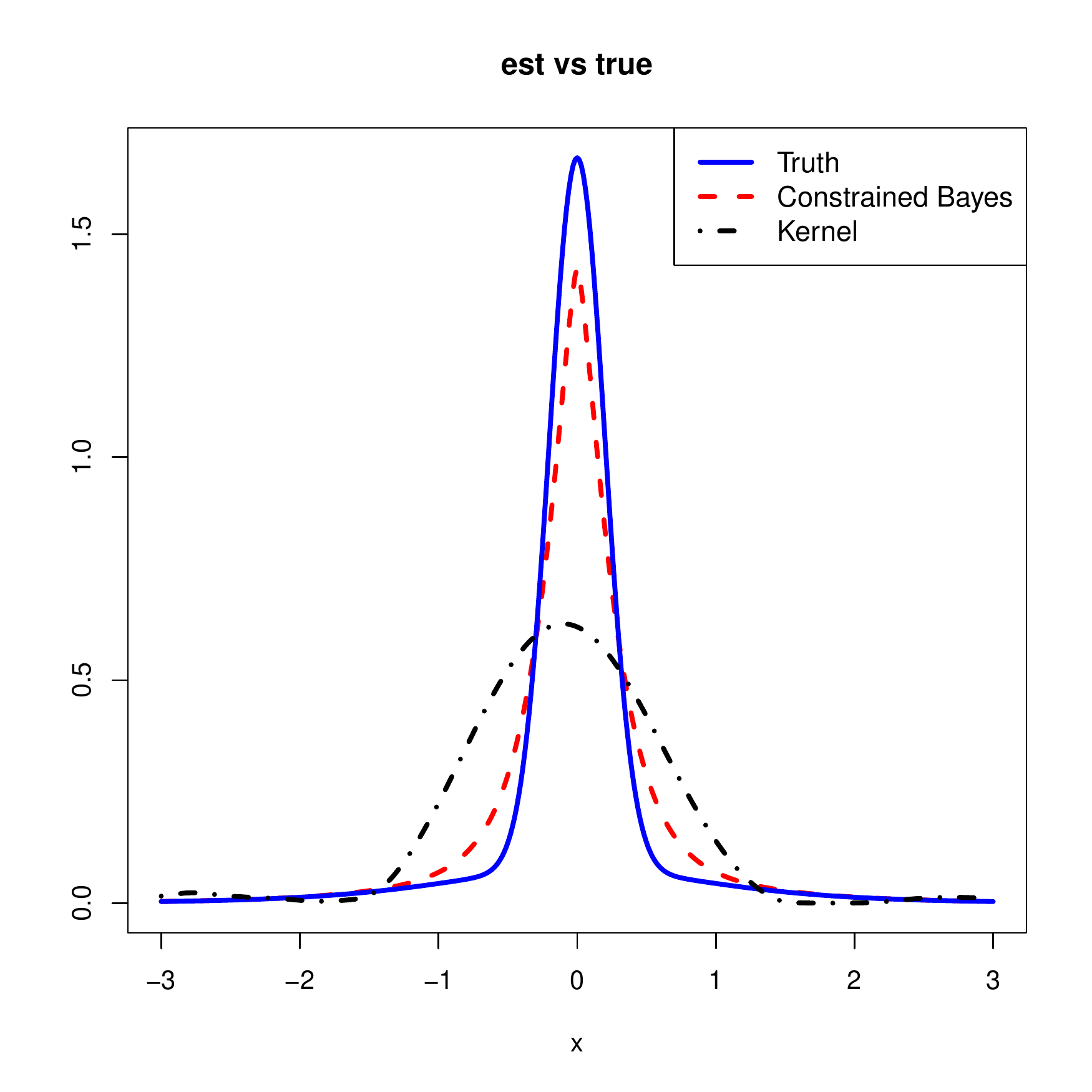}
\vskip -10pt
\caption{\baselineskip=12pt Mean density estimates for the
  heteroscedastic Normal measurement error simulation of Section \ref{sec5.3} for sample size $n=5000$. Solid blue line is the truth (Truth, a mixture of a t--density
  with 5 degrees of freedom and a Normal density with standard deviation $0.2$), the dashed red line is our Constrained Bayes Deconvolution method (Constrained
  Bayes) and the dash-dotted black line is the deconvoluting kernel density estimator (Kernel).}
\label{sim_fig3}
\end{figure}

\begin{figure}[!ht]
\centering
\includegraphics[height=5in, width=5.5in, trim=1cm 1cm 1cm 1.5cm, clip=true]{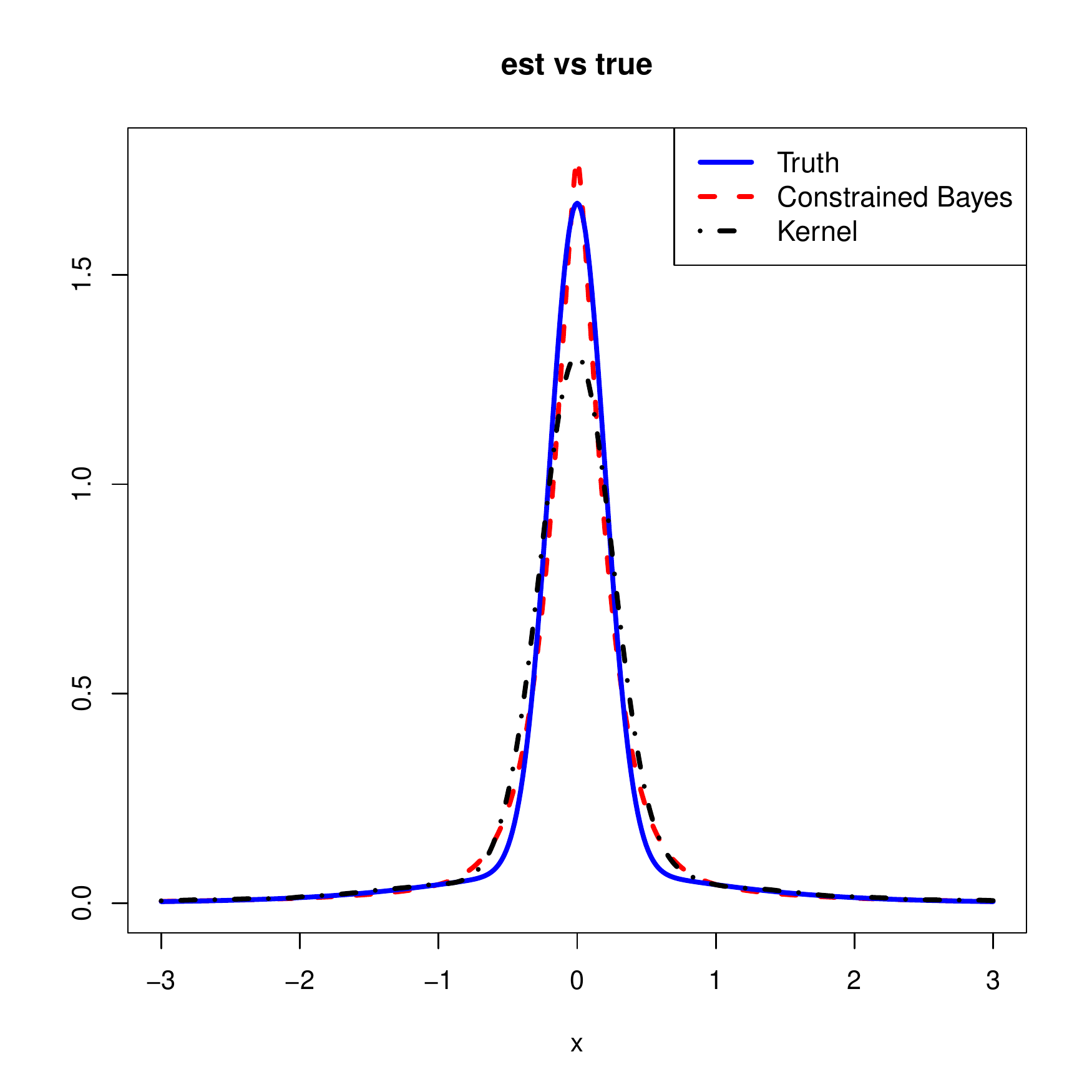}
\vskip -10pt
\caption{\baselineskip=12pt Mean density estimates for the
  homoscedastic Laplace measurement error simulation of Section \ref{sec5.3} for sample size $n=5000$. Solid blue line is the truth (Truth, a mixture of a t--density
  with 5 degrees of freedom and a Normal density with standard deviation $0.2$), the dashed red line is our Constrained Bayes Deconvolution method (Constrained
  Bayes) and the dash-dotted black line is the deconvoluting kernel density estimator (Kernel).}
\label{sim_fig7}
\end{figure}

\begin{figure}[!ht]
\centering
\includegraphics[height=5in, width=5.5in, trim=1cm 1cm 1cm 1.5cm, clip=true]{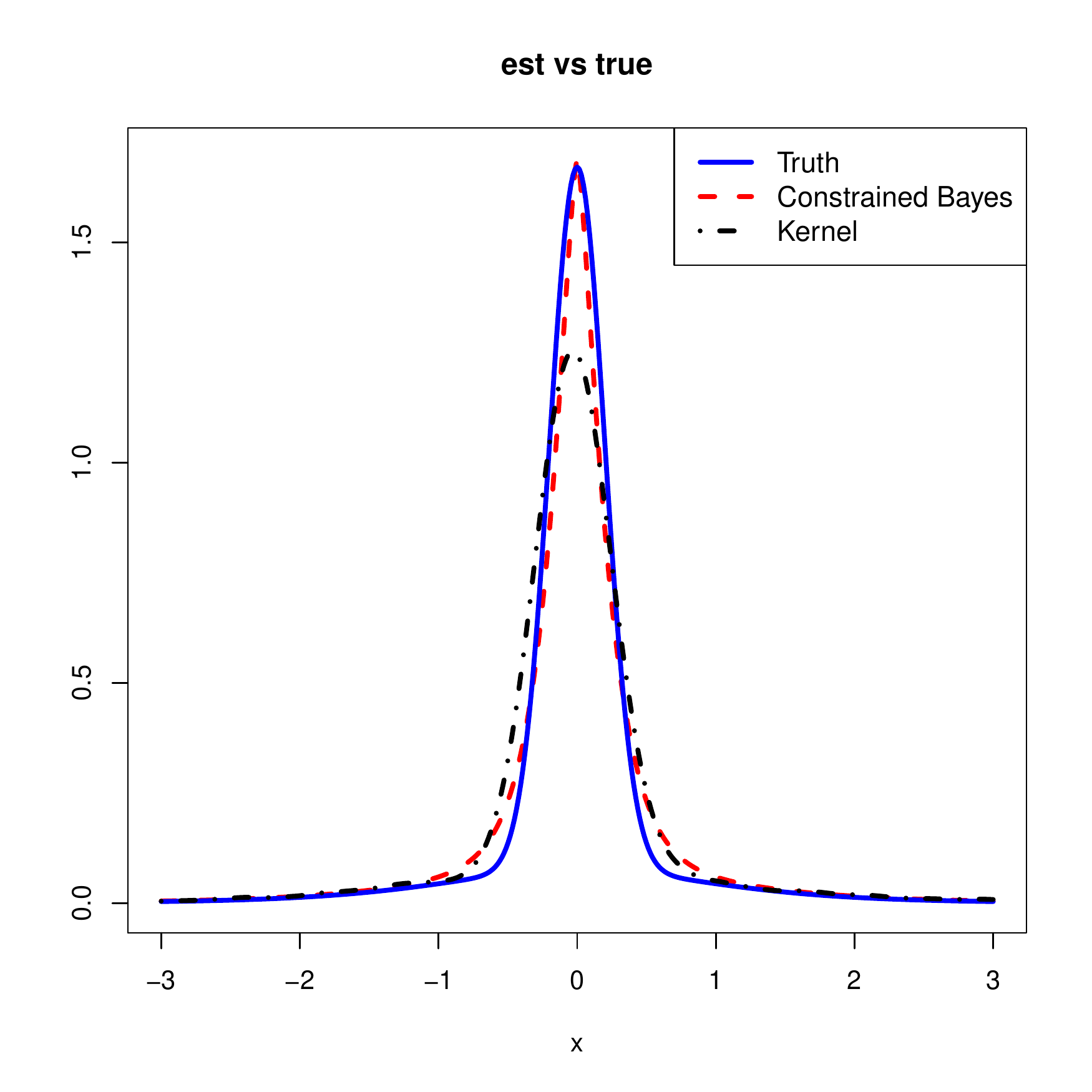}
\vskip -10pt
\caption{\baselineskip=12pt Mean density estimates for the
  heteroscedastic Laplace measurement error simulation of Section \ref{sec5.3} for sample size $n=5000$. Solid blue line is the truth (Truth, a mixture of a t--density
  with 5 degrees of freedom and a Normal density with standard deviation $0.2$), the dashed red line is our Constrained Bayes Deconvolution method (Constrained
  Bayes) and the dash-dotted black line is the deconvoluting kernel density estimator (Kernel).}
\label{sim_fig8}
\end{figure}

\begin{figure}[!ht]
\centering
\includegraphics[height=5in, width=5.5in, trim=1cm 1cm 1cm 1.5cm, clip=true]{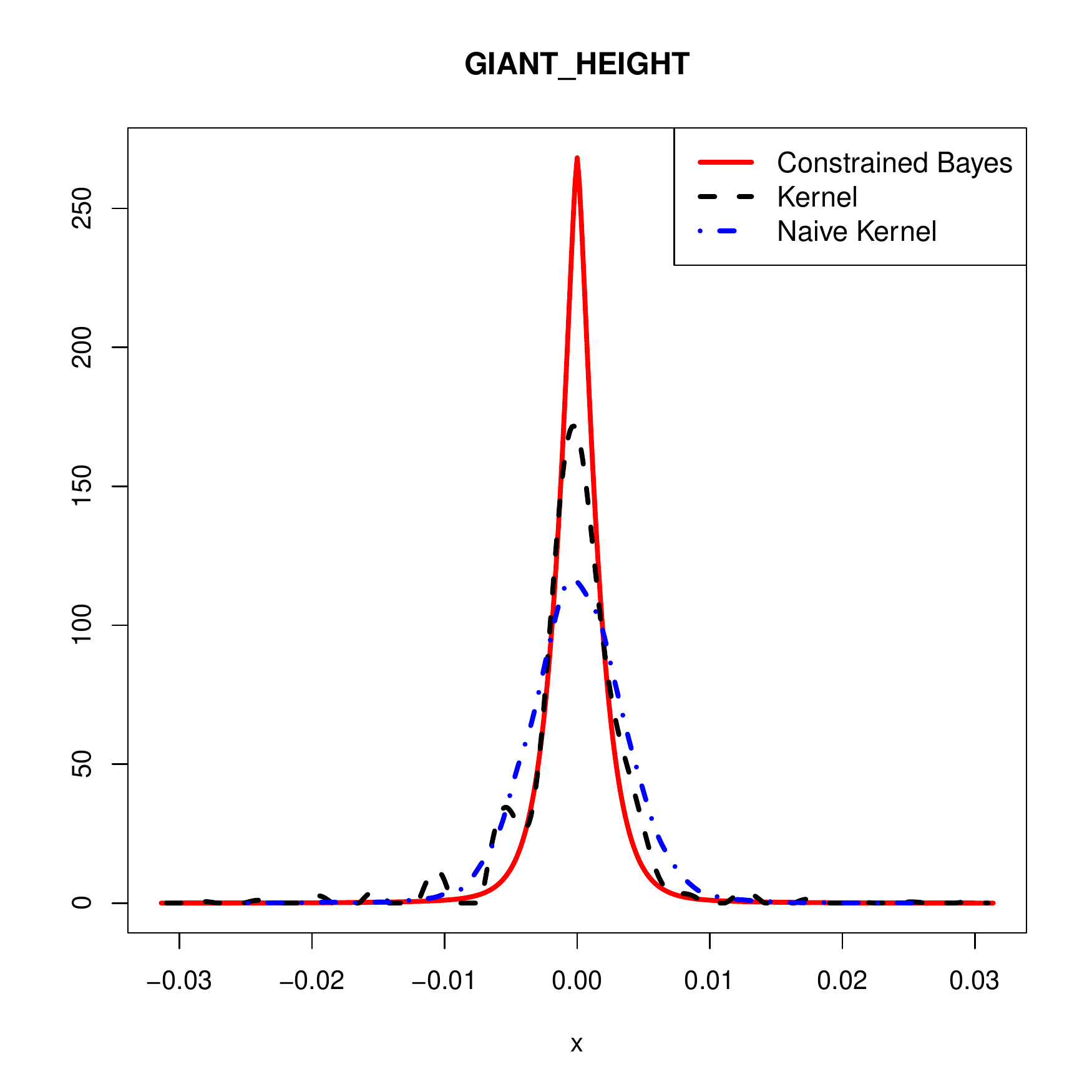}
\vskip -10pt
\caption{\baselineskip=12pt Density estimators for SNP related effect sizes in the GIANT Height data. The solid red line is our Constrained Bayes Deconvolution
method (Constrained Bayes), the dashed black line is the deconvoluting kernel density estimator, but the publicly available R programs are too slow to compute this
and have memory issue on the full data, so we used a 1\% subsample of the data. The dash-dotted blue line is the naive ordinary kernel density estimator ignoring
measurement error. The results for the first and third estimators are similar on the same 1\% subsample are similar to the full data estimates.}
\label{fig3}
\end{figure}

\begin{figure}[!ht]
\centering
\includegraphics[height=5in, width=5.5in, trim=1cm 1cm 1cm 1.5cm, clip=true]{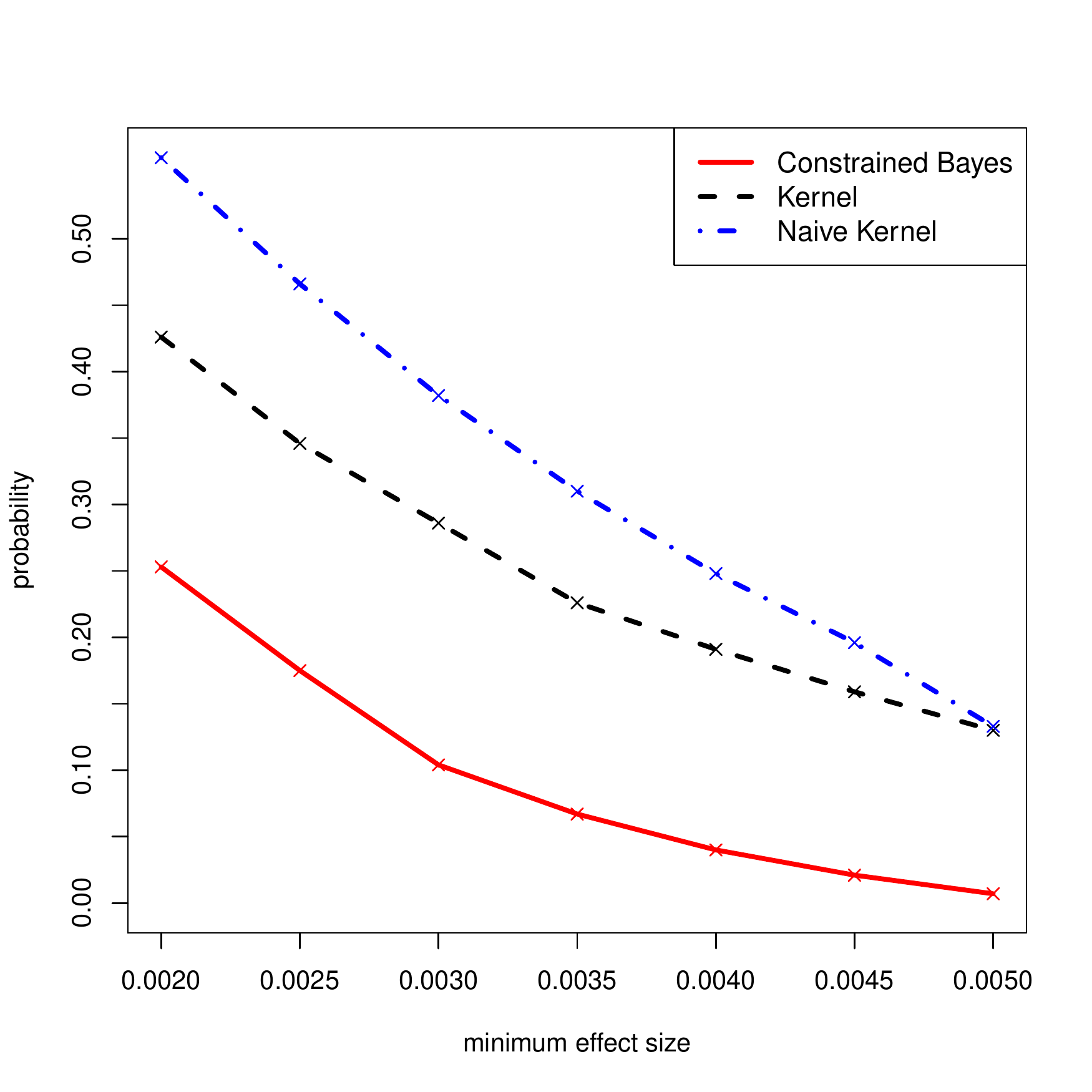}
\vskip -10pt
\caption{\baselineskip=12pt The estimated probability of effect sizes (y-axis) associated with height that the absolute value of effect sizes is greater than the minimum
effect size versus the minimum effect size (x-axis) at some discrete choices as in Table \ref{prob_small_eff_GIANT}. The solid red line is based on our Constrained
Bayes Deconvolution method (Constrained Bayes), the dashed black line is based on the deconvoluting kernel density estimator (Kernel). The dash-dotted blue line is
based on the naive ordinary kernel density estimator (Naive Kernel).}
\label{fig8}
\end{figure}

\clearpage\pagebreak\newpage
\pagestyle{fancy}
\fancyhf{}
\rhead{\bfseries\thepage}
\lhead{\bfseries NOT FOR PUBLICATION SUPPLEMENTARY MATERIAL}
\begin{center}
{\LARGE{\bf Supplementary Material to\\ {\it Nonparametric Bayesian Deconvolution of a Symmetric Unimodal Density, with Application to Genomics}}}
\end{center}

\baselineskip=12pt

\vskip 2mm
\begin{center}
Ya Su \\
Department of Statistics, University of Kentucky, Lexington, KY
40536-0082, U.S.A., ya.su@uky.edu\\
\hskip 5mm\\
Anirban Bhattacharya\\
Department of Statistics, Texas A\&M University, College Station, TX
77843-3143, U.S.A., anirbanb@stat.tamu.edu\\
\hskip 5mm \\
Yan Zhang and Nilanjan Chatterjee\\
Departments of Biostatistics and Oncology, Johns Hopkins University, Baltimore, Maryland 21205, U.S.A., yzhan284@jhu.edu and nchatte2@jhu.edu\\
\hskip 5mm\\
Raymond J. Carroll\\
Department of Statistics, Texas A\&M University, College Station, TX 77843-3143, U.S.A. and School of Mathematical and Physical Sciences, University of
Technology Sydney, Broadway NSW 2007, Australia, carroll@stat.tamu.edu\\
\end{center}

\setcounter{figure}{0}
\setcounter{equation}{0}
\setcounter{page}{1}
\setcounter{table}{1}
\setcounter{section}{0}
\renewcommand{\thefigure}{S.\arabic{figure}}
\renewcommand{\theequation}{S.\arabic{equation}}
\renewcommand{\thesection}{S.\arabic{section}}
\renewcommand{\thesubsection}{S.\arabic{section}.\arabic{subsection}}
\renewcommand{\thepage}{S.\arabic{page}}
\renewcommand{\thetable}{S.\arabic{table}}
\baselineskip=17pt

\section{Overview}\label{sec.S1}

In this supplement, we present a microarray example in Section \ref{sec.S1.1} that has the same structure as that of genome wide association studies (GWAS) in
Section \ref{sec:data} of the main paper. Section \ref{sec.S1.2}
contains some additional simulation results as a complement of the
setup in Section \ref{sec5.3}. In
addition, we also provide our R code that we used in our analyses. This code uses the RCPP package in R to make our calculations feasible for GWAS.

\subsection{Microarray Data}\label{sec.S1.1}

The data we use arise from a complicated experimental design, see
\cite{davidson2004chemopreventive}. A total of 59 male Sprague-Dawley
rats were injected either with saline or the potent carcinogen
Azoxymethane (AOM), and then sacrificed. We measured gene expression values for 8,038 genes, log2 transformed them, and then centered and standardized them.
The treatment (AOM versus saline) was then regressed on the gene expressions, resulting in data similar to that of Section \ref{sec:data}. There were 4514 genes that
had a statistically significant treatment effect with a Bonferroni p-value $<$ 0.05. The effect sizes had a mean of $-0.009$, a skewness of $0.018$ and a kurtosis of
$3.56$. The variabilities of the regression of treatment on the gene expressions had a minimum of 0.008, a maximum of 0.169, and a 5$\th$ percentile of 0.016.

Our Constrained Bayes Deconvolution estimator was applied to the effect sizes associated with treatment. We ran 5000 MCMC iterations under the same
hyperparameters used in the simulation sections. We also implemented the rescaled kernel deconvolution estimator in  \cite{delaigle2008density} based on code
available at Aurore Delaigle's web site. In addition, we computed the naive kernel density estimator which ignores measurement error, available in the R package
KernSmooth. The results are given in Figure \ref{figS.1}. Here we see the same phenomenon seen in the heteroscedastic simulations (Section \ref{sec5.2}) and the
GIANT height data (Section \ref{sec:GIANT}), namely that the Constrained Bayes estimator recognizes more clearly that many of the effect sizes are small, and hence
the density estimate is much more peaked near zero. Another way of writing this is that the kernel methods think there are a more genes with larger effect sizes.

\begin{figure}[!ht]
\centering
\includegraphics[height=5in, width=5.5in, trim=1cm 1cm 1cm 1.5cm, clip=true]{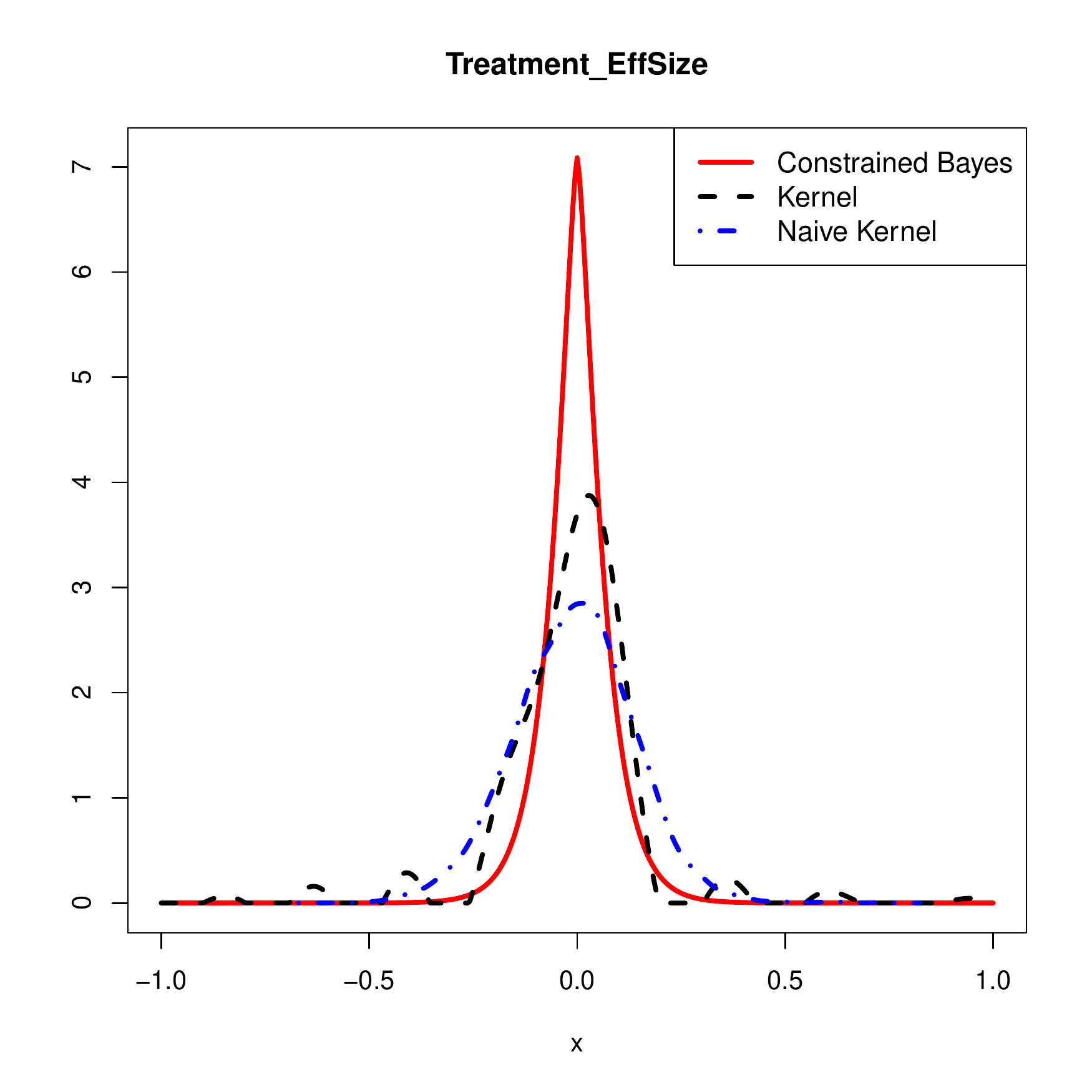}
\vskip -10pt
\caption{\baselineskip=12pt Density estimators for treatment effect sizes in the microarray data of Section \ref{sec.S1.1}. The solid red line is our Constrained
Bayesian method (Constrained Bayes). The dashed black line is the deconvoluting kernel density estimator that recognizes measurement error and potential
heteroscedasticity. The dash-dotted blue line is the naive ordinary kernel density estimator ignoring measurement error.}
\label{figS.1}
\end{figure}

\begin{table}[htbp]
\centering
\begin{tabular}{lcccccc}
\hline\hline
  &\multicolumn{6}{c}{Minimum effect size}\\
  \cmidrule(lr){2-7}
Estimator             & 0.01  & 0.05 & 0.1 & 0.15 & 0.2 & 0.25\\
\hline
Constrained Bayes     & 0.859 & 0.425  & 0.150 & 0.038 & 0.000 & 0.000  \\*[-.60em]
Kernel                & 0.926 & 0.643  & 0.357 & 0.176 & 0.092 & 0.067   \\
\hline\hline
\end{tabular}
\caption{\baselineskip=12pt Comparison of estimated probability of effect sizes associated with treatment that the absolute value of effect sizes is greater than the
minimum effect size under our constrained Bayesian method (Constrained Bayes), the deconvoluting kernel density estimator (Kernel) when rats with multiple arrays
have their expressions averaged, which ends up with 59 observations.}
\label{prob_small_eff_59_Treatment_First_Array}
\end{table}

\begin{figure}[!ht]
\centering
\includegraphics[height=5in, width=5.5in, trim=1cm 1cm 1cm 1.5cm, clip=true]{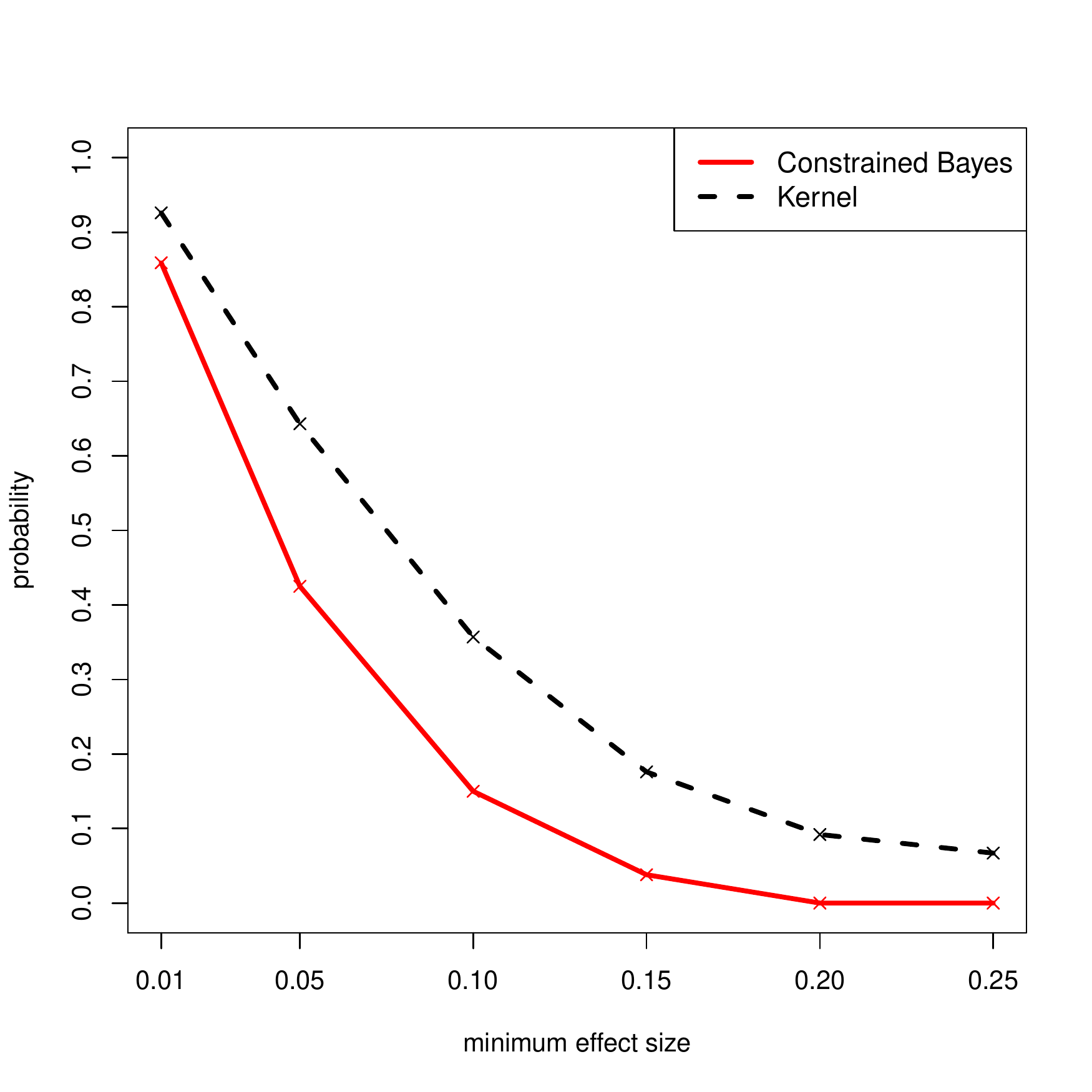}
\vskip -10pt
\caption{\baselineskip=12pt The estimated probability of effect sizes associated with treatment that the absolute value of effect sizes is greater than the minimum effect
size versus the minimum effect size at some discrete choices as in Table \ref{prob_small_eff_59_Treatment_First_Array}. The solid red line is based on our
constrained Bayesian method (Constrained Bayes), the dashed black line is based on the deconvoluting kernel density estimator (Kernel).}
\label{figS.5}
\end{figure}

\subsection{Details about analysis of GIANT Height data}\label{sec.S1.3}
In this section, we report the procedures for selecting independent
SNPs and performing Monte Carlo integration involved in the projection
formula, in Section \ref{sec:GIANT}.

We select a subset of independent SNPs based on linkage disequilibrium (LD) clumping method using PLINK software \citep{purcell2007plink}. LD clumping
typically sorted SNPs according to the importance (p-values) of SNPs, then took the most significant SNPs and removed SNPs that are correlated with this SNP
(squared correlation larger than 0.1) in the window of 1MB base pair
distance. Then it went on with the next most significant SNPs that had
not been removed yet. Using the data set of
  \cite{allen2010hundreds}, the above LD clumping procedure yields
$K = 80349$ independent SNPs.

The projection formula requires an integration with respect to the
density of true effect sizes, $f(\beta)$. Since we do not have a
closed form for $f(\beta)$, we can borrow information from posterior samples of
$\beta_{ij}$, the subscripts $i$ and $j$ indicate the effect sizes
corresponding to the $i$th SNP and in the $j$th MCMC iteration, for $i =
1, \ldots, K$, $j = 1, \ldots, N$ ($K$ and $N$ represent the total
number of SNPs and MCMC iterations). The following steps
are performed to complete the calculations for predicting the expected
number of significant SNPs:
\begin{enumerate}
\item Hypothesize a new sample size $n_{\text{new}}$.
\item To reduce the correlation caused by MCMC chains, we adopt an
  aggressive thinning at every $50$th iteration. For our analysis of
  Height data, the
  original MCMC chain contains $50000$ iterations (burn-ins excluded),
  hence $N = 1000$.
\item For any fixed $j$, compute the expected number of significant SNPs, $\sum_{i=1}^{K}  \mbox{pow}_{\sigma,
    \alpha}(\beta_{ij})$, where $\mbox{pow}_{\sigma,
    \alpha}(\beta) = 1 - \Phi(z_{\alpha/2} -
  n_{\text{new}}^{1/2}\sigma^{-1}\beta) + \Phi(-z_{\alpha/2} -
  n_{\text{new}}^{1/2}\sigma^{-1}\beta)$.
\item Repeat Step 3 for $j = 1,\ldots,N$ times. We can obtain posterior samples
  of the predicted values, and thus, Table \ref{projection_GIANT}.
\end{enumerate}

\subsection{Additional Simulation: The Distribution of $X$ has a Tight Peak Around Zero}\label{sec.S1.2}

We changed the data generating model in Section \ref{sec5.3} to
$\sigma_{00} = 0.1$. Specifically, We implement a mixing of a
$\Normal(0, \sigma_{00}^2)$ and a $t$-distribution with $5$ degrees of
freedom for the second component, with mixing probabilities $0.8$ and
$0.2$ respectively. We choose the small value $\sigma_{00} = 0.1$ so
that the mixing density has a even sharper peak around zero compared
to Section \ref{sec5.3}. The additional simulation has only been implemented for
normally-distributed error. A similar pattern should be expected when
the error distribution is Laplace based on the existing numerical results in \Section
\ref{sec5.3}.

We first consider a homoscedastic error setup, where $\sigma_i^2 = 0.6^2$ as in Section \ref{sec5.3}. See Figure \ref{fig6} for the result of the averaged density over
$100$ simulations in this setting with $n = 5000$. The numerical comparison for our Constrained Bayes method and the kernel method is given in Table
\ref{S1:tab6}.

\begin{table}[htbp]
  \centering
\begin{tabular}  {llcc}
\hline\hline
              && Constrained & \\*[-.60em]
    \( n \) & & Bayes & Kernel \\
    \hline
    1000 & IAE & 0.730 (0.041) & 1.069 (0.082) \\*[-.60em]
    & ISE & 1.159 (0.059) & 1.068 (0.061) \\*[-.60em]
    & Exceedance & 0.235 (0.028) & 0.415 (0.052) \\
    5000 & IAE & 0.570 (0.041) & 1.018 (0.053) \\*[-.60em]
    & ISE & 0.916 (0.065) & 1.035 (0.040) \\*[-.60em]
    & Exceedance & 0.147 (0.019) & 0.382 (0.030) \\
    10000 & IAE & 0.508 (0.036) & 1.006 (0.047) \\*[-.60em]
    & ISE & 0.820 (0.061) & 1.031 (0.035) \\*[-.60em]
    & Exceedance & 0.120 (0.015) & 0.375 (0.025) \\
    15000 & IAE & 0.474 (0.046) & 0.998 (0.046) \\*[-.60em]
    & ISE & 0.767 (0.078) & 1.023 (0.032) \\*[-.60em]
    & Exceedance & 0.107 (0.015) & 0.369 (0.025) \\
  \hline\hline
  \end{tabular}
  \caption{\baselineskip=12pt
   Comparison of our constrained Bayesian method (Constrained Bayes), the deconvoluting kernel density estimator (Kernel). This is in the first case of Section
   \ref{sec.S1.2}, when the target density is a mixture of t-density with 5 degrees of freedom and a Normal density with standard deviation $0.1$, and when the
   measurement errors are homoscedastic. The sample size is $n$, IAE is integrated absolute error, ISE is integrated squared error and Exceedance is the absolute
   difference between the exceedance probability under the estimated density and that under the true density. Numbers in parentheses are standard errors.}
  \label{S1:tab6}
\end{table}

We implement the heteroscedastic and select $\sigma_i^2$ as in Section \ref{sec5.3}, specifically, $\sigma_i^2 = (0.75 + X_i/4)^2$. Figure \ref{fig7} shows the
estimated density averaged over $100$ simulated data sets with $n = 5000$. The numerical comparison for our Constrained Bayes method and the kernel method is
given in Table \ref{S1:tab7}.

    \begin{table}[htbp]
  \centering
\begin{tabular}  {llccc}
\hline\hline
              && Constrained & \\*[-.60em]
    \( n \) & & Bayes & Kernel \\
    \hline
    1000 & IAE & 0.847 (0.053) & 1.175 (0.033) \\*[-.60em]
    & ISE & 1.307 (0.063) & 1.138 (0.021) \\*[-.60em]
    & Exceedance & 0.311 (0.033) & 0.497 (0.026) \\
    5000 & IAE & 0.668 (0.044) & 1.159 (0.016) \\*[-.60em]
    & ISE & 1.067 (0.062) & 1.126 (0.010) \\*[-.60em]
    & Exceedance & 0.216 (0.023) & 0.480 (0.012) \\
    10000 & IAE & 0.578 (0.040) & 1.154 (0.015) \\*[-.60em]
    & ISE & 0.935 (0.061) & 1.123 (0.008) \\*[-.60em]
    & Exceedance & 0.177 (0.018) & 0.473 (0.010) \\
    15000 & IAE & 0.532 (0.045) & 1.149 (0.012) \\*[-.60em]
    & ISE & 0.862 (0.074) & 1.119 (0.006) \\*[-.60em]
    & Exceedance & 0.159 (0.017) & 0.469 (0.009) \\
    \hline\hline
  \end{tabular}
\caption{\baselineskip=12pt
   Comparison of our constrained Bayesian method (Constrained Bayes),
   the deconvoluting kernel density estimator (Kernel). This is in the
   second case of Section \ref{sec.S1.2}, when the target density is a
   mixture of t-density with 5 degrees of freedom and a Normal density
   with standard deviation $0.1$, and when the measurement errors are
   heteroscedastic. The sample size is $n$, IAE is integrated
   absolute error, ISE is integrated squared error and Exceedance is the absolute difference between the exceedance probability under the estimated density and that
   under the true density. Numbers in parentheses are standard errors.}
    \label{S1:tab7}
  \end{table}

\begin{figure}[!ht]
\centering
\includegraphics[height=5in, width=5.5in, trim=1cm 1cm 1cm 1.5cm, clip=true]{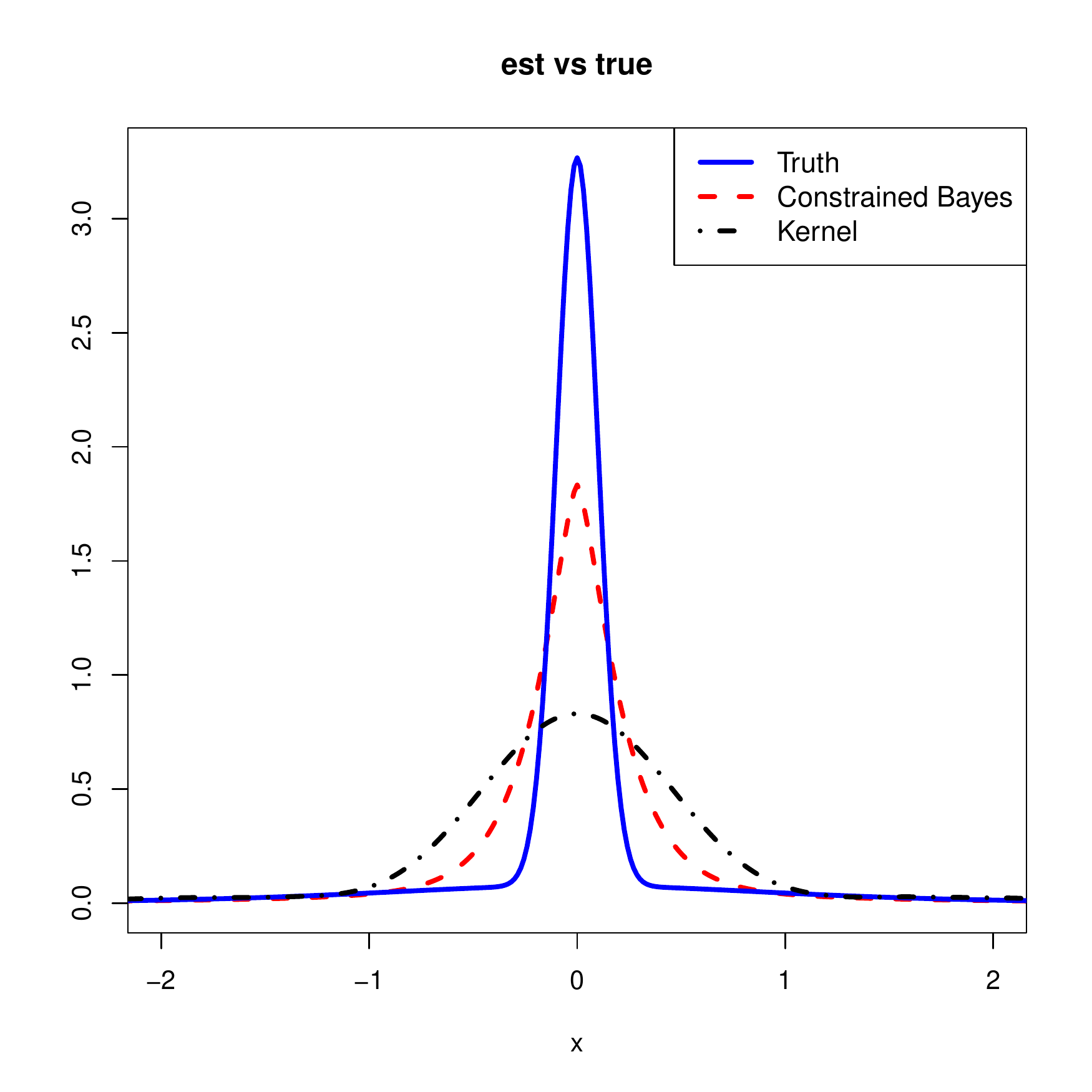}
\vskip -10pt
\caption{\baselineskip=12pt Mean density estimates for the homoscedastic simulation of Section \ref{sec.S1.2} for sample size $n=5000$. Solid blue line is the truth
(Truth, a mixture of a t--density with 5 degrees of freedom and a Normal density with standard deviation $0.1$), the dashed red line is our constrained Bayesian
method (Constrained Bayes) and the dash-dotted black line is the deconvoluting kernel density estimator (Kernel).}
\label{fig6}
\end{figure}

\begin{figure}[!ht]
\centering
\includegraphics[height=5in, width=5.5in, trim=1cm 1cm 1cm 1.5cm, clip=true]{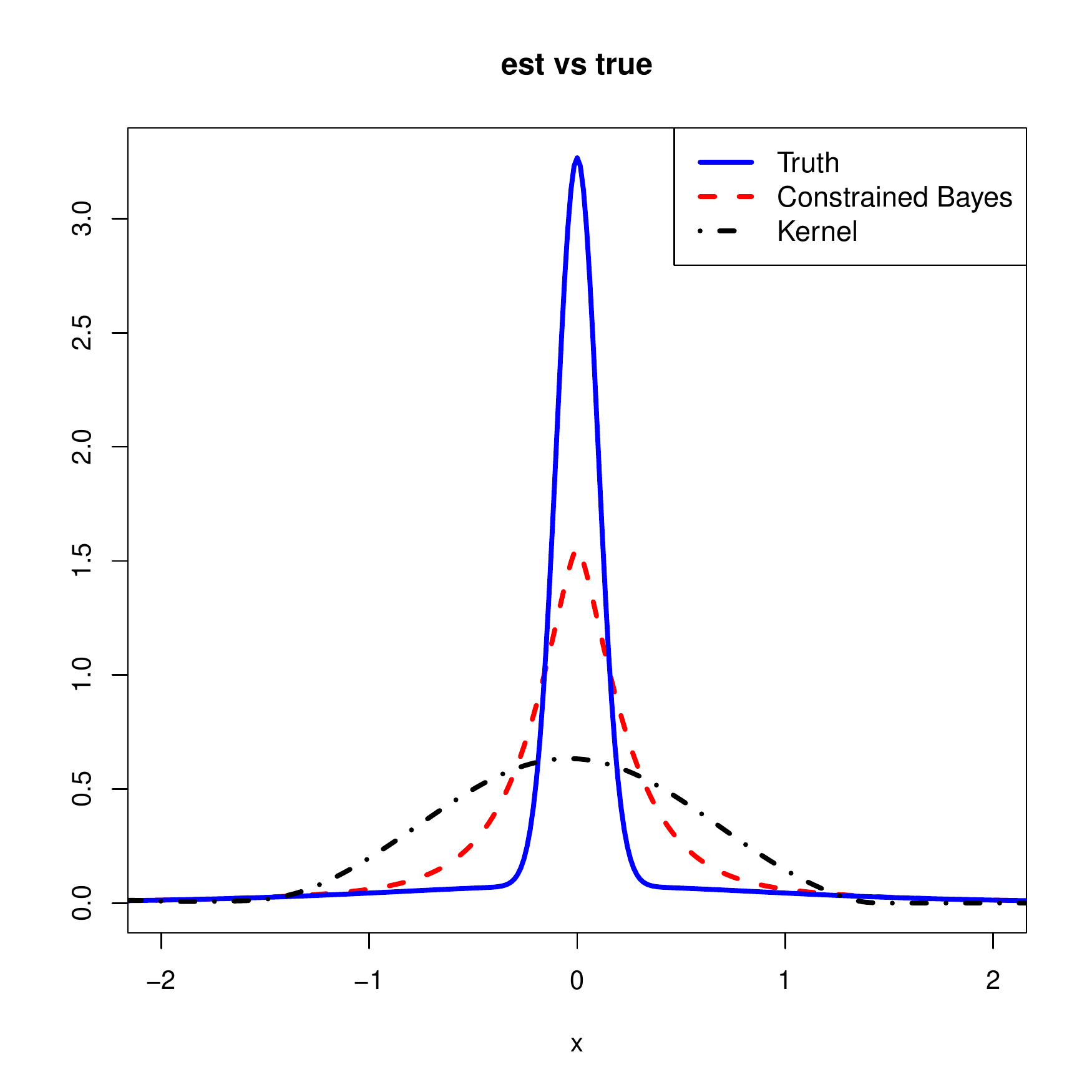}
\vskip -10pt
\caption{\baselineskip=12pt Mean density estimates for the heteroscedastic simulation of Section \ref{sec.S1.2} for sample size $n=5000$. Solid blue line is the truth
(Truth, a mixture of a t--density with 5 degrees of freedom and a Normal density with standard deviation $0.1$), the dashed red line is our constrained Bayesian
method (Constrained Bayes) and the dash-dotted black line is the deconvoluting kernel density estimator (Kernel).}
\label{fig7}
\end{figure}

\end{document}